\documentclass[journal,onecolumn]{IEEEtran}

\usepackage{amsmath}
\usepackage{amssymb}
\usepackage{amsthm}
\usepackage{mathrsfs}
\usepackage{float}
\usepackage{graphicx}
\usepackage{xcolor}
\usepackage{cite}
\usepackage[colorlinks=true,citecolor=blue,urlcolor=blue,linkcolor=blue]{hyperref}
\usepackage[nameinlink,capitalize]{cleveref}
\crefname{enumi}{}{}
\usepackage[utf8]{inputenc}
\DeclareUnicodeCharacter{200B}{}
\DeclareUnicodeCharacter{FB01}{fi}

\theoremstyle{plain}
\newtheorem{theorem}{Theorem}[section]
\newtheorem{lemma}[theorem]{Lemma}
\newtheorem{proposition}[theorem]{Proposition}
\newtheorem{corollary}[theorem]{Corollary}
\theoremstyle{definition}
\newtheorem{definition}[theorem]{Definition}

\theoremstyle{remark}
\newtheorem{remark}[theorem]{Remark}
\usepackage{comment}

\DeclareMathOperator{\rank}{rank}

\newcommand{\hH}{\mathcal{H}}
\newcommand{\PG}{\operatorname{PG}}
\numberwithin{equation}{section}
\newcommand{\bc}{\mathbf{c}}

\newcommand{\F}{\mathbb{F}}

\newcommand{\C}{\mathcal{C}}
\newcommand{\one}{\mathbf{1}}

\newcommand{\yrev}[1]{\mbox{#1}}

\usepackage[shortlabels, inline]{enumitem}
\setlist[enumerate,1]{label=\textup{(\arabic*)}}
\SetEnumerateShortLabel{a}{\textup{(\alph*)}}
\SetEnumerateShortLabel{A}{\textup{(\Alph*)}}
\SetEnumerateShortLabel{1}{\textup{(\arabic*)}}
\SetEnumerateShortLabel{i}{\textup{(\roman*)}}
\SetEnumerateShortLabel{I}{\textup{(\Roman*)}}

\begin{document}

\title{Minimality of Binary Codes From Weight Layers and Graph Cut Spaces}

\author{Xianhong~Xie, Mengni~Wang and~Tingting~Wu%
\thanks{X. Xie, M. Wang, and T. Wu are with the School of Artificial Intelligence, Anhui Agricultural University, Hefei 230036, China (e-mail:
xianhxie@ahau.edu.cn; wangmn@stu.ahau.edu.cn; wutingting58@163.com).}%
\thanks{X. Xie is also with the State Key Laboratory of Cyberspace Security
Defense, Institute of Information Engineering, Chinese Academy of Sciences,
Beijing 100085, China.}%
\thanks{Corresponding author: Tingting Wu.}}

\markboth{IEEE Journal Article}%
{Xie \MakeLowercase{\textit{et al.}}: Minimality of Binary Codes From Weight Layers and Graph Cut Spaces}

\maketitle

\begin{abstract}
	Minimal binary linear codes play an important role in secret-sharing schemes and in the structural study of linear codes. We study the minimality of three classes of known constructions: codes from a single complete Hamming-weight layer, codes from arbitrary unions of such layers and graph cut-space codes. For the previously studied single-layer codes \(\mathcal{C}_{k,t}\), we give a complete minimality classification: \(\mathcal{C}_{k,t}\) is minimal if and only if \(t\) is even and \(2\le t\le k-2\) or \((k,t)=(3,2)\). For \(t=4\), we determine the minimum and maximum weights, and show that the family strictly violates the binary Ashikhmin--Barg sufficient condition for every \(k\ge16\). We also calculate the minimum distance of the dual code. For arbitrary unions of complete Hamming-weight layers, we obtain an if-and-only-if classification of minimality and uniform criteria for evenness, self-orthogonality, the doubly-even property and hull dimension. We further analyze the family defined by \(D_3\cup D_{k-1}\) for odd \(k\). Finally, using the classical bond characterization of cuts, we translate minimality of graph cut-space codes into a graph-theoretic condition: a nonempty simple graph has a minimal cut code precisely when it has one edge-containing component and that component is complete. 
We also give a graph-function realization of the known binary tetrahedron code and explicitly identify the resulting code equivalence.
\end{abstract}

\begin{IEEEkeywords}
Minimal codes, Hamming-weight layers, graph cut spaces, self-orthogonality,
dual distance.
\end{IEEEkeywords}

\IEEEpeerreviewmaketitle

\section{Introduction}
Let \(\mathbb{F}_q\) be the finite field of order \(q\), and let
\(\mathcal{C}\subseteq\mathbb{F}_q^n\) be a nonzero linear code. For
\(\mathbf{c}=(c_1,\ldots,c_n)\in\mathbb{F}_q^n\), its support and
Hamming weight are defined by
\[
\operatorname{supp}(\mathbf{c})
=\{i\in\{1,\ldots,n\}:c_i\ne0\},
\qquad
\operatorname{wt}(\mathbf{c})
=\#\operatorname{supp}(\mathbf{c}).
\]
\begin{definition}
  A nonzero codeword \(\mathbf{c}\in\mathcal{C}\) is called
	\emph{minimal} if, for every nonzero codeword
	\(\mathbf{c}'\in\mathcal{C}\),
	\[
	\operatorname{supp}(\mathbf{c}')
	\subseteq\operatorname{supp}(\mathbf{c})
	\quad\Longrightarrow\quad
	\mathbf{c}'=\lambda\mathbf{c}
	\]
	for some \(\lambda\in\mathbb{F}_q^{*}\).
	The code \(\mathcal{C}\) is called minimal if every nonzero codeword of
	\(\mathcal{C}\) is minimal.
\end{definition}
Minimal linear codes play an important role in coding theory and
cryptography. Massey \cite{massey1993} showed that the minimal codewords of a linear
code determine the access structure of the corresponding linear
secret-sharing scheme. This connection was subsequently developed for
several classes of linear codes and associated secret-sharing
schemes \cite{Car-din-yu,Yuan-ding}. Consequently, the construction and classification of
minimal codes, particularly those with explicitly determined
parameters and weight distributions, remain important problems.

A widely used sufficient condition for minimality is due to Ashikhmin
and Barg \cite{Ash-abrg}.
\begin{theorem}[Ashikhmin--Barg]\label{thm:ABbound}
Let $\C$ be a $q$-ary linear code, and put
\[
w_{\min}=\min_{\bc\in\C\setminus\{0\}}\operatorname{wt}(\bc),
\qquad
w_{\max}=\max_{\bc\in\C}\operatorname{wt}(\bc).
\]
If $w_{\min}/w_{\max}>(q-1)/q$, then $\mathcal C$ is minimal.
\end{theorem}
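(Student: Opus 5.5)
We argue by contrapositive: assuming $\C$ is not minimal, we produce a codeword whose weight is at most $\frac{q-1}{q}w_{\max}$, hence a nonzero codeword of weight below that bound, so $w_{\min}/w_{\max}\le (q-1)/q$.

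Let $\mathbf{a},\mathbf{b}\in\C$ be nonzero with $\operatorname{supp}(\mathbf{b})\subseteq\operatorname{supp}(\mathbf{a})$ and $\mathbf{b}\ne\lambda\mathbf{a}$ for every $\lambda\in\F_q^*$. Put $S=\operatorname{supp}(\mathbf{a})$, so $\#S=\operatorname{wt}(\mathbf{a})\le w_{\max}$. For $\lambda\in\F_q^*$ consider $\mathbf{c}_\lambda=\mathbf{b}-\lambda\mathbf{a}\in\C$. Each $\mathbf{c}_\lambda$ is nonzero by hypothesis, and $\operatorname{supp}(\mathbf{c}_\lambda)\subseteq S$.

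The key step is a double count of $\sum_{\lambda\in\F_q^*}\operatorname{wt}(\mathbf{c}_\lambda)$. Fix a coordinate $i\in S$, so $a_i\ne0$. Then $b_i-\lambda a_i=0$ for exactly one $\lambda\in\F_q$, namely $\lambda=b_i/a_i$. If $b_i\ne0$, this $\lambda$ lies in $\F_q^*$, and coordinate $i$ contributes to exactly $q-2$ of the weights. If $b_i=0$, the root is $\lambda=0$, and coordinate $i$ contributes to all $q-1$ weights. Writing $\beta=\operatorname{wt}(\mathbf{b})\ge1$, we obtain
\[
\sum_{\lambda\in\F_q^*}\operatorname{wt}(\mathbf{c}_\lambda)=(q-1)\#S-\beta .
\]

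Averaging, some $\lambda$ satisfies $\operatorname{wt}(\mathbf{c}_\lambda)\le \#S-\beta/(q-1)$. Together with $\operatorname{wt}(\mathbf{b})=\beta$, this gives
\[
w_{\min}\le\min\Bigl\{\beta,\ \#S-\tfrac{\beta}{q-1}\Bigr\}.
\]
The right-hand side is largest when $\beta=\#S-\beta/(q-1)$, that is, when $\beta=\frac{q-1}{q}\#S$. Hence
\[
w_{\min}\le\frac{q-1}{q}\#S\le\frac{q-1}{q}w_{\max},
\]
which is the contrapositive we wanted.

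The only delicate point is the case analysis on $b_i$ in the double count, which must also handle $q=2$. When $q=2$, the single codeword $\mathbf{a}+\mathbf{b}$ has weight $\#S-\beta$, and the same minimum bound gives $w_{\min}\le \#S/2$. The rest of the argument is routine.
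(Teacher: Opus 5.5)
Your argument is correct. The count $\sum_{\lambda\in\F_q^*}\operatorname{wt}(\mathbf b-\lambda\mathbf a)=(q-1)\operatorname{wt}(\mathbf a)-\operatorname{wt}(\mathbf b)$ holds, including for $q=2$, and the averaging step together with $\operatorname{wt}(\mathbf b)\ge w_{\min}$ yields $w_{\min}\le\frac{q-1}{q}\operatorname{wt}(\mathbf a)\le\frac{q-1}{q}w_{\max}$. The paper does not prove this result itself but quotes it from Ashikhmin and Barg; your argument is the standard averaging proof of that bound.
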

Except when general results are recalled, all codes considered in this
paper are binary. In this case, the Ashikhmin--Barg sufficient condition
becomes
\[
\frac{w_{\min}}{w_{\max}}>\frac12.
\]
This condition is sufficient but not necessary. Accordingly, the
construction of binary minimal codes satisfying
\[
\frac{w_{\min}}{w_{\max}}\le\frac12
\]
has become an active research direction. Throughout this paper, a
nonzero binary code is said to violate the Ashikhmin--Barg sufficient
condition when \(w_{\min}/w_{\max}\le1/2\). When the ratio is strictly
less than \(1/2\), we call the violation strict. The
literature most closely related to the present work can be organized
into three broad directions.

\emph{Boolean- and vectorial-function constructions.}
	Ding et al.\ \cite{Ding-heng-zhou} established a support-inclusion criterion for binary
	minimal codes, and Heng et al.\ \cite{heng-ding-zhou} extended the corresponding approach
	to codes over general finite fields. Subsequent constructions used
	characteristic and root Boolean functions \cite{Pasalic-Rodrguez,Mesnager-Qi,Li-yue,Xu-Lu-Cao,Tang-Carlet,Xie-Ouyang}, bent concatenations
	and derivatives of Maiorana--McFarland functions \cite{zhang-pasalic}, weakly regular
	plateaued functions \cite{Mesnager-Sinak}, vectorial Boolean functions \cite{Li-peng},
	two-to-one functions \cite{Mesanager-Qian}, and direct sums of functions together with
	subspaces of derivatives \cite{Rodriguez-Aldama}. In particular, some recent
	function-based constructions yield few-weight binary codes that are
	simultaneously minimal, self-orthogonal, and outside the
	Ashikhmin--Barg sufficient condition \cite{Li-chen-jin,Shi,jin-li-Qu}. These results demonstrate
	the flexibility of cryptographic functions in minimal-code
	construction, although their proofs generally depend on the Walsh
	spectra or other algebraic properties of the functions involved.

\emph{Geometric and cutting-blocking-set constructions.}
Bonini and Borello introduced blocking-set methods into the study of
minimal linear codes~\cite{Bartoli-Bonini,Bonini-Borello}. Alfarano et al.\ \cite{Alfarano-Borello} subsequently established a geometric characterization of minimal
codes and investigated their asymptotic performance. Related combinatorial interpretations were
developed by Alfarano et al.\ \cite{Alfarano-Borello-eri}. Tang et al.\ \cite{Tang-Qiu} proved the
full characterization that a projective linear code is minimal if
and only if its projective system is a cutting blocking set.

Further constructions and developments include strong blocking sets
~\cite{Heger-Nagy}, partial difference sets~\cite{ran-feng}, explicit
constructions of optimal blocking sets~\cite{Bishnoi-Tomon}, and
minimal codes obtained from arbitrary projective linear codes
~\cite{Chen-Chen}. These results provide a broad geometric and
combinatorial framework for minimal codes, while explicit code
parameters depend on the structure of the underlying projective
system.

\emph{Simplicial-complex and defining-set constructions.}
Several related constructions must be distinguished. Chang and
Hyun~\cite{Chang-Hyun} associated to the characteristic function
of the nonzero part of a simplicial complex a Boolean-function code
whose coordinates run through all nonzero vectors, the additional
Boolean-function coefficient makes that construction different from
the defining-set code \(\mathcal C_D\) used here. Hyun et al.\ \cite{Hyun-Lee-lee} used the same generic defining-set construction
as ours, but took the complement \(\Delta^c\) of a simplicial complex
as the defining set. Wu and Lee~\cite{Wu-lee} considered differences
\(\Delta_1\setminus\Delta_2\) of simplicial complexes and explicitly
observed that complete Hamming-weight layers are contained in this
framework.

More precisely, put
\[
\Delta_{\le r}=\{x\in\mathbb F_2^k:\operatorname{wt}(x)\le r\}.
\]
Then
\[
D_t=\Delta_{\le t}\setminus\Delta_{\le t-1},
\qquad
\bigcup_{j=1}^{r}D_j=\Delta_{\le r}\setminus\{0\}.
\]
Thus neither the generic evaluation mechanism nor the single-layer and
initial-layer defining sets are new.  The distinction relevant to this
paper is that we treat every permutation-invariant rank-selected set
\(D_S=\bigcup_{t\in S}D_t\), including nonconsecutive \(S\), and seek
uniform if-and-only-if minimality results rather than a new
simplicial-complex construction.

The complete single-layer construction itself is not new. Zhou et al.\ \cite{ZhouLiTangDing2019} determined its dimension,
Gram matrix, hull behavior, LCD and self-orthogonality properties, and
gave Krawtchouk-polynomial weight formulas. Zhang at al.\ \cite{ZhangYanWei2019} treated the weight-two member as a minimal
code. More generally, incidence codes of graphs and hypergraphs have
been studied in~\cite{Maji-Mesnager,DankelmannKeyRodrigues2013,MallikYildiz2024}, and
binary graph incidence codes are the classical cut spaces whose duals
are cycle spaces. Accordingly, the contribution below is formulated
in terms of minimality classifications, exact thresholds, and a
refinement of a previously stated dual-distance condition, rather than
as the introduction of new defining-set or cut-space constructions.

The three directions above provide complementary approaches to the
study of minimal codes. Function-based methods typically derive
minimality from spectral or algebraic properties, geometric methods
describe minimality through projective point sets, and
simplicial-complex constructions exploit the combinatorial structure
of defining sets. Some recent function-based families simultaneously
possess minimality, self-orthogonality, few-weight properties, and
violation of the Ashikhmin--Barg sufficient condition
~\cite{Li-chen-jin,jin-li-Qu}. This suggests studying whether several
code invariants can be determined uniformly within a defining-set
family having a particularly simple combinatorial structure.
Accordingly, we address the following question: can one construct and
classify minimal codes while simultaneously determining their weight
distributions, dual distances, evenness, self-orthogonality, double
evenness, and hull dimensions by means of elementary combinatorial
arguments?

To address this question, we work within a common binary
evaluation-code framework. For a nonempty defining set
\(D\subseteq\mathbb{F}_2^k\setminus\{\mathbf{0}\}\), let
\[
\mathcal{C}_D
=
\left\{
\mathbf{c}_{a}
=
\langle a,x\rangle_{x\in D}:
a\in\mathbb{F}_2^k
\right\}.
\]
We investigate the following three classes of structured defining
sets.

\begin{itemize}
	\item The single Hamming-weight layer
	\[
	D_t
	=
	\{x\in\mathbb{F}_2^k:
	\operatorname{wt}(x)=t\}.
	\]
	
	\item The union of complete Hamming-weight layers
	\[
	D_S
	=
	\bigcup_{t\in S}D_t,
	\qquad
	\varnothing\ne S\subseteq\{1,2,\ldots,k-1\}.
	\]
	
	\item For a simple graph $G=(V(G),E(G))$, the associated weight-two defining set is
	\[
	D_G
	=
	\{\mathbf{e}_u+\mathbf{e}_v:uv\in E(G)\}.
	\]
\end{itemize}

These are three instances of the same evaluation construction rather
than three unrelated coding methods. Their combinatorial symmetry
reduces the relevant calculations to Krawtchouk-type sums, parity of
binomial coefficients, symmetric differences of subsets, binary
incidence matrices, and hyperplane-spanning conditions. The main
contributions are as follows.
\begin{enumerate}[1]
  \item For the previously studied single-layer codes
\(\mathcal{C}_{k,t}=\mathcal{C}_{D_t}\), we recall the dimension and
weight enumerator needed below and prove the complete minimality classification
\[
\mathcal{C}_{k,t}\text{ is minimal}
\Longleftrightarrow
t\text{ is even and }2\le t\le k-2,\
	\text{or }(k,t)=(3,2).
\]
For \(t=4\), we determine the minimum and maximum nonzero weights and prove that
\(\mathcal{C}_{k,4}\) is minimal and violates the Ashikhmin--Barg sufficient condition
for every $k\ge16$.
We also give the precise dual-distance condition, including the
feasibility requirement \(k\ge3t/2\) omitted from the corresponding
even-\(t\) assertion in~\cite{ZhouLiTangDing2019},
and derive explicit binomial-parity criteria for evenness,
self-orthogonality, and the doubly-even property.

\item For the multi-layer codes
\(\mathcal{C}_{k,S}=\mathcal{C}_{D_S}\),
we determine the dimension and weight enumerator for an arbitrary
nonempty layer-index set \(S\), and then classify minimality for all
such \(S\).
For \(k\ge4\), we prove that \(\mathcal{C}_{k,S}\) is minimal if and
only if either \(S\) contains an even integer \(t\) with
\(2\le t\le k-2\), or \(k\) is odd,
\[
k-1\in S
\qquad\text{and}\qquad
S\cap\{3,5,\ldots,k-2\}\ne\varnothing.
\]
We also obtain uniform criteria for evenness, self-orthogonality, and
the doubly-even property, together with an explicit formula for the hull
dimension. In particular, for
\[
D_S=D_3\cup D_{k-1},
\qquad k\ge5\text{ odd},
\]
the resulting code is minimal, and its minimum distance, weight
enumerator, dual distance, number of weight-three dual codewords, and
hull dimension are determined explicitly. This family strictly
violates the binary Ashikhmin--Barg sufficient condition for every odd
\(k\ge7\).

\item For the classical cut code \(\mathcal{C}_G\) of a simple graph
\(G\) with at least one edge, the bond characterization of minimal
nonempty cuts yields the following coding-theoretic classification:
\[
\mathcal{C}_G\text{ is minimal}
\quad\Longleftrightarrow\quad
G\cong K_r\sqcup(v_G-r)K_1
\quad\text{for some }r\ge2,
\]
where \(v_G=\#V(G)\). Thus \(G\) has exactly one edge-containing
connected component, and that component is complete. For this
surviving class, we record the dimension, exact weight enumerator,
minimum and maximum nonzero weights, dual distance, evenness, and hull
dimension, while identifying which of these facts are classical. The binary Ashikhmin--Barg sufficient condition is violated
exactly when \(r\ge7\): equality
\(w_{\min}/w_{\max}=1/2\) occurs for \(r=7\), while the violation is
strict for \(r\ge8\). 
Finally, we identify the complete-graph and weight-layer descriptions
of the binary tetrahedron code and give an explicit graph-function
realization obtained from a linear section. This last result is a new
representation of a known code, and no new code-equivalence class is
claimed.
\end{enumerate}

The remainder of the paper is organized as follows. Section~\ref{sec:preliminaries}
introduces the evaluation-code framework and recalls the tools used for
support containment, cutting blocking sets, binomial parity, dual
distance, and hull dimension. Section~\ref{sec:a-sin-weight} studies codes defined by a
single Hamming-weight layer. Section~\ref{sec:union-weight-layer} treats unions of complete
Hamming-weight layers. Section~\ref{sec:graph-cut-space} first
classifies minimal graph cut codes and determines their principal code
parameters, and then relates the complete-graph construction to known
defining-set codes and gives a graph-function realization.
Section~\ref{sec:conclusion}
summarizes the results and presents directions for further study.

\section{Preliminaries}
\label{sec:preliminaries}
Throughout this section, let $V$ be a finite-dimensional $\F_2$-vector space equipped with a nondegenerate bilinear form $\langle\cdot,\cdot\rangle$. Whenever a coordinate realization $V=\F_2^m$ is fixed, $\operatorname{wt}(v)$ denotes the Hamming weight of $v$.

For a finite set $X$, $\#X$ denotes its cardinality, and $\langle X\rangle$ denotes its linear span over $\F_2$.

For $S\subseteq V$, the characteristic function $1_S\in\F_2^{\#V}$ is defined by
\[1_S(a)=\begin{cases}
           1, & \mbox{if } a\in S \\
           0, & \mbox{otherwise}.
         \end{cases}\]

For any positive integer $k$, $[k]:=\{1,2,\ldots,k\}$.

For a matrix $A$, $A^T$ denotes its transpose. In particular, the transpose of a row vector is a
column vector.
\subsection{Defining-set codes}

Let
$D\subseteq V\setminus\{0\}$
be a nonempty defining set. The linear code associated with $D$ is
\[\C_D=\bigl\{\bc_a=(\langle a,x\rangle)_{x\in D}:a\in V\bigr\}\subseteq \mathbb F_2^{\#D}.\]
The evaluation map
\[\phi_D:V\longrightarrow \mathbb F_2^{\#D},\qquad a\mapsto (\langle a,x\rangle)_{x\in D}\]
has kernel
\[\ker \phi_D= \langle D\rangle^\perp=\{a\in V:\langle a,x\rangle=0,\ \forall x\in D\},\]
so that
\[\dim \mathcal C_D=\dim V-\dim(\langle D\rangle^\perp).\]
This framework contains the three constructions studied in this paper as special cases.

\begin{enumerate}[1]
	\item {\bf Single Hamming--weight layer.}
	Take $V=\F_2^k$ with the standard inner product and
	\[D=D_t=\{x\in\mathbb F_2^k:\operatorname{wt}(x)=t\}.\]
	Then
	$\C_{k,t}:=\C_{D_t}$ is the weight-layer code of Section~\ref{sec:a-sin-weight}.
	\item {\bf Multi--layer unions.}
	For a nonempty set $S\subseteq\{1,\dots,k-1\}$, let
	\[D_S=\bigcup_{t\in S}D_t.\] Then $\C_{k,S}:=\C_{D_S}$ is the multi-layer code of Section~\ref{sec:union-weight-layer}.
	\item {\bf Graph cut space.}
	Let $G=(V(G),E(G))$ be a simple graph and take
	\[V=\mathbb F_2^{\#V(G)},\qquad D=\{e_u+e_v:uv\in E(G)\}.\]
	where $\{e_i\}_{i=1}^{\#V(G)}$ is standard basis for $\F_2^{\#V(G)}$, i.e.,
	$\langle a,e_u+e_v\rangle=a_u+a_v$ for $a\in\mathbb F_2^{\#V(G)}$.
	Hence $\mathcal C_G:=\mathcal C_D$ is exactly the cut space of $G$, see Section~\ref{sec:graph-cut-space}.
\end{enumerate}
Thus the paper is not about three unrelated methods, but about
	three instances of the same evaluation construction:
minimal codes from structured defining sets $D\subseteq V$.

\subsection{Self-orthogonality and hull dimension}
Let $\mathcal{C} \subseteq \mathbb{F}_2^n$ be a binary linear code with an $m\times n$ generator matrix $G$.
Recall that $\C$ is \emph{self-orthogonal} if $\C\subseteq \C^{\perp}$, equivalently, if $GG^T=0$ over $\F_2$. In particular, every codeword of a self-orthogonal binary code has even weight.

Define the linear map
\[
\phi:\F_2^m\longrightarrow\C,\qquad a\longmapsto aG.
\]
Its kernel is $\ker\phi=\{a \in \mathbb{F}_2^m : aG = 0\}$, and $\dim(\ker\phi) = m - \dim\mathcal{C}$. The \emph{hull} of $\mathcal{C}$ is defined as
\[\hH(\mathcal{C}) = \mathcal{C} \cap \mathcal{C}^\perp.\]
A standard dimension formula relates the hull to the row Gram matrix $GG^{T} \in \mathbb{F}_2^{m \times m}$
\[
\dim\hH(\C)=\dim\C-\rank(GG^T)
=m-\dim(\ker\phi)-\rank(GG^T)
.
\]
In particular, when $G$ has full row rank ($m = \dim\mathcal{C}$), $\ker\phi = \{0\}$ and thus
\begin{equation}\label{eq:hull-fullrank}
\dim \hH(\mathcal{C}) = m - \rank(GG^{T}) = \dim\bigl(\ker(GG^{T})\bigr).
\end{equation}
Throughout this paper, the Gram matrix is the row Gram matrix $GG^T$,
rather than the matrix of pairwise inner products of the defining columns.
\smallskip

For the defining-set code $\C_D$, its self-orthogonality is as follows.
\begin{lemma}\label{lem:even-selforth-nat} Let \(D\subseteq\F_2^k\setminus\{0\}\) be nonempty, and use
the standard inner product in the definition of $\C_D$.
			Then
			\[
			\C_D\text{ is even}
			\iff\sum_{x\in D}x=0,
			\qquad
			\C_D\text{ is self-orthogonal}
			\iff\sum_{x\in D}xx^T=0.
			\]
\end{lemma}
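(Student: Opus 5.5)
The plan is to work entirely with the generator matrix whose columns are the elements of \(D\). Let \(H\) be the \(k\times\#D\) matrix with columns \(x\in D\). Then every codeword has the form \(\bc_a=aH\) for some \(a\in\F_2^k\). The matrix \(H\) need not have full row rank, but this causes no trouble, since both properties can be read off from \(H\) directly.

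\emph{Evenness.} We have \(\operatorname{wt}(\bc_a)\equiv\sum_{x\in D}\langle a,x\rangle=\langle a,\sum_{x\in D}x\rangle \pmod 2\). So every codeword of \(\C_D\) has even weight if and only if \(\langle a,s\rangle=0\) for all \(a\in\F_2^k\), where \(s=\sum_{x\in D}x\). Since the standard inner product is nondegenerate, this holds if and only if \(s=0\).

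\emph{Self-orthogonality.} \(\C_D\) is spanned by the rows of \(H\), so \(\C_D\subseteq\C_D^\perp\) if and only if every pair of rows is orthogonal, that is, \(HH^T=0\). This is the criterion \(GG^T=0\) recalled above, which does not require full row rank, since it only asks that the spanning rows be pairwise orthogonal. Expanding the matrix product column by column gives \(HH^T=\sum_{x\in D}xx^T\), where \(x\) is viewed as a column vector. Hence self-orthogonality is equivalent to \(\sum_{x\in D}xx^T=0\). As a consistency check, the diagonal of \(\sum_x xx^T\) is \(\sum_x x\), because \(x_i^2=x_i\) over \(\F_2\). So self-orthogonality forces evenness, as it must.

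I expect no real obstacle. The only point needing care is to justify \(HH^T=0\) as the criterion for a spanning matrix that may not be of full row rank, and that follows from bilinearity of the inner product.
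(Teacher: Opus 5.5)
Your proof is correct and follows the paper's argument essentially verbatim: the weight-parity identity \(\operatorname{wt}(\bc_a)\equiv\langle a,\sum_{x\in D}x\rangle \pmod 2\) gives evenness, and the natural generator matrix with \(HH^T=\sum_{x\in D}xx^T\) gives self-orthogonality. Your two added remarks are sound and make explicit points the paper leaves implicit: that \(HH^T=0\) characterizes self-orthogonality for any spanning matrix, not only a full-row-rank one, and the diagonal consistency check.
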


\begin{proof}
		For \(a\in\mathbb F_2^k\), write
		\[
		\bc_a=(\langle a, x\rangle)_{x\in D}\in\mathcal C_D.
		\]
		Its Hamming weight modulo 2 satisfies
		\[
		\operatorname{wt}(\bc_a)
		\equiv
		\sum_{x\in D}\langle a,x\rangle
		=
		\left\langle a,\sum_{x\in D}x\right\rangle
		\pmod2.
		\]
		Thus every codeword of \(\mathcal C_D\) has even weight if and only if
		\(\sum_{x\in D}x=0\).
	
		Considering the natural generator matrix \(G_{\mathrm{nat}}=(x)_{x\in D}\), whose columns
		are indexed by the elements of \(D\), we have
		\[
		G_{\mathrm{nat}}G_{\mathrm{nat}}^T=\sum_{x\in D}xx^T.
		\]
		Hence \(\mathcal C_D\) is self-orthogonal if and only if
		\(G_{\mathrm{nat}}G_{\mathrm{nat}}^T=0\), equivalently,
		\[
		\sum_{x\in D}xx^T=0.\qedhere
		\]
	\end{proof}
\begin{remark}
 Suppose $\C_D$ has length $\#D$ and dimension $m$. Two distinct notions of generator matrices arise in this context:

\begin{enumerate}
    \item \textbf{Natural generator matrix.} ${G_{\mathrm{nat}}} = (x)_{x \in D}$: a $k \times \#D$ matrix whose columns are indexed by the elements of $D$. Its row space is exactly $\mathcal{C}_D$.

    \item \textbf{General generator matrix.} $G$: an $m \times \#D$ matrix whose rows form a basis of $\mathcal{C}_D$. Such a matrix need not reflect the set $D$ in its column indexing.
\end{enumerate}

Although both matrices generate the same code, they differ in dimension and geometric interpretation:
\begin{itemize}
    \item ${G_{\mathrm{nat}}}$ may not have full row rank, whereas the rows of $G$ form a basis and hence are linearly independent.
    \item The row Gram matrix of ${G_{\mathrm{nat}}}$ has size $k \times k$ and captures the pairwise outer products of elements in $D$, the row Gram matrix of $G$ has size $m \times m$ and is used for computing $\dim \hH(\mathcal{C}_D)$.
    \item The hull dimension formula applies consistently to both:
    \[
    \begin{aligned}
    \dim \hH(\mathcal{C}_D)
    =\dim \mathcal{C}_D-\rank(GG^{T})=\dim \mathcal{C}_D-\rank(G_{\mathrm{nat}}G_{\mathrm{nat}}^{T}).
    \end{aligned}
    \]
    Indeed, if $G_{\mathrm{nat}}=AG$ with $G$ a full-row-rank
    generator matrix, then $A$ has full column rank and
    $G_{\mathrm{nat}}G_{\mathrm{nat}}^T=AGG^TA^T$, hence the two Gram
    matrices have the same rank.
\end{itemize}
This distinction is crucial when transitioning from the abstract hull theory to the concrete analysis of cut codes, where $D$ corresponds to the set of edge indicator vectors.
\end{remark}

\subsection{Projective systems and cutting blocking sets}\label{subsec:PSCBS}
Let $q$ be a prime power.
Choose a full-row-rank generator matrix $G\in\F_q^{k\times n}$ for an $[n,k]$ linear code $\C$, assume first that it has no zero columns, and denote its columns by $p_1,\ldots,p_n$. The code is then
\[\mathcal C=\{(\langle x,p_1\rangle,\dots,\langle x,p_n\rangle):x\in\mathbb F_q^k\}.\]
	The corresponding projective points are recorded with multiplicities when proportional columns occur. After proportional columns are identified, the distinct points form a set $P\subseteq\PG(k-1,\F_q)$, and the cutting property depends only on $P$. If the columns are pairwise nonproportional, $P$ is the projective system of the projective code $\C$.

\begin{definition} A set $X\subseteq \PG(k-1,\F_q)$ is a cutting blocking set if for every hyperplane $H\subseteq \PG(k-1,\F_q)$,
\[\langle X\cap H\rangle_{\F_q}=H,\]
i.e., the points of $X$ lying in $H$ span $H$.\end{definition}

The following fundamental equivalence connects coding theory and finite geometry.

\begin{theorem}[Tang--Qiu--Liao--Zhou~\cite{Tang-Qiu}]\label{thm:tang-qiu-zhou} A projective linear code $\mathcal C$ is minimal if and only if its projective system is a cutting blocking set.\end{theorem}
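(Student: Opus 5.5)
The plan is to prove both directions by comparing supports of codewords with hyperplane sections of the projective system. Fix a full-row-rank generator matrix \(G\in\F_q^{k\times n}\) whose columns \(p_1,\dots,p_n\) are pairwise nonproportional and nonzero. Let \(P=\{[p_1],\dots,[p_n]\}\subseteq\PG(k-1,\F_q)\). Every nonzero codeword is \(\bc_x=(\langle x,p_i\rangle)_i\) for a unique \(x\ne0\), since \(G\) has full row rank. Its zero set \(\{i:\langle x,p_i\rangle=0\}\) consists exactly of the points of \(P\) lying on the hyperplane \(H_x=x^\perp\), and every hyperplane arises as some \(H_x\). Moreover, \(\bc_x\) and \(\bc_y\) are proportional if and only if \(x\) and \(y\) are proportional, if and only if \(H_x=H_y\).

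\emph{Reformulating minimality.} Support inclusion \(\operatorname{supp}(\bc_y)\subseteq\operatorname{supp}(\bc_x)\) is equivalent to \(P\cap H_x\subseteq P\cap H_y\), i.e. to \(P\cap H_x\subseteq H_y\). So the codeword \(\bc_x\) is minimal if and only if the only hyperplane containing \(P\cap H_x\) is \(H_x\) itself.

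\emph{Cutting blocking set implies minimal.} Suppose \(P\) is a cutting blocking set. Then \(\langle P\cap H_x\rangle=H_x\), so any hyperplane \(H_y\) containing \(P\cap H_x\) contains \(H_x\). Since both are hyperplanes, \(H_y=H_x\), hence \(y=\lambda x\) and \(\bc_y=\lambda\bc_x\). So every codeword is minimal.

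\emph{Minimal implies cutting blocking set.} This is the contrapositive. Suppose some hyperplane \(H_x\) has \(W=\langle P\cap H_x\rangle\subsetneq H_x\). Then \(\dim W\le k-2\) as a vector subspace, so at least two distinct hyperplanes contain \(W\), namely \(H_x\) and some \(H_y\ne H_x\). The hyperplanes through \(W\) correspond to the nonzero vectors of \(W^\perp\), which has dimension at least \(2\), so there is a \(y\in W^\perp\) not proportional to \(x\). Then \(P\cap H_x\subseteq H_y\), giving \(\operatorname{supp}(\bc_y)\subseteq\operatorname{supp}(\bc_x)\) with \(\bc_y\notin\F_q^*\bc_x\). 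Also \(\bc_y\ne0\), because \(P\) spans \(\F_q^k\) by full rank. This contradicts minimality.

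\emph{Main obstacle.} The step needing the most care is the dictionary itself, for two reasons. Injectivity of \(x\mapsto\bc_x\) relies on full row rank, and nonproportional columns make projective points correspond bijectively to coordinates. Repeated proportional columns would not affect supports, consistent with the remark that the cutting property depends only on \(P\). The remaining content is linear algebra: a proper subspace of a hyperplane lies in a second hyperplane.
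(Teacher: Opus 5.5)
Your proof is correct. There is no proof in the paper to compare it with: the theorem is quoted from \cite{Tang-Qiu} and used as a black box. The same equivalence is restated in Section~\ref{sec:graph-reinterpretation}, citing \cite{Bonini-Borello,Alfarano-Borello,Tang-Qiu}, together with the intermediate condition that $P\cap H\subseteq P\cap H'$ forces $H=H'$.

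What you give is the standard self-contained argument, in two steps:
\begin{enumerate}
\item You identify the zero set of $\bc_x$ with $P\cap x^\perp$. This turns minimality of $\bc_x$ into the statement that $H_x$ is the only hyperplane containing $P\cap H_x$, which is exactly the third condition listed in Section~\ref{sec:graph-reinterpretation}.
\item You use the linear-algebra fact that a subspace $W\subsetneq H_x$ has $\dim W^\perp\ge 2$, so it lies in a second hyperplane.
\end{enumerate}
So your argument also proves the three-way equivalence that the paper only cites.

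In the converse, the point that makes the second hyperplane work is that $x\in W^\perp$ because $W\subseteq x^\perp$. Hence $\dim W^\perp\ge2$ yields a nonzero $y\in W^\perp$ not proportional to $x$. You handle this correctly through the correspondence between hyperplanes through $W$ and nonzero vectors of $W^\perp$.

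One small precision on your closing remark. Proportional columns do change supports, since supports live in the length-$n$ ambient space. What they leave unchanged is every support \emph{inclusion}, because proportional columns have identical zero patterns. That is the fact justifying the paper's remark in Section~\ref{subsec:PSCBS} that the cutting property depends only on the point set $P$.
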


\subsection{Lucas's and Kummer's theorems}

Many minimality and self-orthogonality results in this paper reduce to the parity or the 2-adic valuation of binomial coefficients. The principal tools are Lucas's \cite{lucas} and Kummer's theorems.
\begin{theorem}[Lucas] Let $p$ be a prime and $m,n\ge 0$ be integers with base-$p$ expansions $m=\sum m_ip^i$ and $n=\sum n_ip^i$. Then
\[{\binom{m}{n}\equiv\prod_i\binom{m_i}{n_i}\pmod p.}\]
\end{theorem}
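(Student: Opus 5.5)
The statement to prove is Lucas's theorem: for a prime $p$ and base-$p$ expansions $m=\sum m_ip^i$, $n=\sum n_ip^i$, we have $\binom{m}{n}\equiv\prod_i\binom{m_i}{n_i}\pmod p$. I would prove it with generating polynomials over $\F_p$. The starting point is the Frobenius identity $(1+X)^p=1+X^p$ in $\F_p[X]$. It holds because $p\mid\binom{p}{j}$ for $0<j<p$: in $\binom{p}{j}=p!/(j!(p-j)!)$ the prime $p$ divides the numerator but not the denominator. Iterating gives $(1+X)^{p^i}=1+X^{p^i}$ for all $i\ge0$.

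Next I would expand $(1+X)^m$ using the base-$p$ digits of $m$. In $\F_p[X]$,
\[
(1+X)^m=\prod_i\bigl((1+X)^{p^i}\bigr)^{m_i}=\prod_i\bigl(1+X^{p^i}\bigr)^{m_i}=\prod_i\sum_{j_i=0}^{m_i}\binom{m_i}{j_i}X^{j_ip^i}.
\]
Expanding the product, each choice of indices $(j_i)$ with $0\le j_i\le m_i\le p-1$ contributes the monomial $X^{\sum_i j_ip^i}$. By uniqueness of base-$p$ expansions, distinct tuples give distinct exponents. So the coefficient of $X^n$ on the right is $\prod_i\binom{m_i}{n_i}$, where the $n_i$ are the base-$p$ digits of $n$.

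This product is automatically $0$ when some $n_i>m_i$; in that case no tuple has exponent $n$, which matches $\binom{m_i}{n_i}=0$. The coefficient of $X^n$ on the left is $\binom{m}{n}\bmod p$. Comparing coefficients gives the congruence. The edge case $n>m$ is consistent: then some digit satisfies $n_i>m_i$, so both sides vanish.

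The main point needing care is the uniqueness-of-digits step. It is what guarantees that no two monomials in the expanded product collide, and it relies on each $j_i\le p-1$. Apart from that step, the argument is routine.
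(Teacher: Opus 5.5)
Your proof is correct and complete. It is the standard Frobenius/generating-function argument, and you handle the two points that need care. First, uniqueness of base-$p$ digits, together with every $j_i\le p-1$, forces $j_i=n_i$ in the coefficient of $X^n$. Second, the coefficient and the product both vanish when some $n_i>m_i$, which includes the case $n>m$.

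The paper gives no proof of its own for this statement. It quotes Lucas's theorem as a classical tool and uses only its $p=2$ consequence. Your coefficient comparison delivers that consequence directly: $\binom{m}{n}$ is odd iff each binary digit of $n$ is at most the corresponding digit of $m$.

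One further line turns this into the form the paper actually applies in \eqref{eq:lucas}, namely $\binom{n}{r}\equiv1\pmod2\iff r\,\&\,(n-r)=0$. For $0\le r\le n$, $r$ is bitwise dominated by $n$ exactly when $r$ and $n-r$ add in binary without carries, i.e.\ when they have disjoint binary supports.
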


In particular, for $p=2$, $\binom{m}{n}$ is odd if and only if $n_i\le m_i$ for all $i$, i.e., the binary representation of $n$ is bitwise dominated by that of $m$.

For self-orthogonality we need the stronger 2-adic valuation of binomial coefficients, given by Kummer's theorem \cite{Kummer}.

\begin{theorem}[Kummer] Let $p$ be a prime and $0\le n\le m$. The exponent of the highest power of $p$ dividing $\binom{m}{n}$ equals the number of carries when $n$ and $m-n$ are added in base $p$.
\end{theorem}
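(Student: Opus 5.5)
We would prove Kummer's theorem by way of Legendre's formula for the $p$-adic valuation of a factorial, written in terms of base-$p$ digit sums. Write $v_p$ for the $p$-adic valuation and $s_p(x)$ for the sum of the base-$p$ digits of a nonnegative integer $x$.

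The first step is Legendre's identity
\[
v_p(x!)=\sum_{j\ge1}\left\lfloor \frac{x}{p^j}\right\rfloor=\frac{x-s_p(x)}{p-1}.
\]
The first equality comes from double counting. Among $1,\dots,x$ there are exactly $\lfloor x/p^j\rfloor$ multiples of $p^j$, and an integer $y$ with $v_p(y)=e$ is counted once for each $j=1,\dots,e$. For the second equality, write $x=\sum_i x_ip^i$. Then $\lfloor x/p^j\rfloor=\sum_{i\ge j}x_ip^{i-j}$, and summing over $j\ge1$ gives
\[
\sum_i x_i\bigl(1+p+\dots+p^{i-1}\bigr)=\sum_i x_i\,\frac{p^i-1}{p-1}=\frac{x-s_p(x)}{p-1}.
\]
Applying this to $\binom{m}{n}=m!/\bigl(n!\,(m-n)!\bigr)$, the linear terms cancel because $n+(m-n)=m$, leaving
\[
v_p\binom{m}{n}=\frac{s_p(n)+s_p(m-n)-s_p(m)}{p-1}.
\]

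The second step is to show that the numerator equals $(p-1)$ times the number of carries in the base-$p$ addition of $a=n$ and $b=m-n$. We would run the schoolbook algorithm digit by digit. Let $c_{-1}=0$, and at position $i$ form $a_i+b_i+c_{i-1}=m_i+p\,c_i$ with $c_i\in\{0,1\}$, since $a_i+b_i+c_{i-1}\le 2p-1$. Summing over all positions gives
\[
s_p(a)+s_p(b)+\sum_i c_{i-1}=s_p(m)+p\sum_i c_i .
\]
The final carry is zero because $a+b=m$ has no further digits, so $\sum_i c_{i-1}=\sum_i c_i=C$, the total number of carries. Hence $s_p(a)+s_p(b)-s_p(m)=(p-1)C$, and combining with the first step gives $v_p\binom{m}{n}=C$.

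We expect the main obstacle to be bookkeeping rather than anything conceptual. The points needing care are the telescoping of the incoming carries against the outgoing ones (the index shift, with $c_{-1}=0$ and the top carry equal to $0$) and the fact that each carry is at most $1$. With those settled, the theorem follows directly from Legendre's formula.
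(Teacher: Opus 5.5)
The paper gives no proof of this statement. Kummer's theorem is quoted as a classical tool with a citation and is only applied with $p=2$, for parity and $2$-adic valuations of binomial coefficients such as $\binom{k-1}{t-1}$. So there is no in-paper argument to compare with. Your proof is the standard one via Legendre's formula, and it is correct and complete.

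The two bookkeeping points you flag are exactly the ones that need stating, and both hold:
\begin{itemize}
\item The induction $c_{i-1}\le1\Rightarrow a_i+b_i+c_{i-1}\le 2p-1\Rightarrow c_i\le1$ is what lets you identify the number of carries with $\sum_i c_i$.
\item At any position above all digits of $n$ and $m-n$, the incoming carry is at most $1<p$, so the outgoing carry is $0$. This makes the shifted sum $\sum_i c_{i-1}$ equal to $\sum_i c_i$.
\end{itemize}

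If you want to bypass digit sums entirely, you can compare the Legendre sums term by term. For each $j\ge1$, the quantity $\lfloor m/p^j\rfloor-\lfloor n/p^j\rfloor-\lfloor (m-n)/p^j\rfloor$ lies in $\{0,1\}$. It equals $1$ exactly when $(n\bmod p^j)+((m-n)\bmod p^j)\ge p^j$, that is, exactly when a carry leaves position $j-1$. Summing over $j\ge1$ then gives the theorem directly.
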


For $p=2$, this means that $\binom{m}{n}$ is even if and only if there is at least one carry in the binary addition $n+(m-n)=m$.
\subsection{Symmetric difference and support inclusion}
For vectors $u,v\in\mathbb F_2^k$, vector addition corresponds to symmetric difference of supports:
\[\operatorname{supp}(u)\triangle\operatorname{supp}(v)=(\operatorname{supp}(u)\cup\operatorname{supp}(v))\setminus(\operatorname{supp}(u)\cap\operatorname{supp}(v)).\]
In particular, $\operatorname{supp}(u+v)=\operatorname{supp}(u)\triangle\operatorname{supp}(v)$.

For defining-set codes, support containment is characterized coordinatewise:
	\(\operatorname{supp}(\bc_a)\subseteq\operatorname{supp}(\bc_b)\) if and only
	if there is no \(x\in D\) with \(a\cdot x=1\) and \(b\cdot x=0\).
	This observation will be used in the following lemma.

\begin{lemma}\label{lem:defining-set-support} For a defining-set code $\C_D$, a codeword $\bc=(c_v)_{v\in D}\in\C_D$ corresponds to $x\in\mathbb F_2^k$ via $c_v=\langle x,v\rangle$. Another codeword $\bc'$ (from $x'$) satisfies $\operatorname{supp}(\bc')\subseteq\operatorname{supp}(\bc)$ if and only if for every $v\in D$,  \[\langle x',v\rangle=1\implies \langle x,v\rangle=1.\]\end{lemma}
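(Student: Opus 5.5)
The statement follows directly from the definitions: the only thing to check is what support inclusion means coordinate by coordinate.

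First I will describe the supports explicitly. For \(\bc=\bc_x\) and \(\bc'=\bc_{x'}\) in \(\C_D\), the coordinates are indexed by \(v\in D\). Since we are over \(\F_2\), a coordinate is nonzero exactly when it equals \(1\). Hence
\[
\operatorname{supp}(\bc)=\{v\in D:\langle x,v\rangle=1\},
\qquad
\operatorname{supp}(\bc')=\{v\in D:\langle x',v\rangle=1\}.
\]

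Next I will unfold the set inclusion. By definition, \(\operatorname{supp}(\bc')\subseteq\operatorname{supp}(\bc)\) means that every \(v\in D\) lying in the left-hand set also lies in the right-hand set. Substituting the descriptions above, this says exactly that for every \(v\in D\),
\[
\langle x',v\rangle=1\implies\langle x,v\rangle=1.
\]
This proves both directions of the equivalence at once, because each step is a rewriting by definition.

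There is no real obstacle. The one point to address is that the correspondence \(x\mapsto\bc_x\) need not be injective when \(\langle D\rangle\ne V\). This does not affect the argument, because the condition depends only on the values \(\langle x,v\rangle\) for \(v\in D\), and these values are the codeword itself. So any choice of representatives \(x\) and \(x'\) gives the same condition.

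Equivalently, one can state the contrapositive recorded just before the lemma: inclusion fails if and only if some \(v\in D\) has \(\langle x',v\rangle=1\) and \(\langle x,v\rangle=0\).
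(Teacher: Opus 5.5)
Your argument is correct and matches the paper: the paper gives no separate proof and treats the lemma as the coordinatewise unfolding of support inclusion stated just before it, namely that inclusion fails exactly when some $v\in D$ has $\langle x',v\rangle=1$ and $\langle x,v\rangle=0$. Your remark that the possible non-injectivity of $x\mapsto\bc_x$ is harmless is a sensible addition, since the condition depends only on the codeword values.
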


\subsection{Graph theory and cut spaces}
Let $G=(V(G),E(G))$ be a simple undirected graph. The cut associated with vertex $v\in V(G)$ is
\[\delta_G(v)=\{\{v,w\}\in E(G):w\in V(G)\},\]
viewed as a vector in $\mathbb F_2^{\#E(G)}$ indicating which edges are incident with $v$.

The cut space (or cut code) of $G$ is
\[\mathcal C_G=\langle 1_{\delta_G(v)}:v\in V(G)\rangle\subseteq\mathbb F_2^{\#E(G)}.\]

The binary vertex--edge incidence matrix gives the standard formula
\cite{DankelmannKeyRodrigues2013}
\[\dim(\mathcal C_G)=\#V-c,\] where $c$ is the number of connected components of $G$.

For a complete graph $K_r$, $\dim(\mathcal C_{K_r})=r-1$ and $\#E(K_r)=\binom r2=\frac{r(r-1)}{2}$.
For any two vertices $u,v\in V(K_r)$, the inner product $\langle1_{\delta_G(u)},1_{\delta_G(v)}\rangle$ counts, modulo $2$, the edges incident with both vertices:
\[\langle\delta_G(u),\delta_G(v)\rangle=
\begin{cases}
	\deg(u)\pmod 2, & u=v,\\
	1, & u\ne v\text{ and }uv\in E(K_r),\\
	0, & u\ne v\text{ and }uv\notin E(K_r).
\end{cases}\]

These identities specialize the known incidence-matrix self-orthogonality criterion recalled in Section~\ref{sec:graph-cut-space}.

\section{Minimality and refined parameters for single weight-layer codes}
\label{sec:a-sin-weight}

The codes in this section coincide with the complete constant-weight
layer codes studied in~\cite{ZhouLiTangDing2019}.  In simplicial-complex
language they satisfy
\[
D_t=\Delta_{\le t}\setminus\Delta_{\le t-1},
\]
an identity explicitly recorded by Wu and Lee~\cite{Wu-lee}.  Hence the
defining set and its dimension and Krawtchouk weight formula are known.
We recall those facts only because they are needed for the complete
minimality classification, the exact \(t=4\) analysis, and the refined
dual-distance condition proved below.

In this section, we assume $k\ge3$, $2\le t\le k-1$ and $V=\F_2^k$. Take
\[D_t=\{x\in V: \operatorname{wt}(x)=t\},\qquad n_t=\#D_t=\binom{k}{t}.\]
\begin{definition}\label{def:code-constru}Define the linear map
\[\phi_{D_t}:\mathbb{F}_{2}^{k}\longrightarrow \F_2^{n_t},\qquad a\longmapsto \bc_{a}=\big(\langle a,x\rangle\big)_{x\in D_t}.\]
The linear code $\C_{k,t}$ is defined as
\[\C_{k,t}:=\{\bc_a=(\langle a,x\rangle)_{x\in D_t}:\ a\in\F_2^k\}.\]\end{definition}
Clearly, the length of $\C_{k,t}$ is $n_t$.  We quote only the
previously known facts needed in the new arguments below.

\begin{lemma}\cite[Lemma~17]{ZhouLiTangDing2019} \label{lem:2.1}
Let \(k\ge3\) and \(1\le t\le k-1\). Then
\[
\langle D_t\rangle=
\begin{cases}
\mathbb F_2^k,&t\text{ odd},\\
\{x\in\mathbb F_2^k:\operatorname{wt}(x)\equiv0\pmod2\},&t\text{ even}.
\end{cases}
\]
Consequently,
\[
\ker\phi_{D_t}=
\begin{cases}\{0\},&t\text{ odd},\\
\langle {\one}\rangle,&t\text{ even},
\end{cases}
\qquad
\dim\C_{k,t}=
\begin{cases}k,&t\text{ odd},\\ k-1,&t\text{ even}.
\end{cases}
\]
\end{lemma}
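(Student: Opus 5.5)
We prove the span statement first; the kernel and dimension statements then follow from the general formula $\dim\C_D=\dim V-\dim(\langle D\rangle^\perp)$ of the preliminaries, using that $\langle D_t\rangle^\perp=\ker\phi_{D_t}$.

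\emph{Reduction to differences of layer vectors.} Every $x\in D_t$ has weight $t$. Weight parity is additive over $\F_2$, since $\operatorname{wt}(u+v)=\operatorname{wt}(u)+\operatorname{wt}(v)-2\#(\operatorname{supp}u\cap\operatorname{supp}v)$. Hence for even $t$ the span $\langle D_t\rangle$ lies in the even-weight subspace $E=\{x:\operatorname{wt}(x)\equiv 0\pmod 2\}$, which has dimension $k-1$.

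For the reverse inclusion, take two indices $i\ne j$. Since $1\le t\le k-1$, there is a $(t-1)$-subset $T\subseteq[k]\setminus\{i,j\}$: we need $t-1\le k-2$, which holds. Put $x=e_i+\sum_{l\in T}e_l$ and $y=e_j+\sum_{l\in T}e_l$. Both lie in $D_t$, and $x+y=e_i+e_j$. So every $e_i+e_j$ lies in $\langle D_t\rangle$. These vectors span $E$, since $e_1+e_j$ for $j=2,\dots,k$ gives $k-1$ independent vectors in $E$. Therefore $E\subseteq\langle D_t\rangle$.

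\emph{Conclusion on the span.} If $t$ is even, the two inclusions give $\langle D_t\rangle=E$. If $t$ is odd, then $\langle D_t\rangle$ contains $E$ together with a vector of odd weight $t$, which lies outside $E$. Since $E$ is a hyperplane, $\langle D_t\rangle=\F_2^k$.

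\emph{Kernel and dimension.} We take orthogonal complements under the standard form. For odd $t$, $(\F_2^k)^\perp=\{0\}$. For even $t$, $E^\perp=\langle\one\rangle$: $E$ is the kernel of $x\mapsto\langle\one,x\rangle$, so $\one$ is orthogonal to $E$, and $\dim E^\perp=1$. Hence $\dim\C_{k,t}$ equals $k$ or $k-1$ respectively.

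The only delicate point is the existence of the common $(t-1)$-set $T$, which is exactly where the hypothesis $t\le k-1$ is used. The standing assumption $k\ge3$ is not needed beyond $k\ge2$.
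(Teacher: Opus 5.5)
Your proof is correct and complete. The paper gives no proof of this lemma; it quotes it from \cite[Lemma~17]{ZhouLiTangDing2019}, so there is no in-paper argument to compare against. Your argument is a self-contained replacement. It consists of the parity containment $\langle D_t\rangle\subseteq E$ for even $t$, the identity $\bigl(e_i+\sum_{l\in T}e_l\bigr)+\bigl(e_j+\sum_{l\in T}e_l\bigr)=e_i+e_j$ for a $(t-1)$-set $T$ avoiding $i,j$ (exactly where $t\le k-1$ is needed), the hyperplane argument for odd $t$, and the passage to $\ker\phi_{D_t}=\langle D_t\rangle^\perp$. The device of two $t$-subsets sharing a common $(t-1)$-subset is the same one the paper later uses in Case~2 of Theorem~\ref{thm:minimality}.
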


The weight formula of Zhou et al.~\cite{ZhouLiTangDing2019} takes the
following form in our notation.
\begin{theorem}\label{thm:2.1}
If \(a\in\mathbb F_2^k\) has weight \(s\), then
\begin{equation}\label{eq:weight}
w_s=\sum_{\substack{0\le j\le t\\j\ \mathrm{odd}}}
\binom sj\binom{k-s}{t-j},
\end{equation}
or equivalently
\begin{equation}\label{eq:cut-weight}
w_s=\frac{n_t-K_t^{(k)}(s)}2,
\qquad
K_t^{(k)}(s)=\sum_{j=0}^t(-1)^j\binom sj\binom{k-s}{t-j}.
\end{equation}
In particular,
\begin{equation}\label{eq:nd}
\sum_{x\in D_t}(-1)^{\langle a,x\rangle}=K_t^{(k)}(s).
\end{equation}
The weight enumerator is
\[
A_{\C_{k,t}}(z)=
\begin{cases}
\displaystyle\sum_{s=0}^k\binom ks z^{w_s},&t\text{ odd},\\[6pt]
\displaystyle\frac12\sum_{s=0}^k\binom ks z^{w_s},&t\text{ even}.
\end{cases}
\]
\end{theorem}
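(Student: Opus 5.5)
The plan is to compute the weight of each codeword directly by counting, then pass to the Krawtchouk form and the enumerator.

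\emph{Weight of a single codeword.} Fix \(a\in\mathbb F_2^k\) with \(\operatorname{wt}(a)=s\), and put \(A=\operatorname{supp}(a)\). For \(x\in D_t\) with support \(X\) (so \(\#X=t\)), we have \(\langle a,x\rangle=\#(A\cap X)\bmod 2\). I will therefore count the \(t\)-subsets \(X\subseteq[k]\) by \(j=\#(A\cap X)\). There are \(\binom sj\binom{k-s}{t-j}\) of them, obtained by choosing \(j\) elements of \(A\) and \(t-j\) elements of its complement. The weight \(\operatorname{wt}(\bc_a)\) is the number of such \(X\) with \(j\) odd, which gives \eqref{eq:weight}. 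In particular the weight depends only on \(s\).

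\emph{Character sum and Krawtchouk form.} Summing \((-1)^{\langle a,x\rangle}=(-1)^j\) over the same partition of \(D_t\) gives \eqref{eq:nd}:
\[
\sum_{x\in D_t}(-1)^{\langle a,x\rangle}=\sum_{j=0}^{t}(-1)^j\binom sj\binom{k-s}{t-j}=K_t^{(k)}(s).
\]
The total count satisfies \(n_t=\sum_j\binom sj\binom{k-s}{t-j}\) by Vandermonde's identity. Hence \(n_t-K_t^{(k)}(s)=2w_s\), which is \eqref{eq:cut-weight}.

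\emph{Weight enumerator.} By Lemma~\ref{lem:2.1}, \(\phi_{D_t}\) is injective when \(t\) is odd. So each \(a\in\mathbb F_2^k\) gives a distinct codeword, and the \(\binom ks\) vectors of weight \(s\) contribute \(\binom ks z^{w_s}\).

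When \(t\) is even, \(\ker\phi_{D_t}=\langle\one\rangle\), so \(\phi_{D_t}\) is exactly two-to-one: \(a\) and \(a+\one\) give the same codeword. Since \(\operatorname{wt}(a+\one)=k-s\), this is consistent only if \(w_s=w_{k-s}\). I will check this directly: replacing \(j\) by \(t-j\) exchanges the roles of \(s\) and \(k-s\), and it preserves the parity of \(j\) because \(t\) is even. Each codeword is therefore counted exactly twice in \(\sum_s\binom ks z^{w_s}\), and dividing by two gives the stated enumerator.

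There is no real obstacle here. The only point requiring care is the halving in the even case, which rests on the kernel description in Lemma~\ref{lem:2.1} together with the symmetry \(w_s=w_{k-s}\).
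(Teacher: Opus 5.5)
Your proposal is correct. The paper does not prove this statement itself, since it recalls it from Zhou et al., but your steps match the computation the paper uses for the multi-layer weight formula in Section~\ref{sec:union-weight-layer}: count $x\in D_t$ by $j=\#(\operatorname{supp}(a)\cap\operatorname{supp}(x))$, rewrite via $\binom kt-K_t^{(k)}(s)=2\sum_{j\ \mathrm{odd}}\binom sj\binom{k-s}{t-j}$, and normalize the enumerator by $\#\ker\phi_{D_t}$.

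One small correction of emphasis: the halving for even $t$ needs only that every fibre of $\phi_{D_t}$ has size two. The identity $w_s=w_{k-s}$ then follows automatically from $\bc_a=\bc_{a+\one}$, so your $j\mapsto t-j$ check is a consistency check, not a needed step.
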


\begin{corollary}\label{cor:cor1}
Let \(I_t=\{1,\ldots,k\}\) for odd \(t\), and
\(I_t=\{1,\ldots,k-1\}\) for even \(t\). Then
\[
d(\C_{k,t})=\frac12\left(n_t-\max_{s\in I_t}K_t^{(k)}(s)\right),
\qquad
w_{\max}(\C_{k,t})=\frac12\left(n_t-\min_{s\in I_t}K_t^{(k)}(s)\right).
\]
For odd \(t\), the latter formula includes \(w_{\max}=n_t\), attained
by \(\bc_{{\one}}\).
\end{corollary}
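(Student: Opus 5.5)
The corollary follows from Theorem~\ref{thm:2.1} together with Lemma~\ref{lem:2.1}. By \eqref{eq:cut-weight}, the weight of \(\bc_a\) depends only on \(s=\operatorname{wt}(a)\) and equals \(w_s=(n_t-K_t^{(k)}(s))/2\). This is a strictly decreasing affine function of \(K_t^{(k)}(s)\). Hence, over any index set, minimizing \(w_s\) is the same as maximizing \(K_t^{(k)}(s)\), and maximizing \(w_s\) is the same as minimizing \(K_t^{(k)}(s)\). The only real work is to identify exactly which values of \(s\) correspond to nonzero codewords, so that the set \(\{w_s:s\in I_t\}\) is precisely the set of nonzero weights of \(\C_{k,t}\).

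The first step uses Lemma~\ref{lem:2.1}: \(\ker\phi_{D_t}=\{0\}\) for odd \(t\), and \(\ker\phi_{D_t}=\langle\one\rangle\) for even \(t\).

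For odd \(t\), the map \(a\mapsto\bc_a\) is injective. So the nonzero codewords are exactly the \(\bc_a\) with \(a\ne0\), that is, with \(s\in\{1,\dots,k\}=I_t\).

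For even \(t\), \(\bc_a=0\) exactly when \(a\in\{0,\one\}\), i.e. \(s\in\{0,k\}\). Every other \(a\) gives a nonzero codeword. Moreover, \(\bc_a=\bc_{a+\one}\), and \(a+\one\) has weight \(k-s\). Therefore, every nonzero codeword arises from some \(a\) with \(1\le s\le k-1\), so \(I_t=\{1,\dots,k-1\}\) is the correct index set. It is also consistent that \(K_t^{(k)}(s)=K_t^{(k)}(k-s)\) for even \(t\): substituting \(j\mapsto t-j\) gives the sign \((-1)^{t-j}=(-1)^j\).

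Taking the maximum and the minimum of \(K_t^{(k)}\) over \(I_t\) then gives both displayed formulas for \(d(\C_{k,t})\) and \(w_{\max}(\C_{k,t})\).

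For the final remark, take odd \(t\) and \(a=\one\), so \(s=k\). Then \(\binom{k-s}{t-j}=\binom{0}{t-j}\) vanishes unless \(j=t\). This gives \(K_t^{(k)}(k)=(-1)^t\binom kt=-n_t\), and hence \(w_k=n_t\). Since no weight can exceed the length \(n_t\), this is the maximum, and it is attained by \(\bc_{\one}\). Equivalently, \(\langle\one,x\rangle=t\equiv1\) for every \(x\in D_t\).

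There is no real obstacle. The only delicate point is the even-\(t\) case: \(s=k\) must be excluded because \(\bc_{\one}=0\), and the folding \(s\leftrightarrow k-s\) must be invoked to justify that no nonzero weight is lost by restricting to \(I_t\).
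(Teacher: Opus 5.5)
Your argument is correct and is the same derivation the paper intends (the corollary is stated there without a separate proof): you combine the kernel description of Lemma~\ref{lem:2.1} with the affine relation $w_s=(n_t-K_t^{(k)}(s))/2$ from Theorem~\ref{thm:2.1}, and you evaluate $K_t^{(k)}(k)=-n_t$ for odd $t$. One minor point is that the folding $s\leftrightarrow k-s$ is not actually needed for even $t$: the nonzero codewords are exactly the $\bc_a$ with $a\notin\{0,\one\}$, and these are exactly the $a$ with $1\le\operatorname{wt}(a)\le k-1$.
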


\begin{corollary}\label{cor:cor2}
For \(t=k-1\), the code is \([k,k,1]\) when \(k\) is even and is the
\([k,k-1,2]\) even-weight code when \(k\) is odd.
\end{corollary}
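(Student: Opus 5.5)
The statement is \cref{cor:cor2}: for \(t=k-1\), \(\C_{k,k-1}\) is \([k,k,1]\) when \(k\) is even and is the \([k,k-1,2]\) even-weight code when \(k\) is odd. I will compute \(\C_{k,k-1}\) directly as a subspace of \(\F_2^{k}\).

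\textbf{Length and dimension.} The length is \(n_{k-1}=\binom{k}{k-1}=k\). The defining set is \(D_{k-1}=\{\one+e_i: i\in[k]\}\), so the coordinates are indexed by \(i\in[k]\). For \(a\in\F_2^k\) with \(\operatorname{wt}(a)=s\), the \(i\)-th coordinate is
\[
\langle a,\one+e_i\rangle = s + a_i \pmod 2 .
\]
The dimension is given by \cref{lem:2.1}: it is \(k\) if \(k-1\) is odd (i.e.\ \(k\) even), and \(k-1\) if \(k\) is odd.

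\textbf{Case \(k\) odd.} Here \(t=k-1\) is even, so the code has dimension \(k-1\). The sum \(\sum_{x\in D_{k-1}}x=(k-1)\one\) vanishes because \(k-1\) is even, so \cref{lem:even-selforth-nat} shows every codeword has even weight. A \((k-1)\)-dimensional subspace of the \([k,k-1]\) even-weight code must equal it. Its minimum distance is \(2\): for instance \(a=e_1+e_2\) has \(s=2\), and gives codeword \(e_1+e_2\), of weight \(2\).

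\textbf{Case \(k\) even.} Here \(t\) is odd, so the dimension is \(k\) and \(\C_{k,k-1}=\F_2^k\) with minimum distance \(1\). Concretely, \(a=e_1\) gives the codeword \(\one+e_1\), of weight \(k-1\), while \(a=\one+e_1\) has \(s=k-1\), odd, and gives codeword \(\one+(\one+e_1)=e_1\), of weight \(1\).

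Both cases also follow from the weight formula \eqref{eq:weight}: with \(t=k-1\) the codeword of \(a\) of weight \(s\) has weight \(w_s=s\) if \(s\) is even and \(w_s=k-s\) if \(s\) is odd. There is no real obstacle; the only point needing care is matching the parity of \(t=k-1\) to the correct case of \cref{lem:2.1}.
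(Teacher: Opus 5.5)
Your proof is correct and follows essentially the same route the paper takes implicitly: the dimension comes from \cref{lem:2.1}, and the code is read off from \(D_{k-1}=\{\one+e_i\}\), with \(\langle a,\one+e_i\rangle=\operatorname{wt}(a)+a_i\), or equivalently from the weights \(w_s\) given by \eqref{eq:weight}. Your use of \cref{lem:even-selforth-nat}, together with the dimension count, to identify the odd-\(k\) code with the entire even-weight code is a clean way to make explicit a step the paper leaves unstated.
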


We next give an explicit example.
\begin{theorem}\label{thm:2.2}
 Let  $t=4$ and $k\geq6$. Suppose $s={\operatorname{wt}}(a)$ and
 \[r=s(k-s),\qquad y=|k-2s|.\]
 Then
 \begin{equation}\label{eq:case-4}
   K_4^{(k)}(s)
=\frac{y^4-(6k-8)y^2+3k(k-2)}{24},\ w_s
=\frac{(k^2-y^2)((k-2)(k-4)+y^2)}{48}.
 \end{equation}
 Moreover, the minimum distance
 \[
d(\C_{k,4})=
\begin{cases}
6,&k=6,\\
16,&k=7,\\
32,&k=8,\\
\binom{k-1}{3},&k\ge9.
\end{cases}
\]
\end{theorem}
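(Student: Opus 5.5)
The plan is to compute \(K_4^{(k)}(s)\) in closed form with a generating function, obtain \(w_s\) from \eqref{eq:cut-weight}, and then minimize \(w_s\) over the admissible range of \(s\) given in Corollary~\ref{cor:cor1}.

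\textbf{The Krawtchouk value.} By definition, \(K_4^{(k)}(s)\) is the coefficient of \(z^4\) in \((1-z)^s(1+z)^{k-s}\). Put \(m=k-2s\), so that \(y=|m|\). Take logarithms:
\[
s\log(1-z)+(k-s)\log(1+z)=mz-\tfrac{k}{2}z^2+\tfrac{m}{3}z^3-\tfrac{k}{4}z^4+O(z^5).
\]
Call the right-hand side \(A\). The coefficient of \(z^4\) in \(e^{A}\) is collected from the expansion of \(e^A\) up to \(A^4/24\):
\begin{itemize}
\item from \(A\), one gets \(-k/4\);
\item from \(A^2/2\), one gets \(m^2/3+k^2/8\);
\item from \(A^3/6\), one gets \(-km^2/4\);
\item from \(A^4/24\), one gets \(m^4/24\).
\end{itemize}
Summing these gives \(\bigl(m^4-(6k-8)m^2+3k(k-2)\bigr)/24\). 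Only even powers of \(m\) occur, so this can be written in \(y\), which is the first formula in \eqref{eq:case-4}.

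\textbf{The weight \(w_s\).} Substitute this expression and \(n_4=k(k-1)(k-2)(k-3)/24\) into \(w_s=(n_4-K_4^{(k)}(s))/2\). Then verify by expansion that the resulting quartic in \(y\) factors as \((k^2-y^2)\bigl((k-2)(k-4)+y^2\bigr)/48\). This is routine: compare the coefficients of \(y^4\), \(y^2\) and \(y^0\).

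\textbf{Minimum distance.} Since \(t=4\) is even, Corollary~\ref{cor:cor1} says the nonzero codewords correspond to \(1\le s\le k-1\). Under the substitution \(u=y^2\), this range becomes the set of \(u\) with \(y\equiv k\pmod 2\) and \(y\le k-2\). As a function of \(u\),
\[
f(u)=\frac{(k^2-u)\bigl((k-2)(k-4)+u\bigr)}{48}
\]
is a concave quadratic. Its minimum over the admissible set is therefore attained at one of the two extreme points.
\begin{itemize}
\item The largest value is \(u=(k-2)^2\), i.e.\ \(s=1\). There \(f=4(k-1)(k-2)(2k-6)/48=\binom{k-1}{3}\).
\item The smallest value is \(u=\delta\in\{0,1\}\) with \(\delta\equiv k\pmod 2\), i.e.\ \(s=\lfloor k/2\rfloor\). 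There \(f=(k^2-\delta)\bigl((k-2)(k-4)+\delta\bigr)/48\).
\end{itemize}

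The remaining step, which I expect to be the only real obstacle, is comparing these two endpoint values. This is a polynomial inequality in \(k\), to be checked separately for each parity of \(k\).

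For \(k\ge9\), I expect \(\binom{k-1}{3}\) to be the smaller value. The reason is that the balanced endpoint has leading term \(k^4/48\), which exceeds the leading term \(k^3/6\) of \(\binom{k-1}{3}\) once \(k\) is moderately large. I would show the difference is positive by factoring it, or by bounding it directly for \(k\ge 9\).

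For \(k=6,7,8\), I would evaluate both endpoints explicitly:
\begin{itemize}
\item \(k=6\): \(s=3\) gives \(36\cdot 8/48=6\), while \(s=1\) gives \(10\).
\item \(k=7\): \(s=3\) gives \(48\cdot16/48=16\), while \(s=1\) gives \(20\).
\item \(k=8\): \(s=4\) gives \(64\cdot 24/48=32\), while \(s=1\) gives \(35\).
\end{itemize}
These reproduce the stated exceptional values.
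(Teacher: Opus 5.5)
Your proposal is correct. Its minimization step is the paper's argument in different coordinates: since $y^2=k^2-4s(k-s)=k^2-4r$, concavity in $u=y^2$ is the same as the paper's concavity of $w(r)=r(k^2-3k+4-2r)/6$, with the same two endpoints $s=1$ and $s=\lfloor k/2\rfloor$.

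Where you genuinely differ is the closed form.
\begin{itemize}
\item The paper evaluates the odd-$j$ sum \eqref{eq:weight}. For $t=4$ this has only the two terms $s\binom{k-s}{3}+(k-s)\binom{s}{3}$. It obtains $w_s$ first and then $K_4^{(k)}(s)=n_4-2w_s$.
\item You obtain $K_4^{(k)}(s)$ first, from the log--exp expansion of $(1-z)^s(1+z)^{k-s}=(1-z^2)^{k/2}\bigl((1+z)/(1-z)\bigr)^{m/2}$, and then $w_s$.
\end{itemize}
The paper's route is shorter for $t=4$. Yours explains why only even powers of $m=k-2s$ occur, and it extends mechanically to any fixed $t$.

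The one step you leave unexecuted closes cleanly if you split by parity as you propose. For even $k$,
\[
f(0)-\binom{k-1}{3}=\frac{(k-2)(k^3-12k^2+32k-24)}{48},
\]
and for odd $k$,
\[
f(1)-\binom{k-1}{3}=\frac{(k-1)(k-3)(k^2-10k+13)}{48}.
\]
The cubic is negative at $k=6,8$ and positive for $k\ge10$, since it equals $96$ at $k=10$ and increases from there. The quadratic has roots $5\pm2\sqrt3$, so it is negative at $k=7$ and positive for $k\ge9$. This gives exactly the stated threshold. Your leading-term heuristic alone would not locate it.

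The paper instead handles both parities at once with $\Delta(k)=(k-1-R)(k^2-5k+6-2R)/6$, where $R=\lfloor k^2/4\rfloor$.
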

\begin{proof}
 By~\eqref{eq:weight},
 \[
\begin{aligned}
w_s=s\binom{k-s}{3}+\binom{s}{3}(k-s)
=\frac{(k^2-y^2)((k-2)(k-4)+y^2)}{48}.
\end{aligned}
\]By~\eqref{eq:cut-weight},
\[
\begin{aligned}
K_4^{(k)}(s)
&=n_t-2w_s
=\binom{k}{4}-2s\binom{k-s}{3}-2(k-s)\binom{s}{3}\\
&=\frac{y^4-(6k-8)y^2+3k(k-2)}{24}.
\end{aligned}
\]


Since \(t=4\) is even, \(\bc_{a+{\one}}=\bc_a\), and hence
\(w_s=w_{k-s}\). It therefore suffices to consider
\(1\le s\le\lfloor k/2\rfloor\). On this range,
\(r=s(k-s)\) is strictly increasing. The function
\[
w(r):=w_s(r)=\frac{r(k^2-3k+4-2r)}{6}
\]
satisfies \(w''(r)=-2/3<0\), and hence is concave in \(r\).
Therefore, its minimum on the relevant \(r\)-interval is
attained at an endpoint, corresponding to
\[
s=1\qquad\text{or}\qquad
s=\left\lfloor\frac{k}{2}\right\rfloor.
\]

We next evaluate $w_s$ at the two endpoints
\begin{itemize}
  \item At $s=1$.\begin{equation}\label{eq:6.2}r(1)=k-1,\qquad w_1=w_1(k-1)=\binom{k-1}{3}=\frac{(k-1)(k-2)(k-3)}{6}. \end{equation}
  \item At $s=\lfloor k/2\rfloor$. Writing $\lfloor k/2\rfloor=k/2-\varepsilon$ with $\varepsilon\in\{0,1/2\}$, one has
     \begin{equation}\label{eq:6.21}r\!\left(\left\lfloor\tfrac{k}{2}\right\rfloor\right)
=\left\lfloor\tfrac{k}{2}\right\rfloor\!\left(k-\left\lfloor\tfrac{k}{2}\right\rfloor\right)
=\left\lfloor\tfrac{k^2}{4}\right\rfloor, \end{equation}and consequently
\begin{equation}\label{eq:6.22}w_{\lfloor k/2\rfloor}
  =\frac{\lfloor k^2/4\rfloor\,(k^2-3k+4-2\lfloor k^2/4\rfloor)}{6}. \end{equation}
\end{itemize}

Put
\[
R=\left\lfloor\frac{k^2}{4}\right\rfloor.
\]
Comparing ~\eqref{eq:6.2} and ~\eqref{eq:6.22}, we obtain
\[
\Delta(k):=w_1-w_{\lfloor k/2\rfloor}
=
\frac{(k-1-R)(k^2-5k+6-2R)}{6}.
\]

For \(k=6,7,8\), the factor \(k-1-R\) is negative, while
\(k^2-5k+6-2R\) equals \(-6,-4,-2\), respectively. Hence
\(\Delta(k)>0\), and the central endpoint gives the minimum:
\[
w_{\min}=
\begin{cases}
	6,&k=6,\\
	16,&k=7,\\
	32,&k=8.
\end{cases}
\]

For \(k\ge9\),
\[
k^2-5k+6-2R
\ge \frac{k^2}{2}-5k+6>0,
\]
whereas \(k-1-R<0\). Thus \(\Delta(k)<0\), so
\[
w_1<w_{\lfloor k/2\rfloor}.
\]
Therefore,
\[
w_{\min}=w_1=\binom{k-1}{3}
\qquad(k\ge9).
\]
This completes the proof.
\end{proof}

\subsection{Minimality of \texorpdfstring{$\C_{k,t}$}{C(k,t)}}

We next establish the exact condition under which the code $\C_{k,t}$ is minimal.

\newcommand{\inner}[2]{\langle #1,#2\rangle}
\begin{theorem}\label{thm:minimality}
	Let $k\ge3$ and $2\le t\le k-1$. The code $\C_{k,t}$ is minimal if and only if either $t$ is even and $2\le t\le k-2$, or $(k,t)=(3,2)$.
\end{theorem}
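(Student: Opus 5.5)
Our plan is to treat minimality directly via the support criterion of Lemma~\ref{lem:defining-set-support}: $\C_{k,t}$ fails to be minimal exactly when there are $a,b\in\F_2^k$ with $\bc_a,\bc_b$ nonzero and distinct, such that every $x\in D_t$ with $\langle b,x\rangle=1$ also satisfies $\langle a,x\rangle=1$. In terms of subsets, write $A=\operatorname{supp}(a)$, $B=\operatorname{supp}(b)$, and view $x$ as a $t$-subset $X$. The condition reads: $\#(X\cap B)$ odd $\Rightarrow$ $\#(X\cap A)$ odd. Since $\bc_b\subseteq\bc_a$ and $\bc_a+\bc_b=\bc_{a+b}$, this is equivalent to saying that $\bc_{a+b}$ and $\bc_b$ have disjoint supports. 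So we will test whether two nonzero codewords $\bc_u,\bc_v$ with $\bc_u\ne\bc_v$ can have disjoint supports, i.e.\ whether no $t$-set $X$ meets both $U$ and $V$ in an odd number of points.

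\textbf{Non-minimal cases.} For odd $t$ (including $t=k-1$ with $k$ even), take $u=e_1$ and $v=\one+e_1$. Every $t$-set meets $[k]$ in the odd number $t$, so $\#(X\cap\{1\})$ and $\#(X\setminus\{1\})$ have opposite parities. Hence $\bc_{e_1}$ and $\bc_{\one+e_1}$ have disjoint supports, and both are nonzero and distinct by Lemma~\ref{lem:2.1}, since the map is injective. Then $\bc_{e_1}$ is strictly contained in $\bc_{\one}$, so the code is not minimal. For even $t=k-1$ (so $k$ odd, $k\ge5$), Corollary~\ref{cor:cor2} gives the full even-weight code of length $k$. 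It contains weight-$2$ and weight-$4$ words with nested supports, so it is not minimal. For $(k,t)=(3,2)$ the code is $[3,2,2]$, and all three nonzero words have weight $2$ with pairwise non-nested supports, so it is minimal.

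\textbf{Minimal case: $t$ even, $2\le t\le k-2$.} Suppose $\bc_u,\bc_v$ are nonzero with $\bc_u\neq\bc_v$. Because $t$ is even, $\bc_w=\bc_{w+\one}$, so we may replace $U$ by its complement as needed. Nonzeroness means $U\ne\varnothing,[k]$, and similarly for $V$; distinctness means $V\ne U$ and $V\ne U^c$. Then among the four cells $U\cap V$, $U\setminus V$, $V\setminus U$, $(U\cup V)^c$, at least three are nonempty. We will explicitly build a $t$-set $X$ with $\#(X\cap U)$ and $\#(X\cap V)$ both odd. If $U\cap V\neq\varnothing$, pick one point there and fill the remaining $t-1$ points so that the parities of $\#(X\cap U)$ and $\#(X\cap V)$ are preserved. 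Do this by adding points in pairs from a single cell, or singletons from cells balanced against each other. If $U\cap V=\varnothing$, pick one point of $U$ and one of $V$, then add $t-2$ further points in pairs.

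The main obstacle is cell sizes: a cell may contain only one element, so pairs cannot always be drawn from one cell. The fix is to use the complement freedom and the hypothesis $t\le k-2$. The latter guarantees at least two points remain outside $X$, which lets us swap in a point from a different cell to fix parity. We expect a short case analysis on which cells are singletons, possibly checking small $k$ separately, with $t=2$ handled directly: choose $x\in U\cap V$ and $y$ in the cell outside $U\cup V$, or the analogous choice. Combining the two directions proves Theorem~\ref{thm:minimality}.
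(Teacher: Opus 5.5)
Your reduction to ``no two nonzero codewords with disjoint supports'' is correct. Your treatment of odd $t$, of even $t=k-1$, and of $(k,t)=(3,2)$ is fine and essentially the paper's.

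The gap is in the only hard direction, $t$ even with $2\le t\le k-2$. There you announce, but do not prove, the existence of a $t$-set $X$ with $\#(X\cap U)$ and $\#(X\cap V)$ both odd, and the recipe you sketch does not work as stated. Once one point of $U\cap V$ is placed, $t-1$ points remain, an odd number, so they cannot all be added in parity-preserving pairs. Preserving both parities forces one of two things: an odd number of points outside $U\cup V$, or odd numbers of further points from each of $U\cap V$, $U\setminus V$ and $V\setminus U$. Which option is feasible depends on the parities of the cell sizes, and this is exactly where $t\le k-2$ must enter. It genuinely fails for $t=k-1$: take $k$ odd, $U=\{1,2\}$, $V=\{3,4\}$. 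Your remark that two points remain outside $X$ does not by itself establish the claim.

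The missing step is a parity count.
\begin{itemize}
\item Let $n_{11},n_{10},n_{01},n_{00}$ be the sizes of $U\cap V$, $U\setminus V$, $V\setminus U$, $[k]\setminus(U\cup V)$, and let $x_{ij}$ be the number of points of $X$ in each.
\item Since $t$ is even, both intersections are odd if and only if either (A) $x_{11},x_{00}$ are odd and $x_{10},x_{01}$ are even, or (B) $x_{10},x_{01}$ are odd and $x_{11},x_{00}$ are even.
\item Suppose the two cells that a pattern requires to be odd are nonempty. Then the attainable totals are exactly the even integers in $[2,k-d]$, where $d$ counts the cells whose size has the parity opposite to the prescribed one.
\item Each cell has the wrong parity for exactly one pattern, so $d_A+d_B=4$.
\item If all four cells are nonempty, some pattern has $d\le 2$, giving $k-d\ge k-2\ge t$.
\item If exactly one cell is empty, use the pattern requiring that cell to be even. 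It is available and has $d\le 3$. Because $k-d$ and $t$ are both even and $t\le k-2$, again $t\le k-d$.
\end{itemize}

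With this lemma your argument is a complete and more elementary alternative to the paper's proof. The paper instead shows that $\langle a,x\rangle(1+\langle b,x\rangle)$ vanishing on $D_t$ forces $a=\alpha\one+\beta b$, using coefficient identities from four-set and $(t-1)$-set comparisons. The paper's algebraic route has the advantage that it is reused for odd layers in Proposition~\ref{pro:odd}.
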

\begin{proof}
	We divide the proof into three parts.

	{\bf Case 1: $t$ is odd.}
	Assume $t$ is odd. Consider the all-one vector ${\one}\in\mathbb{F}_2^k$. For every $x\in D_t$, $\mathrm{wt}(x)=t$ is odd, so $\langle{\one},x\rangle=1$. Hence $\bc_{{\one}}=(1,1,\dots,1)$ and $\mathrm{wt}(\bc_{\one})=n_t$.
	
	Now take $a=e_1=(1,0,\ldots,0)$. By~\eqref{eq:weight},
	\[\mathrm{wt}(\bc_{e_1})=\binom{k-1}{t-1} < \binom{k}{t}=n_t,\]
	
	because $t\ge 2$. Thus $\mathrm{supp}(\bc_{e_1})\subset\mathrm{supp}(\bc_{\one})$, which means $\C_{k,t}$ is not minimal.
	\smallskip

{\bf	Case 2: $t$ is even and $2\le t\le k-2$.}
		Suppose, for a contradiction, that there exist distinct nonzero $\bc_a,\bc_b$ with
\[
{\operatorname{supp}}(\bc_a)\subseteq {\operatorname{supp}}(\bc_b).
\]
	
For $x\in D_t$, by Lemma~\ref{lem:defining-set-support}, we know
\[
\inner{a}{x}=1
\implies \inner{b}{x}=1.
\]Equivalently,
\begin{equation}\label{eq:2.1}
\inner{a}{x}(1+\inner{b}{x})=0
\quad\text{for every }x\in D_t.\end{equation}

Define the Boolean polynomial
\[
P(x)=\inner{a}{x}\,\big(1+\inner{b}{x}\big),\qquad x\in \F_2^k.
\]
It has degree at most two and by ~\eqref{eq:2.1}
\begin{equation}\label{eq:2.2}
P(x)=0\quad\text{for all }x\in D_t. \end{equation}

Write it in multilinear form as
\[
P(x)=\inner{a}{x}+\inner{a}{x}\inner{b}{x}
=\sum_{i=1}^k l_i x_i+\sum_{1\le i<j\le k}q_{ij}\,x_i x_j,
\]
where the coefficients are
\begin{equation}\label{eq:2.3} l_i=a_i(1+b_i),\qquad
q_{ij}=a_i b_j+a_j b_i\qquad (i\ne j). \end{equation}
Take four distinct indices $i,j,p,q$. Since $t\leq k-2$, we can choose a set $R$ of size $t-2$ disjoint from those four indices. Consider the four $t$-subsets
\[R \cup \{i,j\},\ R \cup \{i,p\},\ R \cup \{q,j\},\ R \cup \{q,p\}.\]
The polynomial $P$ vanishes on all four sets. Adding the four resulting equations over $\F_2$, all linear and common quadratic terms cancel. The remaining terms are
\begin{equation}\label{eq:9}
		q_{ij}+q_{ip}+q_{qj}+q_{qp}
	 = 0. \end{equation}

	For distinct \(i,j,\ell\), put
	\[
	T_{ij\ell}=q_{ij}+q_{i\ell}+q_{j\ell}.
	\]
	For four distinct indices \(i,j,p,\ell\), Eq.~\eqref{eq:9} gives
	\[
	T_{ij\ell}+T_{ip\ell}
	=q_{ij}+q_{ip}+q_{\ell j}+q_{\ell p}=0.
	\]
	Thus, two triangles sharing an edge have the same triangle sum.
	Since all triangles can be connected by successively replacing
	one vertex, their triangle sums have a common value
	\(\gamma\in\mathbb F_2\). Taking
	\[
	r_1=0,\qquad r_i=q_{1i}+\gamma\quad(i>1),
	\]
	gives
\begin{equation}\label{eq:10}q_{ij} = r_i + r_j + \gamma\quad (i\neq j). \end{equation}

For distinct \(i,j\), choose a \((t-1)\)-subset \(R\) disjoint from
\(i,j\). Comparing \(P\) on \(R\cup\{i\}\) and \(R\cup\{j\}\),
and using {\eqref{eq:10}}, gives
\[
0=l_i+l_j+\sum_{u\in R}(q_{iu}+q_{ju})
=l_i+l_j+(t-1)(r_i+r_j).
\]
Since \(t\) is even, this becomes \(l_i+r_i=l_j+r_j\). Thus
\(l_i+r_i\) is independent of \(i\), and {\eqref{eq:10}} yields
\[
q_{ij}=l_i+l_j+\gamma.
\]

Using~\eqref{eq:2.3}, the preceding relation becomes
\begin{equation}\label{eq:11}
  a_ib_j+a_jb_i=a_i(1+b_i)+a_j(1+b_j)+\gamma.
\end{equation}

Classify each coordinate $i$ according to its type $(a_i,b_i)\in\{00,01,10,11\}$. If $\gamma=1$, direct substitution into~\eqref{eq:11} shows that the types of any two coordinates must form either $\{00,10\}$ or $\{01,11\}$. No three coordinates can satisfy this pairwise condition. Since $k\ge4$, this is impossible. Therefore $\gamma=0$.

When $\gamma=0$, ~\eqref{eq:11} implies the following
\begin{itemize}
  \item among coordinates with $b_i=0$, all values $a_i$ are equal,
  \item among coordinates with $b_i=1$, all values $a_i$ are equal.
\end{itemize}
Thus there exist $\alpha,\beta\in\F_2$ such that
\[a_i=\alpha+\beta b_i\quad \text{ for every }\ i.\]
Equivalently,
\begin{equation}\label{eq:12}
 a=\alpha\one+\beta b.
\end{equation}
Because $t$ is even, the kernel is $\langle\one\rangle$. Hence ~\eqref{eq:12} implies
\[a+\langle\one\rangle\in\{\langle{\one}\rangle,b+\langle\one\rangle\}.\]
Therefore, the two parameters $a$ and $b$ represent either the zero codeword or the same codeword. This contradicts the assumption that $\bc_a$ and $\bc_b$ are distinct nonzero codewords.

Consequently, for even $t$ with $2\leq t\leq k-2$, no proper support inclusion exists. Hence $\C_{k,t}$ is minimal.

\smallskip

{\bf Case 3: $t=k-1$.}
If $t=k-1$ is even, then $k$ is odd. The set $D_{k-1}$ consists of all vectors of weight
$k-1$, i.e.,
\[\one+e_i,\quad 1\leq i\le k.\] Hence
\[
(\bc_a)_i=\inner{a}{\one+e_i}=\sum_{j=1}^ka_j+a_i
\]
By Corollary~\ref{cor:cor2}, the resulting code is the length-$k$ even-weight code.

If $k=3$, this code has parameters $[3,2,2]$, and every nonzero codeword has weight 2. Therefore, no nonzero support can properly contain another, so $\C_{3,2}$ is minimal.

If $k\geq5$, the even-weight code contains the two codewords
\[u=e_1+e_2,\qquad v=e_1+e_2+e_3+e_4.\] These supports satisfy \[{\operatorname{supp}}(u)=\{1,2\}\subset\{1,2,3,4\}={\operatorname{supp}}(v),\]
hence $\C_{k,k-1}$ is not minimal for odd $k\ge5$. Combining all cases, we obtain the result.
\end{proof}
\begin{remark}
The case \(t=2\), including its weight distribution, minimality,
and Ashikhmin--Barg comparison, is due to Zhang et
al.~\cite{ZhangYanWei2019}, it is therefore not repeated here.  It is
included in Theorem~\ref{thm:minimality} only to make the new
all-layer classification complete.
\end{remark}

To illustrate the new
threshold, Theorems~\ref{thm:2.1} and~\ref{thm:2.2} give
\begin{align*}
A_{\C_{16,4}}(z) = 1&{}+16z^{455}+120z^{728}+560z^{871}
 +6435z^{896}+11440z^{903}\\
&{}+8008z^{920}+1820z^{928}+4368z^{935}.
\end{align*}
Thus \(455/935<1/2\).  We next prove that this strict
Ashikhmin--Barg violation holds for every \(k\ge16\).

By~\eqref{eq:case-4}, completing the square in the expression for $w_s$ gives
\[w_s = \frac{(k^2 - 3k + 4)^2 - (y^2 - (3k-4))^2}{48}.\]
Define
\[
\mathcal Y_k=
\mathop{\rm argmin}_{\substack{0\le y\le k-2\\y\equiv k\pmod2}}
\left|y^2-(3k-4)\right|.
\]
Then we have \[
w_{\max}
=
\frac{(k^2-3k+4)^2-
	\displaystyle\min_{\substack{0\le y\le k-2\\y\equiv k\pmod2}}
	(y^2-(3k-4))^2}{48},
\] and the maximum is attained exactly at
\[
s=\frac{k\pm y}{2},\qquad y\in\mathcal Y_k.
\]

Substituting $k=6,7,\ldots,16$ yields the following table.
\begin{table}[ht]
	\centering
	\caption{Numerical values of $d$ and $w_{\max}$ for $6 \le k \le 16$}
	\label{tab:values}
	\begin{tabular}{c|ccccccccccc}
		$k$ & 6 & 7 & 8 & 9 & 10 & 11 & 12 & 13 & 14 & 15 & 16 \\ \hline
		$d$ & 6 & 16 & 32 & 56 & 84 & 120 & 165 & 220 & 286 & 364 & 455 \\
		$w_{\max}$ & 10 & 20 & 40 & 70 & 112 & 176 & 261 & 372 & 520 &704 & 935 \\
	\end{tabular}
\end{table}
For $6 \le k \le 15$, Table~\ref{tab:values} directly gives $2d > w_{\max}$. Since $k\ge16$ implies
$0<\sqrt{3k-4}<k-2$, the integer $y$ of the same parity as $k$
 closest to $\sqrt{3k-4}$ lies in $[0,\ldots,k-2]$ and minimizes $|y^2-(3k-4)|$. Set
\[\delta = |y^2 - (3k-4)|.\]

Since $x^2\equiv2\pmod3$ has no integer solutions, $\sqrt{3k-4}$ is not an integer. Hence
\begin{equation}\label{eq:delta}|y-\sqrt{3k-4}|\le1\implies \delta = |y - \sqrt{3k-4}| \cdot |y + \sqrt{3k-4}|\le1\cdot (2\sqrt{3k-4}+1).\end{equation}

For $k\ge6$, we have $4\sqrt{3k-4}\le4k+15$. Therefore, by~\eqref{eq:delta},
\begin{align*}\delta^2 &\le (2\sqrt{3k-4} + 1)^2 = 4(3k-4) + 4\sqrt{3k-4} + 1 = 12k - 15 + 4\sqrt{3k-4}\\
	&\le 12k-15+4k+15=16k.\end{align*}
We further have
\begin{equation}\label{eq:de1}48w_{\max} = (k^2 - 3k + 4)^2 - \delta^2\ge (k^2 - 3k + 4)^2-16k. \end{equation}
Notice that \begin{equation}\label{eq:de2}48(2d) = 96\binom{k-1}{3} = 16(k-1)(k-2)(k-3).\end{equation}
Consequently,
\begin{align*} 48(w_{\max} - 2d) &\ge (k^2 - 3k + 4)^2 - 16k - 16(k-1)(k-2)(k-3) \\ &= k^4 - 22k^3 + 113k^2 - 216k + 112. \end{align*}

The polynomial on the right equals $1008$ at $k=16$. Its first derivative is positive at $k=16$, and its second and third derivatives are positive for $k\ge16$, so it is strictly increasing on this range. Therefore, we obtain the following result.
\begin{proposition}
		For every \(k\ge16\), \(\mathcal C_{k,4}\) is minimal and satisfies
		\[
		\frac{w_{\min}}{w_{\max}}<\frac12.
		\]
\end{proposition}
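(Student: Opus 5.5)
The plan has two parts: minimality, and the strict Ashikhmin--Barg violation. For minimality I would apply Theorem~\ref{thm:minimality} with $t=4$. Here $t$ is even, and $k\ge16$ gives $2\le 4\le k-2$, so $\C_{k,4}$ is minimal. This needs no further work.

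For the ratio, I would identify $w_{\min}$ and $w_{\max}$ using the material just before the statement. By Theorem~\ref{thm:2.2}, since $k\ge16\ge9$, we have $w_{\min}=d(\C_{k,4})=\binom{k-1}{3}$. The expression $w_s=\bigl((k^2-3k+4)^2-(y^2-(3k-4))^2\bigr)/48$ holds with $y=|k-2s|$ ranging over integers of the same parity as $k$. Nonzero codewords correspond to $1\le s\le k-1$, since $s=0$ and $s=k$ give the zero codeword because $\ker\phi_{D_4}=\langle\one\rangle$. So $y$ ranges over $0\le y\le k-2$, and this gives the displayed formula for $w_{\max}$ in terms of $\delta=\min|y^2-(3k-4)|$.

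The key step is to bound $\delta$. Because $k\ge16$, we have $0<\sqrt{3k-4}<k-2$. Hence some admissible $y$ of the right parity lies within distance $1$ of $\sqrt{3k-4}$; the admissible values are spaced by $2$, and the relevant interval stays inside $[0,k-2]$. This gives $\delta\le 2\sqrt{3k-4}+1$. I would then use $4\sqrt{3k-4}\le 4k+15$ to get $\delta^2\le 16k$, and so $48w_{\max}\ge (k^2-3k+4)^2-16k$. Combining this with $48\cdot 2d=16(k-1)(k-2)(k-3)$ yields
\[
48(w_{\max}-2w_{\min})\ge p(k):=k^4-22k^3+113k^2-216k+112 .
\]

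The remaining task is to show $p(k)>0$ for all $k\ge16$. I expect this to be the main obstacle, though it is only a routine calculus check. I would verify $p(16)=1008>0$ and $p'(16)>0$. Then $p'''(k)=24k-132>0$ for $k\ge16$, and $p''(k)=12k^2-132k+226$ is positive at $k=16$, so $p''>0$ on $k\ge16$. It follows that $p'$ is increasing and positive there, hence $p$ is increasing and positive. Therefore $w_{\max}>2w_{\min}$, i.e.\ $w_{\min}/w_{\max}<1/2$, for every $k\ge16$, which combined with minimality proves the proposition.
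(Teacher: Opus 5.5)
Your proposal is correct and follows the paper's argument essentially step for step. You obtain minimality from Theorem~\ref{thm:minimality} and $w_{\min}=\binom{k-1}{3}$ from Theorem~\ref{thm:2.2}, then use the completed-square formula for $w_{\max}$ with $\delta^2\le 16k$, and finish with positivity of $k^4-22k^3+113k^2-216k+112$ for $k\ge16$ from its value $1008$ at $k=16$ and the derivative checks; your explicit forms of $p''$ and $p'''$ make that last step slightly more transparent than in the paper.
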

\subsection{Dual distance and self-orthogonality}

Let
\[
{G_{\mathrm{nat}}}=(x)_{x\in D_t}\in\mathbb F_2^{k\times n_t}
\]
be the natural generator matrix.  For every \(t\)-subset
\(A\subseteq[k]\), write
\[
x_A=1_A\in\mathbb F_2^k
\]
for the column of \({G_{\mathrm{nat}}}\) indexed by \(A\), and let
\(\varepsilon_A\in\mathbb F_2^{n_t}\) be the corresponding coordinate
unit vector.  Thus
\[
\C_{k,t}^{\perp}
=\{v\in\mathbb F_2^{n_t}:{G_{\mathrm{nat}}}v^T=0\}.
\]

\begin{lemma}\label{lem:columns}
Every column \(x_A\) is nonzero, and distinct \(t\)-subsets index
distinct columns.  Consequently,
\[
A_1(\C_{k,t}^{\perp})=A_2(\C_{k,t}^{\perp})=0.
\]
\end{lemma}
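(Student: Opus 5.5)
The plan is to use the description of the dual code displayed just before the lemma: a vector \(v\in\mathbb F_2^{n_t}\) lies in \(\C_{k,t}^{\perp}\) exactly when \({G_{\mathrm{nat}}}v^T=0\). Hence the weight-one and weight-two dual codewords correspond to a single zero column, or to a pair of columns summing to zero. Both kinds of relation are ruled out by elementary facts about the columns \(x_A=1_A\).

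\textbf{Step 1: no zero column.} For each \(t\)-subset \(A\subseteq[k]\) we have \(\operatorname{wt}(x_A)=\#A=t\ge2>0\), so \(x_A\ne0\).

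\textbf{Step 2: distinct subsets give distinct columns.} The map \(A\mapsto 1_A\) is injective, since \(\operatorname{supp}(1_A)=A\). Thus if \(A\ne B\), then \(x_A\ne x_B\). Over \(\mathbb F_2\) this is the same as saying \(x_A+x_B\ne0\). Equivalently, \(\operatorname{supp}(x_A+x_B)=A\triangle B\ne\varnothing\).

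\textbf{Step 3: translate to dual weights.} Suppose \(v\in\C_{k,t}^{\perp}\) has weight one, say \(v=\varepsilon_A\). Then \({G_{\mathrm{nat}}}v^T=x_A=0\), which contradicts Step~1. Suppose instead \(v\) has weight two, say \(v=\varepsilon_A+\varepsilon_B\) with \(A\ne B\). Then \({G_{\mathrm{nat}}}v^T=x_A+x_B=0\), which contradicts Step~2. Therefore \(A_1(\C_{k,t}^{\perp})=A_2(\C_{k,t}^{\perp})=0\).

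There is no real obstacle; the only care needed is to write the dual membership condition through the natural generator matrix. That description is valid even though \({G_{\mathrm{nat}}}\) may lack full row rank when \(t\) is even, because its row space is still exactly \(\C_{k,t}\).
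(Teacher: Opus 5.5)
Your proof is correct and follows the same route as the paper, which simply observes that the columns of \(G_{\mathrm{nat}}\) are the distinct nonzero incidence vectors \(1_A\) with \(|A|=t\). You additionally make explicit the translation, via \(G_{\mathrm{nat}}v^T=0\), of weight-one and weight-two dual codewords into a zero column or a pair of equal columns, a step the paper leaves implicit.
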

\begin{proof}
The columns are precisely the distinct nonzero incidence vectors
\(x_A=1_A\) with \(|A|=t\).
\end{proof}

\begin{proposition}\label{prop:dual-3}
Let \(A,B,C\subseteq[k]\) be three pairwise distinct \(t\)-subsets and
put
\[
v_{A,B,C}=\varepsilon_A+\varepsilon_B+\varepsilon_C
\in\mathbb F_2^{n_t}.
\]
Then
\[
v_{A,B,C}\in\C_{k,t}^{\perp}
\iff x_A+x_B+x_C=0\quad\text{in }\mathbb F_2^k
\iff C=A\mathbin\triangle B\ \text{ and }\ |A\cap B|=\frac t2.
\]
\end{proposition}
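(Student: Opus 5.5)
The plan is to prove the two equivalences separately. For the first, I will use the definition of the dual together with the natural generator matrix. Because the columns of \(G_{\mathrm{nat}}\) are exactly the vectors \(x_A\) with \(|A|=t\), we have
\[
G_{\mathrm{nat}}v_{A,B,C}^T=x_A+x_B+x_C .
\]
Therefore \(v_{A,B,C}\in\C_{k,t}^\perp\) holds exactly when \(x_A+x_B+x_C=0\) in \(\F_2^k\). This step is immediate from \(\C_{k,t}^{\perp}=\{v:G_{\mathrm{nat}}v^T=0\}\).

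For the second equivalence, I will use the correspondence between addition and symmetric difference recalled in the preliminaries, namely \(\operatorname{supp}(u+v)=\operatorname{supp}(u)\triangle\operatorname{supp}(v)\). Under it, \(x_A+x_B+x_C=0\) says \(x_C=x_A+x_B=1_{A\triangle B}\). Since supports determine binary vectors, this is the same as \(C=A\triangle B\). It remains to see when \(A\triangle B\) is a \(t\)-set. The counting identity
\[
|A\triangle B|=|A|+|B|-2|A\cap B|=2t-2|A\cap B|
\]
shows that \(|A\triangle B|=t\) if and only if \(|A\cap B|=t/2\).

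Next I check the two directions. For the forward direction, \(C=A\triangle B\) together with \(|C|=t\) forces \(|A\cap B|=t/2\). For the converse, suppose \(C=A\triangle B\) and \(|A\cap B|=t/2\). Then \(C\) is a \(t\)-subset, and I must confirm that it is distinct from \(A\) and \(B\). If \(C=A\), then \(B=A\triangle C=\varnothing\), which contradicts \(|B|=t\ge2\); the case \(C=B\) is symmetric. Also \(A\ne B\) because \(|A\cap B|=t/2<t\). So the hypothesis of pairwise distinctness is consistent with the condition.

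There is no real obstacle in this argument. The only points needing care are to keep the distinctness and size bookkeeping explicit, and to note that the condition implicitly forces \(t\) to be even. When \(t\) is odd, no weight-three dual codeword of this form exists.
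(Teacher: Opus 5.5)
Your proposal is correct and follows essentially the same route as the paper: compute $G_{\mathrm{nat}}v_{A,B,C}^T=x_A+x_B+x_C$, translate $x_C=x_A+x_B$ into $C=A\triangle B$, and use $|A\triangle B|=2t-2|A\cap B|$ to get $|A\cap B|=t/2$. Your extra check in the converse that $C$ differs from $A$ and $B$ is harmless but not needed, since pairwise distinctness is already part of the hypothesis.
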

\begin{proof}
Since \({G_{\mathrm{nat}}}\varepsilon_A^T=x_A\),
\[
{G_{\mathrm{nat}}}v_{A,B,C}^T=x_A+x_B+x_C.
\]
Hence \(v_{A,B,C}\in\C_{k,t}^{\perp}\) exactly when
\(x_C=x_A+x_B\), which is equivalent to \(C=A\triangle B\).  Because
\(|A|=|B|=|C|=t\),
\begin{equation}\label{eq:2.6}
t=|C|=|A\triangle B|=2t-2|A\cap B|,
\end{equation}
and thus \(|A\cap B|=t/2\).  The converse follows by reversing
the same argument.
\end{proof}

\begin{theorem}
For \(k\ge4\) and \(2\le t\le k-2\),
\[
d(\C_{k,t}^{\perp})=
\begin{cases}
3,&t\text{ is even and }k\ge3t/2,\\
4,&t\text{ is odd, or }t\text{ is even and }k<3t/2.
\end{cases}
\]
\end{theorem}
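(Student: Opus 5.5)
The plan splits the computation of the dual distance into a lower bound of $3$, a decision of when weight $3$ is attained, and a weight-$4$ construction for the remaining cases. By Lemma~\ref{lem:columns}, $A_1(\C_{k,t}^{\perp})=A_2(\C_{k,t}^{\perp})=0$, so $d(\C_{k,t}^{\perp})\ge3$.

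\textbf{Weight three.} A weight-three dual codeword is exactly some $v_{A,B,C}$, and Proposition~\ref{prop:dual-3} says it exists if and only if there are two $t$-subsets $A,B$ with $|A\cap B|=t/2$ and $|A\triangle B|=t$. This forces $t$ even. Such a pair needs
\[
|A\cup B|=2t-t/2=3t/2\le k ,
\]
and $C=A\triangle B$ is then automatically a $t$-subset distinct from $A$ and $B$. Hence weight three occurs if and only if $t$ is even and $k\ge 3t/2$. In that case I take $A=\{1,\dots,t\}$ and $B=\{t/2+1,\dots,3t/2\}$, which gives $d=3$. In every other case $d\ge4$.

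\textbf{Weight four.} It remains to exhibit four distinct $t$-subsets whose incidence vectors sum to zero; then $\varepsilon_A+\varepsilon_B+\varepsilon_C+\varepsilon_E$ lies in the dual, and $d\le 4$. I would reuse the four-set configuration from Case~2 of Theorem~\ref{thm:minimality}. Pick four distinct indices $i,j,p,q$, which is possible since $k\ge4$. Choose $R$ of size $t-2$ disjoint from them, which is possible since $t-2\le k-4$. The sets are
\[
R\cup\{i,j\},\quad R\cup\{i,p\},\quad R\cup\{q,j\},\quad R\cup\{q,p\}.
\]
Each of $i,j,p,q$ appears exactly twice and each element of $R$ appears four times, so the incidence vectors sum to zero. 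The four sets are pairwise distinct. So $d=4$ whenever weight three is excluded, namely for $t$ odd, or for $t$ even with $k<3t/2$.

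The only point needing care is the feasibility condition $k\ge3t/2$ for the weight-three configuration. It comes directly from counting $|A\cup B|$, and it is exactly the condition the statement highlights.
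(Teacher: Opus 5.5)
Your proposal is correct and follows the paper's proof essentially step for step: the lower bound from Lemma~\ref{lem:columns}, the weight-three characterization via Proposition~\ref{prop:dual-3} with the feasibility count $|A\cup B|=3t/2\le k$ (equivalent to the paper's $k-t\ge t/2$), and the same four-set configuration $R\cup\{i,j\},\,R\cup\{i,p\},\,R\cup\{q,j\},\,R\cup\{q,p\}$ for weight four. Your explicit pair $A=\{1,\dots,t\}$, $B=\{t/2+1,\dots,3t/2\}$ just makes the existence direction concrete.
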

\begin{proof}
By Proposition~\ref{prop:dual-3}, a three-column relation exists exactly
when two \(t\)-subsets \(A,B\) can satisfy \(|A\cap B|=t/2\).  This is
possible exactly when \(t\) is even and
\[
k-t\ge\frac t2,
\]
or equivalently \(k\ge3t/2\).  Together with
Lemma~\ref{lem:columns}, this proves the first case and gives the lower
bound four in the remaining cases.

It remains to construct a relation of weight four.  Choose
\(R\subseteq[k]\) with \(|R|=t-2\) and four distinct elements
\(i,j,p,q\notin R\), and set
\begin{equation}\label{eq:2.8}
A=R\cup\{i,j\},\quad B=R\cup\{i,p\},\quad
C=R\cup\{q,j\},\quad D=R\cup\{q,p\}.
\end{equation}
These four \(t\)-subsets are distinct and their incidence columns satisfy
\begin{equation}\label{eq:2.9}
x_A+x_B+x_C+x_D=0\qquad\text{in }\mathbb F_2^k.
\end{equation}
Therefore
\[
v=\varepsilon_A+\varepsilon_B+\varepsilon_C+\varepsilon_D
\]
is a codeword of weight four in \(\C_{k,t}^{\perp}\).  This proves the
second case.
\end{proof}

\begin{remark}
The condition \(k\ge3t/2\) is essential: a previously stated
even-\(t\) criterion in~\cite[Theorem~25]{ZhouLiTangDing2019} does not
include this feasibility restriction. For example, when \((k,t)=(8,6)\),
three weight-six columns cannot sum to zero, because two such supports
would have to meet in three points and their union would have size
nine. The four-column construction in the preceding proof therefore gives
\(d(\C_{8,6}^{\perp})=4\).
\end{remark}

In particular, for $t=4$ and $k\ge6$,
\[{d(\C_{k,4}^\perp) = 3.}\]

	We use the convention \(d(\{0\})=+\infty\). For the boundary $t=k-1$, we have
\begin{itemize}
	\item If $k$ is even, then $\C_{k,k-1}^\perp = \{0\}$, so there are no nonzero dual codewords.
    \item If $k$  is odd, then \(\C_{k,k-1}\) has parameters $[k,k-1,2]$ and $\C_{k,k-1}^\perp = \langle {\one} \rangle$, and thus
    \[{d(\C_{k,k-1}^\perp) = k.}\]
\end{itemize}

The following previously known row-intersection data are recalled only
to state the doubly-even refinement used below.
\begin{lemma}\cite[Lemma~18 and Theorem~21]{ZhouLiTangDing2019}\label{lem:4}
Let \(G_{\rm nat}\) be the natural generator matrix of \(\C_{k,t}\).
Every row has weight
\begin{equation}\label{eq:5.2}R=\binom{k-1}{t-1},\end{equation}
and two distinct rows have common support of size
\begin{equation}\label{eq:5.3}P=\binom{k-2}{t-2}.\end{equation}
Consequently, \(\C_{k,t}\) is even if and only if \(R\equiv0\pmod2\), and it is
self-orthogonal if and only if \(R\equiv P\equiv0\pmod2\).
\end{lemma}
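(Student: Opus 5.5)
Although the lemma is cited from \cite{ZhouLiTangDing2019}, it follows directly from the natural generator matrix, and I would prove it that way. The row of $G_{\mathrm{nat}}$ indexed by $i\in[k]$ is the indicator vector of the family $\{A\subseteq[k]: |A|=t,\ i\in A\}$. Its weight is therefore the number of $t$-subsets containing $i$, which gives \eqref{eq:5.2}. For $i\ne j$, the common support of rows $i$ and $j$ is the set of $t$-subsets containing both $i$ and $j$. Counting these gives \eqref{eq:5.3}.

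For the consequences, I would use that the diagonal entries of $G_{\mathrm{nat}}G_{\mathrm{nat}}^T=\sum_{x\in D_t}xx^T$ are the row weights modulo $2$. The off-diagonal entries are the pairwise intersection sizes modulo $2$. Hence $G_{\mathrm{nat}}G_{\mathrm{nat}}^T=R\,I+P\,(J-I)\pmod 2$, where $J$ is the all-ones matrix. Lemma~\ref{lem:even-selforth-nat} states that self-orthogonality is equivalent to this matrix vanishing. Since $k\ge3$, the matrix has both diagonal and off-diagonal entries, so it vanishes if and only if $R\equiv P\equiv0\pmod 2$.

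For evenness I would not argue row by row, because a code whose generating rows all have even weight is even, but the converse also has to be checked. Lemma~\ref{lem:even-selforth-nat} says $\C_{k,t}$ is even if and only if $\sum_{x\in D_t}x=0$. The $i$-th coordinate of $\sum_{x\in D_t}x$ equals the weight of row $i$, which is $R$ modulo $2$. So $\sum_{x\in D_t}x=R\,\one$, and this is zero exactly when $R$ is even.

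There is no real obstacle here. The only point needing care is the distinction between the row Gram matrix and the outer-product sum, and Lemma~\ref{lem:even-selforth-nat} already identifies the two.
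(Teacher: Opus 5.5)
Your proposal is correct and follows the same reasoning the paper relies on: you count the $t$-subsets through one or two fixed coordinates, then read the diagonal and off-diagonal entries of $G_{\mathrm{nat}}G_{\mathrm{nat}}^T=\sum_{x\in D_t}xx^T$ via Lemma~\ref{lem:even-selforth-nat}, which gives the same Gram matrix $R\,I_k+P(J_k-I_k)\equiv(\alpha+\beta)I_k+\beta J_k$ that the paper writes down in the union setting. Your hesitation about the row-by-row evenness argument is unnecessary, since the converse direction is immediate (the rows of $G_{\mathrm{nat}}$ are codewords) and $\operatorname{wt}(\bc)\equiv\langle\bc,\one\rangle\pmod 2$ is linear; the paper argues this way in the union case, but your route through $\sum_{x\in D_t}x=R\,\one$ is equally valid.
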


\begin{proposition}
The code \(\C_{k,t}\) is doubly even if and only if
\[
R\equiv0\pmod4,
\qquad P\equiv0\pmod2.
\]\end{proposition}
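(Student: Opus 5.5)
The proof will rest on the classical binary identity
\[
\operatorname{wt}(u+v)=\operatorname{wt}(u)+\operatorname{wt}(v)-2\operatorname{wt}(u\ast v),
\]
where \(u\ast v\) is the coordinatewise product. The plan is to apply it to the rows \(g_1,\dots,g_k\) of \(G_{\mathrm{nat}}\). Every codeword of \(\C_{k,t}\) has the form \(\bc_a=\sum_{i\in\operatorname{supp}(a)}g_i\). Iterating the identity, we will show that for \(I=\operatorname{supp}(a)\),
\[
\operatorname{wt}(\bc_a)\equiv\sum_{i\in I}\operatorname{wt}(g_i)+2\sum_{\substack{i<j\\ i,j\in I}}\operatorname{wt}(g_i\ast g_j)\pmod 4 .
\]
This holds because \(\operatorname{wt}(u\ast v)\bmod 2\) is bilinear: it equals \(\langle u,v\rangle\), so \(2\operatorname{wt}(u\ast v)\bmod 4\) depends only on that inner product. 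An induction on \(|I|\) then gives the displayed congruence. By Lemma~\ref{lem:4}, the right-hand side becomes
\[
\operatorname{wt}(\bc_a)\equiv sR+s(s-1)P\pmod 4,\qquad s=\operatorname{wt}(a).
\]

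\emph{Sufficiency.} Assume \(R\equiv0\pmod4\) and \(P\) even. Then \(s(s-1)P\equiv0\pmod 4\), since \(s(s-1)\) is even and \(P\) is even, and \(sR\equiv0\pmod4\). Hence every codeword weight is divisible by \(4\).

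\emph{Necessity.} Assume the code is doubly even. Taking \(a=e_1\) gives \(\operatorname{wt}(g_1)=R\equiv0\pmod4\); this uses only that \(e_1\notin\ker\phi_{D_t}\subseteq\langle\one\rangle\), which holds for \(k\ge3\), although the weight congruence holds even for a zero codeword. Next take \(a=e_1+e_2\), so \(s=2\). Its weight is \(2R-2P\equiv 2P\pmod4\), since \(2R\equiv0\). Doubly even forces \(2P\equiv0\pmod 4\), that is, \(P\) even. Alternatively, \(P\) even follows from the fact that doubly even binary codes are self-orthogonal, combined with Lemma~\ref{lem:4}.

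The only step needing care is the mod-\(4\) expansion of a sum of many rows. I would handle it by a straightforward induction using the two-vector identity, checking at each step that \(\operatorname{wt}\bigl((\sum_{i\in I'}g_i)\ast g_j\bigr)\equiv\sum_{i\in I'}\operatorname{wt}(g_i\ast g_j)\pmod2\). This holds because that parity is the standard inner product. I could also cite the direct formula \(w_s\) of Theorem~\ref{thm:2.1} as a cross-check, but the row-based argument above avoids Krawtchouk computations.
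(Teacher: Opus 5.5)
Your proof is correct and follows the paper's argument. Necessity comes from the weights $R$ of a single row and $2R-2P$ of a sum of two rows, and sufficiency from the mod-$4$ expansion of a sum of rows in terms of row weights and pairwise support intersections; your $+2$ versus the paper's $-2$ is immaterial modulo $4$, and your induction via the bilinearity of $\operatorname{wt}(u\ast v)\bmod 2$ supplies the justification of that congruence that the paper leaves implicit.
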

\begin{proof}

Indeed, necessity follows because each generating row has weight
\(R\), while
\(\operatorname{wt}(r_i+r_j)=2R-2P\) for distinct generating rows.

For sufficiency, if $J$ is any set of row indices, then
\[
\operatorname{wt}\!\left(\sum_{i\in J}r_i\right)
\equiv \sum_{i\in J}\operatorname{wt}(r_i)
-2\sum_{\substack{i<j\\ i,j\in J}}
\bigl|\operatorname{supp}(r_i)\cap\operatorname{supp}(r_j)\bigr|
\pmod 4.
\]
Thus $R\equiv0\pmod4$ and $P\equiv0\pmod2$ are also sufficient for
every linear combination of the rows to have weight divisible by four.
\end{proof}

The parity of binomial coefficients is determined by Lucas's theorem:
	\begin{equation}\label{eq:lucas}\binom{n}{r} \equiv 1 \pmod 2 \iff r \,\&\, (n-r) = 0,\end{equation}
where $\&$ denotes the bitwise AND.
	
Applying this to $\binom{k-1}{t-1}$ and $\binom{k-2}{t-2}$ yields
\begin{align*}\binom{k-1}{t-1} \text{ is even}& \iff (t-1) \,\&\, (k-t) \neq 0, \\
	\binom{k-2}{t-2} \text{ is even} & \iff (t-2) \,\&\, (k-t) \neq 0, \end{align*}
	Now specialize to $t=4$ (with $k\ge 6$). Then
	\[R = \binom{k-1}{3}, \qquad P = \binom{k-2}{2}.\]
	By ~\eqref{eq:lucas}, $\binom{k-1}{3}$ is odd exactly when $(3) \,\&\, (k-4) = 0$, which occurs precisely for $k \equiv 0 \pmod 4$. Thus it is even iff $k \not\equiv 0 \pmod 4$, i.e.,
	\[\C_{k,4} \text{ is even} \iff k \not\equiv 0 \pmod 4.\]

	For self-orthogonality, we need both $R$ even and $P$ even. The parity criterion~\eqref{eq:lucas} gives that $\binom{k-2}{2}$ is even iff \[(2) \,\&\, (k-4) \neq 0,\] which holds for $k \equiv 2,3 \pmod 4$. Intersecting with $R$ even ($k \not\equiv 0 \pmod 4$) yields exactly $k \equiv 2,3 \pmod 4$, i.e.,
\[\C_{k,4} \text{ is self-orthogonal} \iff k \equiv 2,3 \pmod 4. \]
	
	For doubly-even, we require $R \equiv 0 \pmod 4$ and $P$ even. Within the self-orthogonal residue classes $k \equiv 2,3,6,7 \pmod 8$, one checks that $\binom{k-1}{3}$ is divisible by 4 precisely when $k \equiv 2,3,7 \pmod 8$ (excluding $k \equiv 6 \pmod 8$). This yields
	\[\C_{k,4} \text{ is doubly-even} \iff k \equiv 2,3,7 \pmod 8. \]

\section{Minimality classification for arbitrary unions of weight layers}
\label{sec:union-weight-layer}
In this section, let \(k\ge4\) and
\(\varnothing\neq S\subseteq\{1,\ldots,k-1\}\).
The relation with simplicial-complex constructions can be stated
exactly.  Initial unions satisfy
\[
D_{\{1,\ldots,r\}}:=D_1\cup D_2\cup\cdots\cup D_r=\Delta_{\le r}\setminus\{0\},
\]
as in the defining sets considered in
\cite{Wu-lee, ZhouLiTangDing2019}.  More generally, a consecutive block
\(H=\{a,a+1,\ldots,b\}\) gives
\[
D_H=\Delta_{\le b}\setminus\Delta_{\le a-1},
\]
which is a difference of two simplicial complexes of the type allowed
in~\cite{Wu-lee}.  If the maximal consecutive-interval decomposition is
\[
H=[a_1,b_1]\uplus\cdots
 \uplus[a_m,b_m],
\]
where $[a_i,b_i]=\{a_i,a_i+1,\ldots,b_i\}$. Then
\[
D_H=\bigcup_{i=1}^{m}
\bigl(\Delta_{\le b_i}\setminus\Delta_{\le a_i-1}\bigr).
\]
Hence, arbitrary layer unions remain within the broader framework of set
operations on simplicial complexes.  Incidence codes of hypergraphs
provide another ambient framework~\cite{MallikYildiz2024}.  Our aim is
the uniform treatment of every layer-index set \(S\), especially the
if-and-only-if minimality classification and the simultaneous parameter
criteria below.
We take
\[
D_S=\bigcup_{t\in S}D_t,
\]
and denote the resulting code by \(\C_{k,S}\).
Set \[
S_{\mathrm{odd}}=\{t\in S:t\equiv1\pmod2\},\qquad
S_{\mathrm{even}}=\{t\in S:t\equiv0\pmod2\}.
\]Then
\begin{equation}\label{eq:gen-defi}
	D_S=(\bigcup_{t\in S_{\mathrm{odd}}}D_t)\bigcup(\bigcup_{t\in S_{\mathrm{even}}}D_t)
\end{equation}
\begin{theorem}
For the defining set $D_S$ in~\eqref{eq:gen-defi},
  \[
\ker\phi_{D_S}=
\begin{cases}
\{0\},&S_{\rm odd}\ne\varnothing,\\
\langle\one\rangle,&S_{\rm odd}=\varnothing.
\end{cases}\implies
\dim\C_{k,S}=
\begin{cases}
k,&S_{\rm odd}\ne\varnothing,\\
k-1,&S_{\rm odd}=\varnothing.
\end{cases}
\]
Moreover, if $\operatorname{wt}(a)=s$, then the weight of $\bc_a$ is
\[
w_s=\frac12\sum_{t\in S}
\left(\binom kt-K_t^{(k)}(s)\right),\qquad A_{\C_{k,S}}(z)=
\frac1{\#\ker\phi_{D_S}}
\sum_{s=0}^k\binom ks z^{w_s}
\]
\end{theorem}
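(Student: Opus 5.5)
The kernel claim reduces to computing the span \(\langle D_S\rangle\), since \(\ker\phi_{D_S}=\langle D_S\rangle^\perp\). Because \(D_S\) is a disjoint union, we have \(\langle D_S\rangle=\sum_{t\in S}\langle D_t\rangle\). By Lemma~\ref{lem:2.1}, each \(\langle D_t\rangle\) is either \(\mathbb F_2^k\) (for \(t\) odd) or the even-weight subspace \(E=\langle\one\rangle^\perp\) (for \(t\) even). Here \(1\le t\le k-1\) and \(k\ge4\), so the hypotheses of that lemma hold; the case \(t=1\) is trivial, since \(D_1\) is the standard basis.

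This gives two cases. If \(S_{\rm odd}\ne\varnothing\), the sum contains \(\mathbb F_2^k\), so the kernel is \(\{0\}\). If \(S_{\rm odd}=\varnothing\), every summand equals \(E\), so \(\langle D_S\rangle=E\) and \(\ker\phi_{D_S}=E^\perp=\langle\one\rangle\). The dimension formula then follows from \(\dim\C_D=k-\dim\langle D\rangle^\perp\), recorded in the preliminaries.

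For the weights, I will use that the coordinates of \(\bc_a\) are indexed by the disjoint union of the layers. Hence \(\operatorname{wt}(\bc_a)\) is the sum over \(t\in S\) of the weights of the restrictions of \(\bc_a\) to \(D_t\). Each restriction is exactly the codeword of \(\C_{k,t}\) with parameter \(a\), so Theorem~\ref{thm:2.1}, in the form \eqref{eq:cut-weight}, gives the \(t\)-th contribution \(\frac12(\binom kt-K_t^{(k)}(s))\), where \(s=\operatorname{wt}(a)\).

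One point needs checking. Theorem~\ref{thm:2.1} was stated under the section hypothesis \(2\le t\le k-1\), but its proof is the character-sum identity \eqref{eq:nd}, which holds for any \(t\), including \(t=1\). Specifically, \(\operatorname{wt}(\bc_a)=\frac12\sum_{x\in D}(1-(-1)^{\langle a,x\rangle})\), and splitting this sum over the layers gives the claim directly.

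For the weight enumerator, the map \(a\mapsto\bc_a\) is surjective onto \(\C_{k,S}\), and its fibres are cosets of \(\ker\phi_{D_S}\). Every codeword is therefore counted exactly \(\#\ker\phi_{D_S}\) times in \(\sum_{a\in\mathbb F_2^k}z^{\operatorname{wt}(\bc_a)}\).

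Grouping the vectors \(a\) by weight, there are \(\binom ks\) of weight \(s\), and all of them give weight \(w_s\). This yields the stated formula. There is no real obstacle; the only care needed is the \(t=1\) edge case in invoking the single-layer results.
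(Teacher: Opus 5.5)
Your proposal is correct and follows essentially the same route as the paper. The paper intersects the layer kernels $\ker\phi_{D_t}$ from Lemma~\ref{lem:2.1}, which is the orthogonal-complement dual of your computation $\langle D_S\rangle=\sum_{t\in S}\langle D_t\rangle$; it then sums the single-layer weights and divides by $\#\ker\phi_{D_S}$ exactly as you do, and it uses the $t=1$ case of Theorem~\ref{thm:2.1} without comment, which your extra check justifies.
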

\begin{proof}
By Lemma~\ref{lem:2.1}, on each individual layer $D_t$, $\ker\phi_{D_t}=\{0\}$ for odd $t$ and $=\langle\one\rangle$ for even $t$.

Since $\F_2^k$ decomposes as the disjoint union of its Hamming
layers,
\[
\mathbb{F}_2^k=\biguplus_{s=0}^{k}V_s,\qquad V_s=\{a\in\mathbb{F}_2^k:\mathrm{wt}(a)=s\},
\] the evaluation map is $\phi_{D_S}=(\phi_{D_t})_{t\in S}$, and therefore
\[\ker\phi_{D_S}=\bigcap_{t\in S}\ker\phi_{D_t}.\]

If $S$ contains an odd $t$, then $\ker\phi_{D_t}=\{0\}$ for that layer, forcing $\ker\phi_{D_S}=\{0\}$. If all $t\in S$ are even, each $\ker\phi_{D_t}=\langle\one\rangle$, so their intersection is again $\langle\one\rangle$. The dimension of $\C_{k,S}$ is directly obtained.

For $a\in\mathbb{F}_2^k$ with $\mathrm{wt}(a)=s$, write $w_s^{(t)}$ for the weight of the restriction of $\bc_a=\phi_{D_S}(a)$ to $D_t$. Then
\[
w_s^{(t)}
=\sum_{\substack{0\le j\le t\\j\ {\rm odd}}}
\binom sj\binom{k-s}{t-j}
=\frac12\left(\binom kt-K_t^{(k)}(s)\right).
\]
The total Hamming weight of $\bc_a$ is the sum over all layers
\[w(\bc_a)=\sum_{t\in S}w_s^{(t)}=\sum_{t\in S}\frac12\left(\binom kt-K_t^{(k)}(s)\right).\]

	The weight enumerator of \(\C_{k,S}\) is
\[
	A_{\C_{k,S}}(z)
=
\frac1{\#\ker\phi_{D_S}}
\sum_{s=0}^k\binom ks z^{w_s}.\qedhere
\]
\end{proof}

For $a\in\F_2^k$,
\[{\operatorname{wt}}(a)=s\implies {\operatorname{wt}}(\bar{a}):={\operatorname{wt}}(a+\one)=k-s.\] Note that
\begin{align*}
K_t^{(k)}(k-s)&=\sum_{j=0}^{t}(-1)^j\binom{k-s}{j}\binom{s}{t-j}=(-1)^t\sum_{j=0}^{t}(-1)^{t-j}\binom{k-s}{j}\binom{s}{t-j}\\
&=(-1)^tK_t^{(k)}(s).
\end{align*}
For any $X\in D_t$, $\inner{a+\one}{X}=\inner{a}{X}+t\pmod2$, then
\[
w_{k-s}^{(t)}=
\begin{cases}
w_s^{(t)},& 2\mid t,\\
n_t-w_s^{(t)},&2\nmid t,
\end{cases}
\]and the total weight is
\[
w_{k-s}
=
\sum_{t\in S_{\rm even}}w_s^{(t)}
+\sum_{t\in S_{\rm odd}}\bigl(n_t-w_s^{(t)}\bigr).
\]By the preceding weight formula and the definition of \(\ker\phi_{D_S}\),
\begin{equation}\label{eq:gen-code1}
  d(\C_{k,S})=w_{\min}(\C_{k,S})=\min_{s\in I_S}w_s,\qquad
w_{\max}(\C_{k,S})=\max_{s\in I_S}w_s,
\end{equation}
where \[
I_S=
\begin{cases}
	\{1,\ldots,k\},&S_{\rm odd}\ne\varnothing,\\
	\{1,\ldots,k-1\},&S_{\rm odd}=\varnothing.
\end{cases}
\]
\subsection{Minimality of \texorpdfstring{$\C_{k,S}$}{C(k,S)}}
For $(a,b)\in\F_2^k\times\F_2^k$, set
\begin{align*}
  p=\#\{i:(a_i,b_i)=(0,0)\},\quad & q=\#\{i:(a_i,b_i)=(0,1)\}, \\
  r=\#\{i:(a_i,b_i)=(1,0)\},\quad & h=\#\{i:(a_i,b_i)=(1,1)\}.
\end{align*}
Then $p+q+r+h=k$.
\begin{lemma}
  Take $M_t(a,b):=\#\{X\in D_t:\ \langle a,X\rangle=1, \langle b,X\rangle=0\}$.
  Then
  \begin{equation}\label{eq:mabb}
    M_t(a,b)=
\sum_{\substack{i+j+\ell+m=t\\
\ell+m\equiv1\ \pmod2\\
j+m\equiv0\ \pmod2}}
\binom pi\binom qj\binom r\ell\binom hm.
  \end{equation}
\end{lemma}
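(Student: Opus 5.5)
We prove the formula by a direct count: split the coordinates $[k]$ according to the type of the pair $(a_i,b_i)$, and identify each $t$-subset $X\in D_t$ with its support. Let
\[
P_{00}=\{i:(a_i,b_i)=(0,0)\},\quad P_{01}=\{i:(a_i,b_i)=(0,1)\},\quad P_{10}=\{i:(a_i,b_i)=(1,0)\},\quad P_{11}=\{i:(a_i,b_i)=(1,1)\}.
\]
These four sets partition $[k]$, and their sizes are $p,q,r,h$. Every $t$-subset $X$ is then determined by the quadruple of its intersections with $P_{00},P_{01},P_{10},P_{11}$.

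Write $i=|X\cap P_{00}|$, $j=|X\cap P_{01}|$, $\ell=|X\cap P_{10}|$ and $m=|X\cap P_{11}|$, so that $i+j+\ell+m=t$. The key observation is that both inner products depend only on these counts:
\[
\langle a,X\rangle=\sum_{u\in X}a_u\equiv \ell+m \pmod 2,\qquad
\langle b,X\rangle=\sum_{u\in X}b_u\equiv j+m \pmod 2,
\]
because $a_u=1$ exactly on $P_{10}\cup P_{11}$ and $b_u=1$ exactly on $P_{01}\cup P_{11}$. This is the same reasoning that gives the single-vector formula~\eqref{eq:weight} in Theorem~\ref{thm:2.1}, now applied to two vectors at once. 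Hence $X$ is counted by $M_t(a,b)$ exactly when $\ell+m$ is odd and $j+m$ is even.

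Next we count. For a fixed admissible quadruple $(i,j,\ell,m)$, the number of $t$-subsets with those intersection sizes is $\binom pi\binom qj\binom r\ell\binom hm$, since the four parts are chosen independently from disjoint blocks. The correspondence between $X$ and its four intersections is a bijection onto quadruples of subsets of the prescribed sizes. Summing over all quadruples with $i+j+\ell+m=t$ that satisfy the two parity conditions gives~\eqref{eq:mabb}. Terms with an index larger than the corresponding block size vanish automatically, because the binomial coefficient is zero.

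No step is a real obstacle. The only point needing care is to get the parity conditions right, namely which blocks contribute to $\langle a,X\rangle$ and which to $\langle b,X\rangle$. The rest is routine bookkeeping of the product-of-binomials count.
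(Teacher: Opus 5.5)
Your proposal is correct and follows essentially the same approach as the paper: partition the coordinates by the type $(a_i,b_i)$, note that $\langle a,X\rangle\equiv \ell+m$ and $\langle b,X\rangle\equiv j+m \pmod 2$, and sum the product-of-binomials count over the quadruples satisfying the two parity constraints. Your explicit remarks on the bijection with quadruples of block subsets, and on out-of-range binomial coefficients vanishing, are slightly more careful than the paper's version.
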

\begin{proof}
  A $t$-subset $X\in D_t$ is formed by selecting $i$ positions from the $(0,0)$ type, $j$ from $(0,1)$, $\ell$ from $(1,0)$, and $m$ from $(1,1)$, subject to $i+j+\ell+m=t$. The inner products modulo 2 are
\[{\langle a, X \rangle \equiv \ell + m \pmod 2, \qquad \langle b, X \rangle \equiv j + m \pmod 2.}\]

The conditions $\langle a, X \rangle = 1$ and $\langle b, X \rangle = 0$ thus translate exactly to the parity constraints $\ell+m \equiv 1 \pmod 2$ and $j+m \equiv 0 \pmod 2$. Summing the binomial choices $\binom{p}{i}\binom{q}{j}\binom{r}{\ell}\binom{h}{m}$ over all valid $(i,j,\ell,m)$ yields ~\eqref{eq:mabb}.
\end{proof}

Since
\[{1_{\{X\in D_t: \inner{a}{X} = 1, \inner{b}{X}= 0\}} = \frac{1}{4} \left( 1 - (-1)^{\langle a, X \rangle} \right) \left( 1 + (-1)^{\langle b, X \rangle} \right).}\]
Expanding this product gives four terms
\[{\frac{1}{4} \left[ 1 - (-1)^{\langle a, X \rangle} + (-1)^{\langle b, X \rangle} - (-1)^{\langle a+b, X \rangle} \right].}\]
By~\eqref{eq:nd}, $\sum_{X \in D_t} (-1)^{\langle u, X \rangle} = K_t^{(k)}(\mathrm{wt}(u))$. Therefore,\begin{align*}M_t(a,b)&=\sum_{X\in D_t}\frac{1}{4} \left[ 1 - (-1)^{\langle a, X \rangle} + (-1)^{\langle b, X \rangle} - (-1)^{\langle a+b, X \rangle} \right]\\
&=\frac{1}{4}\bigl(n_t-K_t^{(k)}({\operatorname{wt}}(a))+K_t^{(k)}({\operatorname{wt}}(b))-K_t^{(k)}({\operatorname{wt}}(a+b))\bigr).\end{align*}
Set $s_a=r+h$, $s_b=q+h$, and $s_{a+b}=q+r$. Then $s_a={\operatorname{wt}}(a)$, $s_b={\operatorname{wt}}(b)$, and $s_{a+b}={\operatorname{wt}}(a+b)$, so
\begin{equation}\label{eq:mab}
 M_t(a,b)=\frac{1}{4}\bigl(n_t-K_t^{(k)}(s_a)+K_t^{(k)}(s_b)-K_t^{(k)}(s_{a+b})\bigr)
\end{equation}
\begin{proposition}\label{pro:minma}
  The support of the codeword $\bc_a$ is contained in that of $\bc_b$ if and only if
\begin{equation}\label{eq:minmab}\sum_{t \in S} M_t(a,b) = 0.\end{equation}
\end{proposition}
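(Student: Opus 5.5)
The plan is to unwind the support-inclusion criterion of Lemma~\ref{lem:defining-set-support}, which is stated for the general defining-set code $\C_D$, in the special case $D=D_S$. By that lemma, $\operatorname{supp}(\bc_a)\subseteq\operatorname{supp}(\bc_b)$ holds exactly when there is no $X\in D_S$ with $\inner{a}{X}=1$ and $\inner{b}{X}=0$. Equivalently, the set
\[
N(a,b)=\{X\in D_S:\ \inner{a}{X}=1,\ \inner{b}{X}=0\}
\]
is empty, i.e.\ $\#N(a,b)=0$.

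Next we count this set layer by layer. The layers $D_t$ for distinct $t$ are pairwise disjoint, since their elements have different Hamming weights. So $D_S=\biguplus_{t\in S}D_t$, and intersecting with the two inner-product conditions gives the disjoint decomposition
\[
N(a,b)=\biguplus_{t\in S}\{X\in D_t:\ \inner{a}{X}=1,\ \inner{b}{X}=0\}.
\]
Taking cardinalities yields $\#N(a,b)=\sum_{t\in S}M_t(a,b)$, using the definition of $M_t(a,b)$ (whether in the form \eqref{eq:mabb} or \eqref{eq:mab}).

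Finally, each $M_t(a,b)$ is a cardinality and hence a nonnegative integer. Therefore the sum vanishes if and only if every summand vanishes, which happens if and only if $N(a,b)=\varnothing$. This is exactly the support inclusion, giving \eqref{eq:minmab} in both directions.

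There is no real obstacle here. The only care needed is that the sum in \eqref{eq:minmab} is an ordinary integer sum, not a sum modulo $2$, so nonnegativity is what makes the equivalence work. The closed form \eqref{eq:mab} is not needed for the proof; it is only a convenient way to evaluate each term.
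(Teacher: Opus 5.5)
Your proposal is correct and follows essentially the same argument as the paper. Both reduce support inclusion to the nonexistence of $X\in D_S$ with $\langle a,X\rangle=1$ and $\langle b,X\rangle=0$, and then count such $X$ layer by layer as $\sum_{t\in S}M_t(a,b)$; your remarks on the disjointness of the layers and on the sum being an ordinary integer sum (not taken modulo $2$) spell out details the paper leaves implicit.
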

\begin{proof}
 Recall that $\bc_a = (\langle a, X \rangle)_{X \in D_S}$ and $\bc_b = (\langle b, X \rangle)_{X \in D_S}$ with $D_S= \bigcup_{t \in S} D_t$.

The condition $\mathrm{supp}(\bc_a) \subseteq \mathrm{supp}(\bc_b)$ means that, for every $X \in D_S$, $\langle a,X\rangle=1$ implies $\langle b,X\rangle=1$. Equivalently, there is no $X\in D_S$ such that
\[\langle a,X\rangle=1,\qquad \langle b,X\rangle=0.\]
Since $M_t(a,b)$ counts exactly such vectors in $D_t$, their total number in $D_S$ is \[\sum_{t\in S}M_t(a,b).\] Hence
${\operatorname{supp}}(\bc_a)\subseteq{\operatorname{supp}}(\bc_b)$ holds if and only if $\sum_{t\in S}M_t(a,b)=0$.
\end{proof}
\begin{remark}
  Proposition~\ref{pro:minma} means that the code is minimal if and only if for all ordered pairs of parameters representing distinct nonzero codewords,
\[{\sum_{t \in S} M_t(a,b) > 0.}\]
\end{remark}
\begin{proposition}\label{pro:odd}
  Let $2\nmid t$ and $3 \le t \le k-2$.  If
\begin{equation}\label{eq:3.16}\mathrm{supp}(\bc_a|_{D_t}) \subseteq \mathrm{supp}(\bc_b|_{D_t})\end{equation}
then
\begin{equation}\label{eq:3.17}
a = 0, \quad \text{or} \quad a = b, \quad \text{or} \quad b = \one.\end{equation}
Conversely, each case in~\eqref{eq:3.17} implies~\eqref{eq:3.16}.
\end{proposition}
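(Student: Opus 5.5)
The plan is to reuse the Boolean-polynomial argument from Case~2 of Theorem~\ref{thm:minimality} and to change only the step where the parity of $t$ enters. The converse is immediate. If $a=0$, then $\bc_a|_{D_t}=0$. If $a=b$, the two restrictions coincide. If $b=\one$, then $\inner{b}{x}=t\equiv1$ for every $x\in D_t$, so $\operatorname{supp}(\bc_b|_{D_t})=D_t$.

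For the forward direction, put $P(x)=\inner{a}{x}(1+\inner{b}{x})$. By Lemma~\ref{lem:defining-set-support}, $P$ vanishes on $D_t$. Its coefficients are $l_i=a_i(1+b_i)$ and $q_{ij}=a_ib_j+a_jb_i$ as in \eqref{eq:2.3}. Since $t\le k-2$, the four-set construction applies verbatim and gives \eqref{eq:9}. The triangle-sum argument then produces a constant $\gamma\in\F_2$ with $q_{ij}+q_{i\ell}+q_{j\ell}=\gamma$ for every triangle, and $q_{ij}=r_i+r_j+\gamma$.

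Next I compare $P$ on $R\cup\{i\}$ and $R\cup\{j\}$ with $|R|=t-1$. Now $t-1$ is even, so the $r$-terms drop out and we get $l_i=l_j$; write $\lambda$ for this common value. Evaluating $P$ on an arbitrary $t$-set $X$ then gives
\[
0=P(X)=t\lambda+(t-1)\sum_{i\in X}r_i+\binom t2\gamma=\lambda+\binom t2\gamma .
\]
So the only surviving constraints are $l_i\equiv\lambda$, the triangle condition with constant $\gamma$, and $\lambda=\binom t2\gamma$.

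The conclusion comes from a case analysis on the coordinate types $(a_i,b_i)$, as in the even case.

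\emph{Case $\lambda=1$.} Every coordinate has type $10$, so $a=\one$ and $b=0$. Then all $q_{ij}=0$, which forces $\gamma=0$, and this contradicts $\lambda=\binom t2\gamma$.

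\emph{Case $\lambda=0$.} No coordinate has type $10$, so the types lie in $\{00,01,11\}$. Here $q_{ij}=1$ exactly when one of $i,j$ has type $01$ and the other has type $11$. Suppose all three types occur. A triangle with one coordinate of each type has sum $1$. Since $k\ge5$, some type occurs at least twice. Two coordinates of that type together with a suitable coordinate of another type give a triangle with sum $0$: two of type $01$ or two of type $11$ with one of type $00$, or two of type $00$ with one of type $01$. This contradicts the constancy of triangle sums. Hence at most two types occur. The type sets $\{00,01\}$, $\{00,11\}$ and $\{01,11\}$ give $a=0$, $a=b$ and $b=\one$ respectively, and a single type is a subcase of these. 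This yields \eqref{eq:3.17}.

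The step I expect to need the most care is the $\lambda=0$ case analysis. There one must make sure that a triangle with sum different from $\gamma$ really exists whenever three types occur, using only $k\ge5$, which follows from $t\ge3$ and $t\le k-2$. One must also check that no additional constraint on $r$ is being overlooked.
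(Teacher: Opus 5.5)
Your proposal is correct and follows essentially the same route as the paper. Both use the quadratic polynomial $P$, the four-set relation, the constant triangle sum $\gamma$, the $(t-1)$-set comparison forcing $l_i$ to be constant, and the type analysis on $(a_i,b_i)$. The differences are cosmetic: you dismiss $\lambda=1$ through the extra relation $\lambda=\binom t2\gamma$, whereas the paper notes directly that $a=\one$, $b=0$ violates $P|_{D_t}=0$; and you exhibit two triangles with different sums, whereas the paper first derives $\gamma=0$ and then obtains a contradiction among $\rho_{00},\rho_{01},\rho_{11}$.
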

\begin{proof}
  Eq.~\eqref{eq:3.16} is equivalent to the quadratic Boolean polynomial
\[P(x) = \langle a, x\rangle(1 + \langle b , x\rangle)\]
vanishing on $D_t$. Expanding $P(x)$ yields
\[P(x) = \sum_i l_i x_i + \sum_{i<j} q_{ij} x_i x_j,\]
where
\begin{equation}\label{eq:3.18}l_i = a_i(1+b_i), \qquad q_{ij} = a_i b_j + a_j b_i.\end{equation}

Take four distinct coordinates $i,j,u,v$ and a subset $R$ of size $t-2$ disjoint from those four indices. Summing $P$ over the four $t$-subsets $R \cup \{i,j\}$, $R \cup \{i,u\}$, $R \cup \{v,j\}$, $R \cup \{v,u\}$ gives
\[q_{ij} + q_{iu} + q_{vj} + q_{vu} = 0.\]

	The same triangle-sum argument used after ~\eqref{eq:9} in the proof of
	Theorem~\ref{thm:minimality} shows that there exist
	\(\rho_1,\dots,\rho_k,\gamma\in\mathbb{F}_2\) such that
\begin{equation}\label{eq:3.20}q_{ij} = \rho_i + \rho_j + \gamma. \end{equation}

Now take distinct $i,j$ and a subset $R$ of size $t-1$ disjoint from $i,j$. Comparing $P(R \cup \{i\})$ and $P(R \cup \{j\})$, the difference of quadratic terms is \[\sum_{u \in R} (q_{iu} + q_{ju}) = (t-1)(\rho_i + \rho_j) = 0,\] since $t-1$ is even. Thus $l_i = l_j$ for all $i,j$.

If $l_i=1$ for all $i$, then ~\eqref{eq:3.18} forces $(a_i, b_i) = (1,0)$ for all $i$, i.e., $a={\one}, b=0$. But on the odd layer $D_t$, $\inner{a}{X} = 1$ and $\inner{b}{X} = 0$ for all $x \in D_t$, contradicting $P|_{D_t} = 0$. Therefore $l_i = 0$, meaning the coordinate type $(1,0)$ does not appear. The remaining types are only $(0,0)$, $(0,1)$, $(1,1)$.

If all three types appear, we first show $\gamma = 0$ in~\eqref{eq:3.20}. For two coordinates $i,j$ of the same type, take a coordinate $u$ of another type. By~\eqref{eq:3.18}, $q_{iu} = q_{ju}$, and~\eqref{eq:3.20} gives \begin{equation}\label{eq:3.21}\rho_i = \rho_j.\end{equation} Since $3\le t\le k-2$, we have $k\ge5$. If all three types are nonempty, the pigeonhole principle therefore gives two distinct coordinates $i,j$ of the same type. For this pair, $q_{ij}=0$ by~\eqref{eq:3.18}, and~\eqref{eq:3.21} yields $0=q_{ij}=\rho_i+\rho_j+\gamma=\gamma$.

Now $\rho_i$ is constant within each type, denoted $\rho_{00}, \rho_{01}, \rho_{11}$. Cross-type $q_{ij}$ values are
\[{q_{00,01} = 0, \qquad q_{00,11} = 0, \qquad q_{01,11} = 1.}\]

With $\gamma=0$, ~\eqref{eq:3.20}  gives $\rho_{00}=\rho_{01}$, $\rho_{00}=\rho_{11}$, and $\rho_{01}+\rho_{11}=1$, a contradiction. Hence at most two types appear:
\begin{itemize}
\item If only types $(0,0)$ and $(0,1)$ occur, then $a=0$; if only $(0,0)$ and $(1,1)$ occur, then $a=b$; and if only $(0,1)$ and $(1,1)$ occur, then $b=\one$.
\item The single-type cases are included in these conclusions, direct substitution verifies the converse in each case.\end{itemize}Therefore the proof is finished.
\end{proof}
We now present several complete classifications of minimality.
\begin{theorem}\label{thm:even}
	Suppose $k\ge4$, $S=S_{\mathrm{even}}$, and $S$ contains an
	even integer $t_0$ with $2\le t_0\le k-2$. Then the code
	$\C_{k,S}$ is minimal.
\end{theorem}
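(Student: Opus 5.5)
The plan is to reduce to the single-layer result, Theorem~\ref{thm:minimality}, Case~2, applied to the layer \(D_{t_0}\). Since \(S=S_{\mathrm{even}}\), the kernel computation for \(D_S\) gives \(\ker\phi_{D_S}=\langle\one\rangle\). So \(\bc_a\) and \(\bc_b\) (as codewords of \(\C_{k,S}\)) are equal iff \(a+b\in\langle\one\rangle\), and \(\bc_a=0\) iff \(a\in\langle\one\rangle\). By Lemma~\ref{lem:2.1}, the same description holds for the restricted codewords \(\bc_a|_{D_{t_0}}\in\C_{k,t_0}\), because \(t_0\) is even. Hence restriction to \(D_{t_0}\) induces an isomorphism \(\C_{k,S}\to\C_{k,t_0}\), and distinct nonzero codewords restrict to distinct nonzero codewords.

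The argument then goes as follows. Suppose \(\bc_a,\bc_b\in\C_{k,S}\) are nonzero with \(\operatorname{supp}(\bc_a)\subseteq\operatorname{supp}(\bc_b)\).

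1. By Proposition~\ref{pro:minma}, \(\sum_{t\in S}M_t(a,b)=0\). Since every \(M_t\) is a nonnegative count, this gives \(M_{t_0}(a,b)=0\). Equivalently, \(\operatorname{supp}(\bc_a|_{D_{t_0}})\subseteq\operatorname{supp}(\bc_b|_{D_{t_0}})\).

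2. The restrictions are nonzero because \(a,b\notin\langle\one\rangle\). Case~2 of Theorem~\ref{thm:minimality} applies since \(t_0\) is even with \(2\le t_0\le k-2\). It yields \(a=\alpha\one+\beta b\), that is, \(a+\langle\one\rangle\in\{\langle\one\rangle,\,b+\langle\one\rangle\}\).

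3. The first option contradicts \(\bc_a\neq 0\). The second gives \(\bc_a=\bc_b\) in \(\C_{k,S}\), because the kernel is \(\langle\one\rangle\).

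Therefore no proper support containment exists, and \(\C_{k,S}\) is minimal. Over \(\F_2\) the only scalar is \(1\), so this is exactly minimality.

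The only delicate point is bookkeeping: Case~2 is phrased as a contradiction for distinct nonzero codewords of \(\C_{k,t_0}\). So I would re-use its conclusion \eqref{eq:12} directly, or else invoke its statement after checking that the restrictions are distinct and nonzero. That check is exactly the kernel comparison above. Case~2 also assumes \(k\ge4\) when it excludes \(\gamma=1\), and this is guaranteed by the hypothesis \(k\ge4\). No further calculation is needed.
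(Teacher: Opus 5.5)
Your proposal is correct and follows essentially the same route as the paper. You restrict the support inclusion to the even internal layer $D_{t_0}$ and invoke the single-layer result (Case~2 of Theorem~\ref{thm:minimality}) to get $a+b\in\langle\one\rangle$; then $\ker\phi_{D_S}=\langle\one\rangle$ (all layers even) forces $\bc_a=\bc_b$ on all of $D_S$. Passing through Proposition~\ref{pro:minma} and reusing \eqref{eq:12} directly are only cosmetic variations of the paper's direct restriction and its appeal to the minimality of $\C_{k,t_0}$.
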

\begin{proof}
	Suppose two nonzero codewords satisfy
	\begin{equation}\label{eq:7}\mathrm{supp}(\bc_a) \subseteq \mathrm{supp}(\bc_b).\end{equation}
	
	Since the kernel of the union set code is $\langle \one \rangle$, we have $a, b \notin \langle \one \rangle$. For any $t \in S$, the kernel of the single-layer parameter is also $\langle \one \rangle$, hence
	\[{\bc_a|_{D_t} \neq 0, \qquad \bc_b|_{D_t} \neq 0.}\]
	
Choose the internal even layer $t_0$ from the hypothesis. Restricting ~\eqref{eq:7} to $D_{t_0}$ yields
\[\mathrm{supp}(\bc_a|_{D_{t_0}}) \subseteq \mathrm{supp}(\bc_b|_{D_{t_0}}).\]
	
	By Theorem~\ref{thm:minimality}, $\C_{k,t_0}$ is minimal. Therefore, the two restricted codewords must be equal, which implies
	\[a+b\in\ker\phi_{D_{t_0}}=\langle\one\rangle.\]
	Since every layer indexed by $S$ is even, adding $\one$ to a
	parameter does not change its evaluation on any $D_t$, $t\in S$.
	Consequently, $\bc_a = \bc_b$ holds on the entire $D_S$. This excludes any proper support containment between distinct nonzero codewords, hence $\C_{k,S}$ is minimal.
\end{proof}
\begin{theorem}\label{thm:classfic}
	Let $k\ge4$ and $\varnothing\neq S\subseteq\{1,2,\ldots,k-1\}$. The code $\C_{k,S}$ is minimal if and only if one of the following two conditions holds:
	\begin{enumerate}[label=\textup{(\arabic*)}]
		\item  $S$ contains some even integer $t_0$, $2\le t_0\le k-2$.
		\item  $k$ is odd, $k-1 \in S$, and $S \cap \{3,5,\dots,k-2\} \neq \varnothing$.
	\end{enumerate}
\end{theorem}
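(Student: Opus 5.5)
The plan is to prove sufficiency of \textup{(1)} and \textup{(2)} from the single-layer results, namely Theorem~\ref{thm:minimality}, Proposition~\ref{pro:odd} and the explicit description of $D_{k-1}$. Necessity will then follow from explicit support inclusions. Throughout, I use Proposition~\ref{pro:minma}: support containment on $D_S$ implies support containment on every layer $D_t$ with $t\in S$. I also use that $\ker\phi_{D_S}$ is $\{0\}$ or $\langle\one\rangle$, according to whether $S_{\rm odd}$ is nonempty. So fix $a,b$ with $\bc_a,\bc_b$ nonzero, $\bc_a\ne\bc_b$ and $\operatorname{supp}(\bc_a)\subseteq\operatorname{supp}(\bc_b)$, and aim for a contradiction.

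\emph{Sufficiency of \textup{(1)}.} If $S=S_{\rm even}$, this is Theorem~\ref{thm:even}, so assume $S_{\rm odd}\neq\varnothing$ and the kernel is trivial. Restrict to $D_{t_0}$. By Theorem~\ref{thm:minimality}, either $\bc_a|_{D_{t_0}}=0$, or the two restrictions coincide. In the first case $a\in\{0,\one\}$, hence $a=\one$. In the second case $a+b\in\{0,\one\}$, hence $a=b+\one$. Both cases reduce to a single auxiliary claim: if $t\in S_{\rm odd}$ and $\langle b,X\rangle=1$ for all $X\in D_t$, then $b=\one$.
\begin{itemize}
\item For $a=\one$: $\bc_{\one}$ is identically $1$ on each odd layer, so containment gives exactly this hypothesis for $b$.
\item For $a=b+\one$: on an odd layer $\bc_{b+\one}$ is the complement of $\bc_b$, so containment again says $\langle b,X\rangle=1$ on all of $D_t$.
\end{itemize}
I would verify the claim in three subcases.
\begin{itemize}
\item $t=1$: the condition reads $b_i=1$ for all $i$.
\item $t=k-1$ with $k$ even: the condition reads $\operatorname{wt}(b)+b_i=1$ for all $i$, which forces $b$ constant and nonzero.
\item $3\le t\le k-2$: apply Proposition~\ref{pro:odd} to the pair $(\one,b)$, which leaves $b=\one$.
\end{itemize}
Then $a=\one=b$ in the first case and $a=0$ in the second, each a contradiction. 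I expect this bookkeeping of the degenerate cases $a\in\langle\one\rangle$, where the internal-even-layer minimality gives no information, to be the main point requiring care.

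\emph{Sufficiency of \textup{(2)}.} Here the kernel is trivial. Pick an odd $t\in S\cap\{3,\dots,k-2\}$; Proposition~\ref{pro:odd} gives $a=0$, $a=b$, or $b=\one$. Only $b=\one$ needs attention. On $D_{k-1}$ the coordinates of $\bc_{\one}$ are $\langle\one,\one+e_i\rangle=k-1\equiv0$, since $k$ is odd. Hence $\bc_a$ must vanish on $D_{k-1}$. Since $\ker\phi_{D_{k-1}}=\langle\one\rangle$, this forces $a\in\{0,\one\}$, so $a=\one=b$, a contradiction.

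\emph{Necessity.} Suppose neither \textup{(1)} nor \textup{(2)} holds. Then $S\cap\{2,4,\dots,k-2\}=\varnothing$, and either $S$ contains only odd integers, or $k$ is odd, $k-1\in S$ and $S\subseteq\{1,k-1\}$.
\begin{itemize}
\item If $S$ contains only odd integers: $\bc_{\one}$ is the all-one word, and $\bc_{e_1}$ is a distinct nonzero codeword (trivial kernel, $k\ge4$) with $\operatorname{supp}(\bc_{e_1})\subseteq\operatorname{supp}(\bc_{\one})$.
\item If $S=\{k-1\}$: this is Case~3 of Theorem~\ref{thm:minimality} with $k\ge5$.
\item If $S=\{1,k-1\}$: the coordinates of $\bc_a$ are $a_i$ and $\operatorname{wt}(a)+a_i$. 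Take $a=e_1+e_2$ and $b=e_1+e_2+e_3+e_4$; both have even weight, so $\bc_a=(a,a)$ and $\bc_b=(b,b)$. This gives a proper support inclusion, so the code is not minimal.
\end{itemize}
This completes the classification.
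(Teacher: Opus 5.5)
Your proof is correct and follows the paper's argument. Sufficiency of (1) and (2) comes from restricting to an internal even layer, from Proposition~\ref{pro:odd}, and from the vanishing of $\bc_{\one}$ on $D_{k-1}$; necessity comes from $\operatorname{supp}(\bc_{e_1})\subsetneq\operatorname{supp}(\bc_{\one})$, Case~3 of Theorem~\ref{thm:minimality}, and an explicit pair of even-weight parameters for $S=\{1,k-1\}$. Your split on whether $\bc_a|_{D_{t_0}}$ vanishes organizes the degenerate cases a little more cleanly than the paper does, and your auxiliary claim also follows in one line without the three subcases: for odd $t$ one has $\langle b+\one,X\rangle=\langle b,X\rangle+1=0$ on $D_t$, so $b+\one\in\ker\phi_{D_t}=\{0\}$ by Lemma~\ref{lem:2.1}.
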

\begin{proof}
	(1)
	Choose an even \(t_0\in S\) with
	\(2\le t_0\le k-2\). If \(S\) contains only even integers,
	Theorem~\ref{thm:even} applies. Otherwise \(S\) contains an odd
	layer and the union evaluation map is injective. Let
	\(\bc_a,\bc_b\) be nonzero codewords satisfying
	\begin{equation}\label{eq:s}
		{\operatorname{supp}}(\bc_a)\subseteq{\operatorname{supp}}(\bc_b).
	\end{equation}
Suppose that
	\(a,b\notin
	\langle{\one}\rangle\).
	Then both restricted codewords on \(D_{t_0}\) are nonzero.
	Restricting~\eqref{eq:s} to \(D_{t_0}\) and applying the minimality of
	\(\C_{k,t_0}\) from Theorem~\ref{thm:minimality} gives
\begingroup
 \[a=b\ \text{ or }\ a=b+{\one}.\] The latter yields
\[\bc_a|_{D_t}=\bc_b|_{D_t}+\mathbf{1}_{D_t}\]
 on every odd layer $D_t$. If two binary vectors are complementary and
 the former's support is contained in the latter's, the former must be
 zero. Hence
 \[\bc_a|_{D_t}=\mathbf{0}_{D_t}\implies a\in\ker\phi_{D_t}.\]
\endgroup
Since the parameter map on every odd layer is injective, this forces
\(a=\mathbf{0}\), contradicting the assumption that
\(\bc_{\boldsymbol{a}}\) is nonzero.

If $a=\one$, it is the all-one codeword on odd layer $D_t$, so ~\eqref{eq:s} forces $b={\one}$. If $b=\one$, it vanishes on even internal layers, so ~\eqref{eq:s} forces $a \in \langle {\one} \rangle$, leaving only $a=b={\one}$. Thus the assertion (1) is sufficient.
\smallskip

(2) Now let $k$ be odd and choose an odd $r\in S$ with $3\le r\le k-2$, while $k-1\in S$. If~\eqref{eq:s} holds, restrict it to $D_r$. By Proposition~\ref{pro:odd}, the only possibility for distinct nonzero parameters is $b={\one}$. But $\bc_{{\one}}$ is zero on the even boundary layer $D_{k-1}$, so restricting~\eqref{eq:s} to $D_{k-1}$ forces $\bc_a$ to vanish there.
By Lemma~\ref{lem:2.1}, $a \in \langle {\one} \rangle$. Since $a \neq 0$, we get $a=b={\one}$, a contradiction.
Hence the code defined by \(D_r\cup D_{k-1}\) is minimal.

Adding more layers preserves this conclusion: the evaluation map on
	\(D_r\cup D_{k-1}\) is injective, and any support containment on a
	larger union restricts to the coordinates indexed by this set,
	forcing equality of the parameters. Thus condition~(2) is sufficient.
\smallskip

Finally, we prove necessity. Suppose $S$ contains no even internal layer.

If $S$ consists solely of odd layers, $\bc_{{\one}}$ is the all-one word, whereas $\bc_{e_1}$ is nonzero and is not the all-one codeword. Hence,
\[\mathrm{supp}(\bc_{e_1}) \subsetneq \mathrm{supp}(\bc_{{\one}}).\]

If condition~(2) also fails, it remains to consider a set containing the even boundary layer $k-1$ but no odd internal layer. This can occur only when $k$ is odd, and the only possibilities are $S=\{k-1\}$ and $S=\{1,k-1\}$.

For $S=\{k-1\}$, non-minimality follows from the boundary classification in Theorem~\ref{thm:minimality}.

For $S=\{1,k-1\}$, choose parameters with coordinate-type counts \((p,q,r,h)=(1,2,0,k-3)\), for example
\[\mathrm{supp}(a)=\{4,\dots,k\},\qquad \mathrm{supp}(b)=\{2,\dots,k\}.\]
Both parameters have even weight, and direct evaluation on $D_1$ and $D_{k-1}$ gives
\[\mathrm{supp}(\bc_a) \subsetneq \mathrm{supp}(\bc_b).\]
These cases exhaust all layer sets satisfying neither condition~(1) nor condition~(2), and complete the classification.
\end{proof}
\begin{remark}

The preceding theorem should be compared with three different uses of
simplicial complexes.  The code of Chang and Hyun~\cite{Chang-Hyun} is
a Boolean-function code on the full punctured ambient space; the codes
of Hyun et al.~\cite{Hyun-Lee-lee} use complements
\(\Delta^c\); and Wu and Lee~\cite{Wu-lee} use differences of nested
simplicial complexes.  Consequently, defining sets associated with
simplicial complexes need not themselves be downward closed.

For the present permutation-invariant sets the exact boundary is as
follows.  The set \(D_S\cup\{0\}\) is a simplicial complex if and only
if \(S=\{1,\ldots,r\}\) for some \(r\).  Moreover, \(D_S\) is the
difference of two simplicial complexes if and only if \(S\) is a
consecutive interval.  Indeed, the difference of two nested down-sets
is order-convex in the Boolean lattice, whereas a rank-selected set is
order-convex precisely when its selected ranks form an interval.  A
nonconsecutive \(S\) is therefore neither one simplicial complex nor
one such difference, although it is the disjoint union of the
skeleton differences displayed at the beginning of this section.

Accordingly, the novelty of Theorem~\ref{thm:classfic} is not a new
defining-set mechanism.  It is the uniform if-and-only-if minimality
classification for all permutation-invariant rank-selected sets,
including those with gaps that cannot be represented by one
simplicial complex or by one difference of nested simplicial
complexes.

\end{remark}
\subsection{Dual distance and self-orthogonality}
\begin{definition}
Let \(\varnothing\ne S\subseteq\{1,2,\ldots,k-1\}\), let \(D_S=\bigcup_{t\in S}D_t\), and let \(G_{\mathrm{nat}}\) be the natural generator
	matrix whose columns are indexed by \(D_S\).
 For $t,u,v\in S$, we say that the layer type $(t,u,v)$ is feasible if there exist three distinct columns $x,y,z\in D_S$ of respective weights $t,u,v$ such that
\begin{equation}\label{eq:28}x+y+z=0, \end{equation}
equivalently, there exist subsets $A,B,C\subseteq[k]$ of sizes $t,u,v$, respectively, such that $C=A\triangle B$.\end{definition}
\begin{lemma}
	A layer type $(t,u,v)$ with $1\le t\le u\le v\le k-1$ admits three distinct columns satisfying ~\eqref{eq:28} if and only if
	\begin{equation}\label{eq:29}j=\frac{t+u-v}{2} \end{equation}
	is an integer and
	\begin{equation}\label{eq:30}\max(0,\,t+u-k)\;\le\;j\;\le\;\min(t,u). \end{equation}
\end{lemma}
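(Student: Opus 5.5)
The plan is to reduce the column relation~\eqref{eq:28} to a statement about two subsets and their intersection size, exactly as in the proof of Proposition~\ref{prop:dual-3}. Write $x=1_A$, $y=1_B$, $z=1_C$ with $|A|=t$, $|B|=u$, $|C|=v$. Since $\operatorname{supp}(u+v)=\operatorname{supp}(u)\triangle\operatorname{supp}(v)$, relation~\eqref{eq:28} holds if and only if $C=A\triangle B$. So feasibility of $(t,u,v)$ is equivalent to the existence of $A,B\subseteq[k]$ with $|A|=t$, $|B|=u$ and $|A\triangle B|=v$, together with distinctness of the three columns.

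\textbf{Distinctness.} I would first show that distinctness costs nothing. Every column lies in $D_S\subseteq\F_2^k\setminus\{0\}$, so all three are nonzero. If $x=y$, then $z=x+y=0$, which is impossible; likewise $x=z$ forces $y=0$, and $y=z$ forces $x=0$. Hence any solution of $x+y+z=0$ with columns in $D_S$ automatically consists of three distinct columns, and only the set-size conditions remain.

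\textbf{Necessity.} Put $j=|A\cap B|$. Then $|A\triangle B|=t+u-2j$, as in~\eqref{eq:2.6}, so $v=t+u-2j$ and $j=(t+u-v)/2$ must be an integer, which is~\eqref{eq:29}. Trivially $0\le j\le\min(t,u)$. Moreover $|A\cup B|=t+u-j\le k$, which gives $j\ge t+u-k$. Together these give~\eqref{eq:30}.

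\textbf{Sufficiency.} Given an integer $j$ satisfying~\eqref{eq:30}, take
\[
A=\{1,\dots,t\},\qquad B=\{t-j+1,\dots,t-j+u\}.
\]
Because $j\le t$, the set $B$ starts at a positive index. Because $t+u-j\le k$, the set $B$ lies inside $[k]$. Its overlap with $A$ is $\{t-j+1,\dots,t\}$, so $|A\cap B|=j$, which is a valid count since $0\le j\le u$. Then $C=A\triangle B$ has size $t+u-2j=v$. Since $1\le v\le k-1$ and $v\in S$, the vector $1_C$ is a column of $G_{\mathrm{nat}}$, as are $1_A$ and $1_B$ because $t,u\in S$. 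Distinctness then follows from the preliminary observation.

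There is no real obstacle here; the only point needing care is that distinctness is automatic and that the explicit interval construction stays inside $[k]$. The latter is exactly the upper bound $t+u-j\le k$, i.e.\ the lower bound $j\ge t+u-k$ in~\eqref{eq:30}.
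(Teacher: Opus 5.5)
Your proposal is correct and follows essentially the same route as the paper's proof. Both reduce the relation to $C=A\triangle B$ and use $|A\triangle B|=t+u-2|A\cap B|$ together with the elementary intersection bounds for necessity. For sufficiency, both build $B$ from $j$ elements of $A$ and $u-j$ elements outside $A$; your interval choice is just a concrete instance of this. Distinctness is automatic in both, since any coincidence of two columns would force the third to be zero.
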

\begin{proof}
	View $x,y,z$ as indicator vectors of subsets $A,B,C\in D_S$ satisfying \[\#A=t,\quad \#B=u,\quad \#C=v.\] The equation $x+y+z=0$ in $\mathbb{F}_2$ means $C=A\triangle B$. By the inclusion--exclusion principle for symmetric difference,
	\[\#C=\#(A\triangle B)=\#A+\#B-2\#(A\cap B)=t+u-2j,\]
	where $j=\#(A\cap B)$. Hence \[v=t+u-2j,\] which rearranges to ~\eqref{eq:29}.

	 	The elementary bounds for the intersection of a \(t\)-subset and
	 	a \(u\)-subset of \([k]\) give
	 	\[
	 	\max(0,t+u-k)\le j\le\min(t,u).
	 	\]
	
Conversely, if $j$ satisfies~\eqref{eq:29} and~\eqref{eq:30}, choose a $t$-subset $A$, form a $u$-subset $B$ using $j$ elements of $A$ and $u-j$ elements outside $A$, and set $C=A\triangle B$. The three subsets are pairwise distinct, since any equality between two of them would force the third to be empty, whereas $t,u,v\ge1$. Hence, they have sizes $t,u,v$ and satisfy~\eqref{eq:28}.
\end{proof}
\begin{lemma}
	For a feasible layer type $(t,u,v)$, the number of unordered triples $\{x,y,z\}$ of distinct columns with $x+y+z=0$ is
	\begin{equation}\label{eq:31}A_3(t,u,v)=\frac{1}{\mu(t,u,v)}\binom{k}{t}\binom{t}{j}\binom{k-t}{u-j}, \end{equation}
	where $j$ is given by ~\eqref{eq:29} and
	\begin{equation}\label{eq:32}\mu(t,u,v)=
	\begin{cases}
		1, & t,u,v\ \text{pairwise distinct},\\[2pt]
		2, & \text{exactly two of }t,u,v\ \text{are equal},\\[2pt]
		6, & t=u=v.
	\end{cases} \end{equation}
For infeasible types, define $A_3(t,u,v)=0$. Then\begin{equation}\label{eq:33}
		A_3(\C_{k,S}^\perp)
	=\sum_{\substack{t\le u\le v\\ t,u,v\in S}}A_3(t,u,v). \end{equation}
\end{lemma}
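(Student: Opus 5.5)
The count will be obtained by passing to ordered triples and then correcting for symmetry. Fix a feasible type \((t,u,v)\) with \(t\le u\le v\), and let \(j\) be as in~\eqref{eq:29}.

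First, I count ordered pairs \((A,B)\) with \(\#A=t\), \(\#B=u\), and \(\#(A\cap B)=j\). Choose \(A\) in \(\binom kt\) ways. Then choose the \(j\) common elements inside \(A\) in \(\binom tj\) ways. Finally, choose the remaining \(u-j\) elements of \(B\) outside \(A\) in \(\binom{k-t}{u-j}\) ways. Setting \(C=A\triangle B\), the previous lemma gives \(\#C=t+u-2j=v\), and shows that \(A,B,C\) are pairwise distinct. Conversely, every solution of \(x+y+z=0\) with weights \((t,u,v)\) in this order arises from exactly one such pair, because \(C\) is determined by \(A\) and \(B\). So the number of ordered triples \((x,y,z)\) of columns with \(x+y+z=0\) and \(\operatorname{wt}(x)=t\), \(\operatorname{wt}(y)=u\), \(\operatorname{wt}(z)=v\) equals \(N=\binom kt\binom tj\binom{k-t}{u-j}\).

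Next I pass to unordered sets \(\{x,y,z\}\) whose weight multiset is \(\{t,u,v\}\). Since the three columns are distinct, each such set has \(3!\) labelings. Exactly \(\mu(t,u,v)\) of these send the members to positions whose prescribed weights are \(t,u,v\) respectively, namely the permutations preserving the weight sequence. This number is \(1\), \(2\), or \(6\) according as the weights are pairwise distinct, exactly two equal, or all equal. Hence the unordered count is \(N/\mu(t,u,v)\), which is~\eqref{eq:31}. For an infeasible type there are no solutions, so the convention \(A_3=0\) is consistent.

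For~\eqref{eq:33}, the dual codewords of weight three are exactly \(\varepsilon_x+\varepsilon_y+\varepsilon_z\) with \(x,y,z\) distinct columns of \(G_{\mathrm{nat}}\) summing to zero, as in Proposition~\ref{prop:dual-3}. Since the columns are distinct vectors of \(D_S\), these codewords correspond bijectively to such unordered triples. Each triple has a unique sorted weight type \(t\le u\le v\) with \(t,u,v\in S\), so summing over sorted types partitions the triples and gives~\eqref{eq:33}.

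The only delicate point is the symmetry correction when two of \(t,u,v\) coincide. One must check that the stabilizer count \(\mu\) is the same for every triple of a given type. This holds because \(\mu\) depends only on the weight multiset, and distinctness of the columns rules out any degenerate fixed labelings.
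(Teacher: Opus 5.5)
Your proof is correct and follows essentially the same route as the paper's. Both count ordered pairs $(A,B)$ with $\#(A\cap B)=j$, which determine $C=A\triangle B$, then divide by the number $\mu(t,u,v)$ of weight-preserving labelings of each unordered triple, and finally sum over sorted layer types via the bijection between weight-three dual codewords and zero-sum triples of distinct columns (your converse step implicitly uses that $\#C=v$ forces $\#(A\cap B)=(t+u-v)/2=j$, which is immediate).
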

\begin{proof}
	Choose $A$ of size $t$ in $\binom{k}{t}$ ways. To form $B$ of size $u$, choose $j$ elements from $A$ in $\binom{t}{j}$ ways and $u-j$ elements from $[k]\setminus A$ in $\binom{k-t}{u-j}$ ways. This constructs an ordered pair $(A,B)$, the set $C=A\triangle B$ is then forced and has size $v$.

		For the prescribed sizes of \(A\) and \(B\), each unordered triple  $\{A,B,C\}$
		is counted once if all three sizes differ, twice if exactly two
		sizes coincide, and six times if all three coincide.
	 Division by $\mu(t,u,v)$ yields ~\eqref{eq:31}.

	Every weight-three dual codeword corresponds uniquely to an unordered triple of distinct columns of $G_{\mathrm{nat}}$ summing to zero. Grouping these triples by their sorted layer type $(t,u,v)$ and summing over $t\le u\le v$ gives~\eqref{eq:33}.
\end{proof}
The following result can be obtained directly from Lemma~\ref{lem:4}.
\begin{lemma}
	Define
	\[{R=\sum_{t\in S}\binom{k-1}{t-1},\qquad P=\sum_{t\in S}\binom{k-2}{t-2}.}\]
	Then every row of $G_{\mathrm{nat}}$ has Hamming weight $R$, and the intersection of the supports of any two distinct rows has size $P$.
\end{lemma}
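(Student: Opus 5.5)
The plan is to compute the rows of $G_{\mathrm{nat}}$ directly and then use the fact that the column set splits as a disjoint union of layers. The rows of $G_{\mathrm{nat}}$ are indexed by the coordinates $i\in[k]$. The $i$-th row $r_i$ has a $1$ in the column indexed by $X\in D_S$ exactly when $X_i=1$, that is, when $i\in\operatorname{supp}(X)$. Since $D_S=\biguplus_{t\in S}D_t$ is a disjoint union of layers, the row $r_i$ of $G_{\mathrm{nat}}$ for $\C_{k,S}$ is the concatenation, over $t\in S$, of the $i$-th rows of the natural generator matrices of the single-layer codes $\C_{k,t}$. Both the Hamming weight and the size of a pairwise support intersection are additive over this concatenation.

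For the row weight, I apply Lemma~\ref{lem:4} to each layer. On $D_t$ the $i$-th row has weight $\binom{k-1}{t-1}$, because this counts the $t$-subsets of $[k]$ containing $i$: we fix $i$ and choose the remaining $t-1$ elements from $k-1$. Summing over $t\in S$ gives weight $R=\sum_{t\in S}\binom{k-1}{t-1}$.

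For the intersections, take distinct $i\ne j$. On $D_t$ the common support of rows $i$ and $j$ consists of the $t$-subsets containing both $i$ and $j$, and there are $\binom{k-2}{t-2}$ of these, again by Lemma~\ref{lem:4}. Summing over $t\in S$ gives $P=\sum_{t\in S}\binom{k-2}{t-2}$.

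Two small boundary points need attention. When $1\in S$, the term $\binom{k-2}{-1}$ must be read as $0$. This is consistent, since a $1$-subset cannot contain two distinct indices. Separately, the disjointness of the layers is what makes the per-layer counts add without overlap.

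There is no genuine obstacle here. The only care needed is the convention $\binom{n}{-1}=0$ for $t=1$, which Lemma~\ref{lem:4} does not explicitly cover, since it was stated for $t\ge2$ in the single-layer setting. I would therefore state that case directly by the counting argument rather than cite the lemma.
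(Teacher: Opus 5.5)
Your proof is correct and matches the paper's, which simply states that the result follows directly from Lemma~\ref{lem:4}: split $D_S$ into its disjoint layers, count on each layer the $t$-subsets containing $i$ (respectively $\{i,j\}$), and sum over $t\in S$. Your separate counting argument for $t=1$, using $\binom{k-2}{-1}=0$, is a worthwhile precision, since Lemma~\ref{lem:4} was stated in the single-layer setting where $t\ge 2$.
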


\begin{theorem}
	Let $\alpha\equiv R\pmod2$ and $\beta\equiv P\pmod2$. Then
	\[\begin{aligned}
		\C_{k,S}\text{ is even} &\iff \alpha=0, \\
		\C_{k,S}\text{ is self-orthogonal} &\iff \alpha=\beta=0, \\
		\C_{k,S}\text{ is doubly even} &\iff
			R\equiv 0\pmod 4\text{ and }\beta=0
	\end{aligned}\]
\end{theorem}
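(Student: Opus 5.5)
The plan is to work entirely with the natural generator matrix $G_{\mathrm{nat}}=(x)_{x\in D_S}$, whose rows $r_1,\dots,r_k$ span $\C_{k,S}$. By the preceding lemma, every row has weight $R$ and any two distinct rows share exactly $P$ support positions. Row $i$ is the indicator of $\{X\in D_S: i\in X\}$, of size $\sum_{t\in S}\binom{k-1}{t-1}$. The support intersection of rows $i$ and $j$ is $\{X\in D_S:\{i,j\}\subseteq X\}$, of size $\sum_{t\in S}\binom{k-2}{t-2}$. All three equivalences are then consequences of the standard weight identities for sums of rows.

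\textbf{Evenness.} For a subset $J\subseteq[k]$ with $|J|=m$, the weight of $\sum_{i\in J}r_i$ satisfies
\[
\operatorname{wt}\Bigl(\sum_{i\in J}r_i\Bigr)\equiv\sum_{i\in J}\operatorname{wt}(r_i)=mR\pmod 2 .
\]
If $\alpha=0$, every codeword is therefore even. Conversely, a single row is a codeword of weight $R$, so evenness forces $\alpha=0$. This argument holds even when $\ker\phi_{D_S}\neq\{0\}$, because every codeword is some sum of rows.

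\textbf{Self-orthogonality.} I would use Lemma~\ref{lem:even-selforth-nat}: $\C_{k,S}$ is self-orthogonal if and only if $G_{\mathrm{nat}}G_{\mathrm{nat}}^T=0$. The diagonal entries of this Gram matrix are $\operatorname{wt}(r_i)\equiv R$ and the off-diagonal entries are $|\operatorname{supp} r_i\cap\operatorname{supp} r_j|\equiv P \pmod 2$. Since $k\ge4$, both kinds of entries actually occur, so the Gram matrix vanishes exactly when $\alpha=\beta=0$.

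\textbf{Doubly even.} For necessity, each row is a codeword of weight $R$, so $4\mid R$. Moreover $\operatorname{wt}(r_i+r_j)=2R-2P$ must be divisible by $4$, which forces $P$ to be even. For sufficiency, I would use the mod-$4$ identity
\[
\operatorname{wt}\Bigl(\sum_{i\in J}r_i\Bigr)\equiv\sum_{i\in J}\operatorname{wt}(r_i)-2\sum_{i<j\in J}|\operatorname{supp} r_i\cap\operatorname{supp} r_j|\pmod 4,
\]
which gives $mR-2\binom m2P\equiv 0\pmod 4$ when $4\mid R$ and $P$ is even.

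The only step needing care is justifying this mod-$4$ identity, since the paper used it without proof for single layers. I would prove it by induction on $m$ via $\operatorname{wt}(u+v)=\operatorname{wt}(u)+\operatorname{wt}(v)-2|\operatorname{supp}u\cap\operatorname{supp}v|$. In the inductive step, the intersection of $\sum_{i\in J'}r_i$ with $r_j$ has size congruent mod $2$ to $\sum_{i\in J'}|\operatorname{supp} r_i\cap\operatorname{supp} r_j|$. That parity is all that is needed, because the intersection term carries a factor of $2$. Everything else is a direct specialization of the single-layer argument after Lemma~\ref{lem:4}, with $R$ and $P$ summed over the layers in $S$.
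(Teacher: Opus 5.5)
Your proposal is correct and follows essentially the same route as the paper: row-weight parity for evenness, the row Gram matrix (diagonal entries $\equiv R$, off-diagonal entries $\equiv P$) for self-orthogonality, and the inductive use of $\operatorname{wt}(u+v)=\operatorname{wt}(u)+\operatorname{wt}(v)-2|\operatorname{supp}(u)\cap\operatorname{supp}(v)|$ for the doubly-even case. The only minor differences are that you obtain $\beta=0$ from $\operatorname{wt}(r_i+r_j)=2R-2P$ rather than from ``doubly even implies self-orthogonal'', and that you make explicit the mod-$2$ parity of the intersection term in the induction, which the paper leaves implicit.
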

\begin{proof}
	A binary linear code is even if and only if every generating row has even weight, which here means $R\equiv0\pmod2$, or $\alpha=0$. It is self-orthogonal if and only if each row has even weight and every two distinct rows are orthogonal. Since
\[\langle r_i,r_j\rangle=|\operatorname{supp}(r_i)\cap\operatorname{supp}(r_j)|\equiv P\pmod2,\]
this is equivalent to $\alpha=\beta=0$. A doubly even binary code is self-orthogonal, so the necessary conditions are $R\equiv0\pmod4$ and $\beta=0$.
	
		Conversely, if every generating row has weight divisible by four
		and distinct rows are orthogonal, the whole row space is
		self-orthogonal. Applying
		\[
		{\operatorname{wt}}(u+v)
		=
		{\operatorname{wt}}(u)+{\operatorname{wt}}(v)
		-
		2\lvert\operatorname{supp}(u)\cap\operatorname{supp}(v)\rvert
		\]
		inductively shows that every sum of rows has weight divisible by
		four.
\end{proof}
Let $M=G_{\mathrm{nat}}G_{\mathrm{nat}}^T=(M_{ij})_{1\le i,j\le k}$. The preceding row-intersection counts give
\[
M_{ii}=R\bmod2=\alpha,\qquad
M_{ij}=P\bmod2=\beta\quad(i\ne j).
\]
Therefore $M=(\alpha+\beta)I_k+\beta J_k$, where $J_k$ is the all-one matrix.

\begin{lemma}
	Let $\kappa=\dim\ker\phi_{D_S}$ and  $M=(\alpha+\beta)I_k+\beta J_k$.
	 Then
	\begin{equation}\label{eq:38}\dim(\C_{k,S}\cap \C_{k,S}^\perp)=k-\kappa-\rank M. \end{equation}
\end{lemma}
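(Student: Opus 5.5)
This follows directly from the general hull formula recalled in the preliminaries, applied to the natural generator matrix $G_{\mathrm{nat}}$ of $\C_{k,S}$. I would use the remark on natural versus general generator matrices. Choose a full-row-rank generator matrix $G$ of $\C_{k,S}$. Then $G_{\mathrm{nat}}=AG$ with $A$ of full column rank, so
\[
\rank\bigl(G_{\mathrm{nat}}G_{\mathrm{nat}}^T\bigr)=\rank\bigl(AGG^TA^T\bigr)=\rank\bigl(GG^T\bigr).
\]
Together with $\dim\hH(\C)=\dim\C-\rank(GG^T)$ this gives
\[
\dim\hH(\C_{k,S})=\dim\C_{k,S}-\rank\bigl(G_{\mathrm{nat}}G_{\mathrm{nat}}^T\bigr).
\]

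Next I identify the two terms. The evaluation map $\phi_{D_S}\colon\F_2^k\to\F_2^{\#D_S}$ is exactly $a\mapsto aG_{\mathrm{nat}}$, since the rows of $G_{\mathrm{nat}}$ are the coordinate functionals on $D_S$. Its image is $\C_{k,S}$, so rank--nullity gives $\dim\C_{k,S}=k-\kappa$. By the dimension theorem for $\C_{k,S}$, $\kappa$ is $0$ when $S_{\rm odd}\ne\varnothing$ and $1$ otherwise, though the lemma does not need this. For the second term, the row-intersection lemma gives $M_{ii}\equiv R\equiv\alpha$ and $M_{ij}\equiv P\equiv\beta$ for $i\ne j$. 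Hence $G_{\mathrm{nat}}G_{\mathrm{nat}}^T=M=(\alpha+\beta)I_k+\beta J_k$ over $\F_2$.

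Substituting both into the hull formula yields~\eqref{eq:38}. The only step needing care is justifying the rank equality between the natural and full-rank Gram matrices. If the $A$-factorization is not simply quoted, I would argue it directly. The row space of $G_{\mathrm{nat}}$ equals that of $G$, so $A$ has rank $\dim\C_{k,S}$ and full column rank. Then $A$ has a left inverse $A^+$ with $A^+A=I$, and $A^T$ has the corresponding right inverse $(A^+)^T$. Multiplying on the left and right by these invertible-on-range factors shows $\rank(AGG^TA^T)=\rank(GG^T)$.
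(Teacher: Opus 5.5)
Your proof is correct but takes a slightly different route from the paper's. The paper never passes to a full-row-rank generator matrix. It argues directly with the natural matrix:
\begin{itemize}
\item $\bc_a=aG_{\mathrm{nat}}$ lies in the hull if and only if $aG_{\mathrm{nat}}G_{\mathrm{nat}}^T=aM=0$;
\item $\ker\phi_{D_S}\subseteq\ker M$, which is trivial;
\item hence the hull is $\phi_{D_S}(\ker M)\cong\ker M/\ker\phi_{D_S}$, of dimension $(k-\rank M)-\kappa$.
\end{itemize}

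You instead reduce to the standard full-rank formula $\dim\hH(\C)=\dim\C-\rank(GG^T)$. To do so you prove the rank invariance $\rank(AGG^TA^T)=\rank(GG^T)$ for full-column-rank $A$, which is exactly the content of the preliminary remark. You then identify $\dim\C_{k,S}=k-\kappa$ by rank--nullity for $a\mapsto aG_{\mathrm{nat}}$. The sandwich $A^+(AGG^TA^T)(A^+)^T=GG^T$ gives one inequality, and the other is the trivial bound $\rank(XYZ)\le\rank(Y)$.

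The paper's quotient argument is shorter and self-contained. It avoids choosing a basis of $\C_{k,S}$ and needs no left inverse, and the possible rank deficiency of $G_{\mathrm{nat}}$ (when $S_{\rm odd}=\varnothing$) is absorbed by the quotient by $\ker\phi_{D_S}$. Your route is more modular: it reuses the general machinery from the preliminaries and makes explicit why the $k\times k$ natural Gram matrix, rather than an $m\times m$ one, may be used.

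Both arguments rest on the same identification $G_{\mathrm{nat}}G_{\mathrm{nat}}^T=(\alpha+\beta)I_k+\beta J_k$ from the row-weight and row-intersection counts. You justify that identification correctly.
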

\begin{proof}

		A parameter vector \(a\in\mathbb{F}_2^k\) yields the codeword
		\(\bc_a=\phi_{D_S}(a)\in\C_{k,S}\). Since
		\(\hH(\C_{k,S})
		=\C_{k,S}\cap\C_{k,S}^{\perp}\), we have
		\[
		\bc_a\in\hH(\C_{k,S})
		\iff
		\bc_a\in\C_{k,S}^{\perp}
		\iff
		\phi_{D_S}(a)\text{ is orthogonal to every row of }G_{\mathrm{nat}}.
		\]

	This orthogonality condition is $aM=0$. Moreover, $\ker\phi_{D_S}\subseteq\ker M$, so the hull is naturally isomorphic to $\ker M/\ker\phi_{D_S}$. Since $\dim\ker M=k-\rank M$ and $\dim\ker\phi_{D_S}=\kappa$, Eq.~\eqref{eq:38} follows.
\end{proof}
\begin{theorem}
	The dimension of the hull $\C_{k,S}\cap \C_{k,S}^{\perp}$ is
	\begingroup
	\[\dim(\C_{k,S}\cap \C_{k,S}^\perp)=
	\begin{cases}
		k-\kappa, & (\alpha,\beta)=(0,0),\\
		k-\kappa-1, & (\alpha,\beta)=(1,1),\\
		0, & (\alpha,\beta)=(1,0),\\
		0, & (\alpha,\beta)=(0,1),\ k\text{ even},\\
		1-\kappa, & (\alpha,\beta)=(0,1),\ k\text{ odd},
	\end{cases}\]
	\endgroup
	
\end{theorem}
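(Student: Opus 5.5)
The plan is to evaluate the hull formula \eqref{eq:38} directly. The formula gives $\dim(\C_{k,S}\cap\C_{k,S}^\perp)=k-\kappa-\rank M$ with $M=(\alpha+\beta)I_k+\beta J_k$ over $\F_2$. So everything reduces to computing $\rank M$, equivalently $\dim\ker M$, in each of the four cases for $(\alpha,\beta)$. I will also use the inclusion $\ker\phi_{D_S}\subseteq\ker M$, noted in the proof of the preceding lemma, to decide $\kappa$ in the cases where the formula would otherwise look negative.

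First the two easy cases. If $(\alpha,\beta)=(0,0)$, then $M=0$, so $\rank M=0$ and the hull has dimension $k-\kappa$. If $(\alpha,\beta)=(1,1)$, then $M=J_k$, which has rank $1$ since $k\ge4$; this gives $k-\kappa-1$.

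If $(\alpha,\beta)=(1,0)$, then $M=I_k$ has rank $k$, and the formula gives $-\kappa$. Here I will argue that $\kappa=0$. Indeed $\ker M=\{0\}$, and $\ker\phi_{D_S}\subseteq\ker M$, so $\ker\phi_{D_S}=\{0\}$. Hence the hull dimension is $0$.

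If $(\alpha,\beta)=(0,1)$, then $M=I_k+J_k$. For $x\in\F_2^k$ we have $Mx^T=x^T+\bigl(\sum_i x_i\bigr)\one^T$. Thus $x\in\ker M$ forces $x\in\langle\one\rangle$. Moreover $\one$ lies in the kernel if and only if $\one+k\one=0$, that is, if and only if $k$ is odd.
\begin{itemize}
\item For $k$ even, $\ker M=\{0\}$ and $\rank M=k$. The same inclusion argument as before forces $\kappa=0$, so the hull dimension is $0$.
\item For $k$ odd, $\ker M=\langle\one\rangle$ and $\rank M=k-1$, giving $1-\kappa$.
\end{itemize}

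The only step requiring care is the treatment of the cases where $k-\kappa-\rank M$ would formally equal $-\kappa$. There the key point is that $\kappa$ cannot be $1$. Concretely, the inclusion $\ker\phi_{D_S}\subseteq\ker M$ applies to $\one$ whenever $\kappa=1$, and $\one\notin\ker M$ when $M=I_k$, or when $M=I_k+J_k$ with $k$ even. Once this is settled, the five listed values follow.
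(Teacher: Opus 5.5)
Your proposal is correct and follows the paper's own route: identify $M$ as $0$, $J_k$, $I_k$ or $I_k+J_k$, compute its rank, and substitute into $k-\kappa-\rank M$. Your use of $\ker\phi_{D_S}\subseteq\ker M$ to force $\kappa=0$ whenever $M$ is invertible fills in a step the paper's ``substitution into \eqref{eq:38}'' leaves implicit, since the formula literally gives $-\kappa$ in those two cases.
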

\begin{proof}
	Take $M=aI_k+bJ_k$ with $a=\alpha+\beta$, $b=\beta$. Then
\begin{itemize}
  \item If $a=b=0$, then $M=0$ and $\rank M=0$.
  \item If $a=0,b=1$, i.e., $(\alpha,\beta)=(1,1)$, then $M=J_k$, and $\rank M= 1$.
  \item If $a=1,b=0$, i.e., $(\alpha,\beta)=(1,0)$, then $M=I_k$, and $\rank M=k$.
  \item If $a=1,b=1$, i.e., $(\alpha,\beta)=(0,1)$, then \(M=I_k+J_k\). If $Mx=0$, then $x=(\sum_i x_i){\one}$. Substitution shows that the kernel is zero when $k$ is even and is $\langle{\one}\rangle$ when $k$ is odd. Hence $\rank M=k$ for even $k$ and $k-1$ for odd $k$.
\end{itemize}	
Substitution into~\eqref{eq:38} completes the proof.
\end{proof}
\subsection{{Illustrative example}}
We now specialize to odd $k\ge5$ and $D_S=D_3\cup D_{k-1}$.
\begin{theorem}
  Suppose $k\ge5$ is odd and $D_S=D_3\cup D_{k-1}$. The associated binary linear code $\C_{k,S}$ is minimal and has parameters $\left[\binom k3+k,\ k,\ d_k\right]$, where
  \[
d_k=
\begin{cases}
6,&k=5,\\
17,&k=7,\\
\dfrac{k(k-1)}2,&k\ge9.
\end{cases}
\]
Moreover, if we take ${\operatorname{wt}}(a)=s$, then
\begin{equation}\label{eq:43}w_s={\operatorname{wt}}(\bc_a)=s\binom{k-s}{2}+\binom s3+
\begin{cases}
s,&s\text{ even},\\
k-s,&s\text{ odd}.
\end{cases}\end{equation}The weight enumerator is
\begin{equation}\label{eq:44}A_{\C_{k,S}}(z)=\sum_{s=0}^k\binom ks z^{w_s}.\end{equation}
\end{theorem}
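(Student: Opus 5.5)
Minimality and the length/dimension come directly from earlier results, so the real work is the weight formula and the minimum over \(s\). For minimality, Theorem~\ref{thm:classfic}(2) applies with \(r=3\): \(k\ge5\) is odd, \(k-1\in S\), and \(3\in\{3,5,\ldots,k-2\}\). The length is \(\#D_3+\#D_{k-1}=\binom k3+k\). Since \(3\in S_{\rm odd}\), the dimension theorem for \(\C_{k,S}\) gives \(\ker\phi_{D_S}=\{0\}\) and \(\dim\C_{k,S}=k\). The same result gives \(\#\ker\phi_{D_S}=1\), so the weight enumerator~\eqref{eq:44} follows once~\eqref{eq:43} is established.

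For~\eqref{eq:43}, I split the weight into the contributions of the two layers. On \(D_3\), the weight formula~\eqref{eq:weight} with \(t=3\) keeps the odd values \(j=1,3\), giving \(s\binom{k-s}{2}+\binom s3\). On \(D_{k-1}\), the columns are \(\one+e_i\), so \(\langle a,\one+e_i\rangle=s+a_i\pmod 2\). If \(s\) is even, this is \(1\) exactly for the \(s\) indices with \(a_i=1\). If \(s\) is odd, it is \(1\) exactly for the \(k-s\) indices with \(a_i=0\). Adding the two contributions gives~\eqref{eq:43}.

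The minimum distance is the main obstacle. By~\eqref{eq:gen-code1} with \(I_S=\{1,\ldots,k\}\), we need \(d_k=\min_{1\le s\le k}w_s\).
\begin{itemize}
\item For \(k=5,7\), I tabulate \(w_s\) directly. For \(k=5\) the values are \(w_1=\binom42+4=10\), \(w_2=2\cdot3+0+2=8\), \(w_3=3+1+2=6\), \(w_4=4+4=8\), \(w_5=10+0=10\), so \(d_5=6\). For \(k=7\), I expect the minimum \(17\) at \(s=7\), where \(w_7=\binom73+0=35\) would not give it; the table must be computed carefully. The candidates are \(w_6=6+20+6=32\) and \(w_1=\binom62+6=21\), and the statement predicts one value equal to \(17\), so the full table decides.
\item For \(k\ge9\), I claim the minimum is at \(s=k\): \(w_k=\binom k3+0\) is too large, so the minimizer is more likely \(s=k-1\) (even), where \(w_{k-1}=(k-1)\cdot0+\binom{k-1}3+(k-1)\), or \(s=1\), where \(w_1=\binom{k-1}2+(k-1)=\frac{k(k-1)}2\).
\end{itemize}
The target \(k(k-1)/2\) matches \(s=1\). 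So I will show \(w_s\ge w_1\) for all \(s\ge2\) when \(k\ge9\).

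To prove this, I will view \(f(s)=s\binom{k-s}2+\binom s3\) as a cubic in \(s\). I expect it to be concave then convex, with endpoint values \(f(1)=\binom{k-1}2\) and \(f(k)=\binom k3\). I will show that \(f(s)-f(1)\) is bounded below by roughly \(c(s-1)k^2\) on the low range and stays large near \(s=k\). That margin absorbs the correction term, which is at most \(k\). I will check the endpoint region \(s\in\{k-2,k-1\}\) separately, since there \(f\) is about \(\binom{k-1}3\gg k^2/2\) for \(k\ge9\). For the small cases, the thresholds explain why \(k=5,7\) are exceptional: some interior \(s\) near \(k-2\) or \(k-1\) gives \(\binom s3\) small relative to \(k^2/2\).
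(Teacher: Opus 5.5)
Your treatment of minimality (Theorem~\ref{thm:classfic}(2) with $r=3$), length, dimension, the layer-by-layer weight count and the enumerator matches the paper and is correct. The gap is the minimum distance, which is the only nontrivial claim, and the proposal does not establish it.

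For $k=7$ you never produce the table, and the one interior value you compute is wrong: $w_6=6\binom12+\binom63+6=26$, not $32$, since $\binom12=0$. The full sequence is $(w_0,\dots,w_7)=(0,21,22,23,20,17,26,35)$. The value $17$ comes from $s=5$, where $w_5=5\binom22+\binom53+2=17$. At $s=5$ the $D_3$-parts tie, $f(5)=f(1)=15$, and the boundary contributions $2<6$ decide. That is what makes $k=7$ exceptional.

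For $k\ge9$, ``roughly $c(s-1)k^2$'' is not an argument, and no bound of that shape with a constant valid at $k=9$ is strong enough. Exactly,
\[
6\bigl(f(s)-f(1)\bigr)=(s-1)\,h(s),\qquad h(s)=4s^2+(4-6k)s+3k^2-9k+6,
\]
and $\min_{s\in\mathbb R}h(s)=(3k^2-24k+20)/4$, attained at $s=(3k-2)/4$. For $k=9$ this minimum is $47/4$, so the guaranteed margin is only about $2(s-1)$. At $s=2$ that is below the deficit $(k-1)-b_2=k-3=6$ you must absorb. The true margins $w_s-w_1$ at $k=9$ are only $8$, at $s=2,5,6,7$.

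The missing step is a parity split: the deficit $(k-1)-b_s$ equals $s-1$ for odd $s$ and $k-1-s$ for even $s$.
\begin{itemize}
\item For odd $s$, $6(w_s-w_1)=(s-1)\bigl(4s^2+(4-6k)s+3k(k-3)\bigr)$. The quadratic factor has minimum $(3k^2-24k-4)/4>0$ for $k\ge9$; this is exactly the paper's identity for odd $s$.
\item For even $s\ge4$, the crude bound $(s-1)(3k^2-24k+20)/24\ge k-1-s$ suffices for $k\ge9$. It is tightest at $s=4$, where it reduces to $3k^2-32k+60\ge0$.
\item For $s=2$, use the exact value $h(2)=3k^2-21k+30$, which gives $6(w_2-w_1)=3(k^2-9k+16)>0$.
\end{itemize}
The paper handles even $s$ differently, via $6(w_s-w_1)=s\bigl((s-3)^2+2\bigr)+3r\bigl(r(s-1)-3s+1\bigr)$ with $r=k-s$ and the cases $r=1$, $r=3$, $r\ge5$. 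With these explicit estimates your route becomes a complete proof, essentially equivalent to the paper's. Without them, the value of $d_k$ is asserted rather than proved.
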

\begin{proof}
The minimality follows from Theorem~\ref{thm:classfic}.
For \(X\in D_3\), the corresponding evaluation is one precisely
when the integer
\(\lvert\operatorname{supp}(a)\cap {\operatorname{supp}}(X)\rvert\) equals one or three.
Thus, the weight is
\[s\binom{k-s}{2}+\binom{s}{3}.\]
Note that $D_{k-1}=\{\one+e_i:\ 1\le i\le k\}$, the inner product satisfies
\[\langle a,{\one}+e_i\rangle
=(s\bmod2)+a_i.
\] Consequently, the boundary weight is $s$ when $s$ is even, and $k-s$ when $s$ is odd, yielding ~\eqref{eq:43} and ~\eqref{eq:44}.

Take $L_s=s\binom{k-s}{2}+\binom s3$. Since $D_3$ is an odd layer, \[
L_{k-s}=\binom k3-L_s.
\]
Write \(b_s=s\) for even \(s\) and \(=k-s\) for odd \(s\).
We verify that \(L_{k-s}\ge b_s\). For even
\(2\le s\le k-3\),
\[
L_{k-s}\ge(k-s)\binom{s}{2}\ge s.
\]
For \(s=k-1\),
\[
L_1=\binom{k-1}{2}\ge k-1.
\]
For odd \(3\le s\le k-2\),
\[
L_{k-s}\ge(k-s)\binom{s}{2}\ge k-s.
\]
For \(s=1\),
\[
L_{k-1}=\binom{k-1}{3}\ge k-1.
\]
The cases \(s=0,k\) are immediate. Thus
\[
w_s=L_s+b_s\le\binom{k}{3}.
\]

Equality holds only for $s=k$, except that it also holds for $s=1$ when $k=5$. Therefore, the maximum nonzero weight is
\begin{equation}\label{eq:45}
w_{\max} = \binom{k}{3}.
\end{equation}

For $k=5,7$, direct substitution into~\eqref{eq:43} yields the weight sequences
\begin{equation}\label{eq:5-7}
(w_0,\ldots,w_5)=(0,10,8,6,8,10),\  (w_0,\ldots,w_7)=(0,21,22,23,20,17,26,35).
\end{equation}

 For $k \ge 9$, we have \[w_1 = \binom{k-1}{2} + k - 1 = \frac{k(k-1)}{2}.\] If $s > 1$ is odd,  then \begin{equation}\label{eq:49}6(w_s-w_1)
=(s-1)\left[
4s^2+(4-6k)s+3k(k-3)
\right].\end{equation}
We have
\[
\begin{aligned}
f(s)&=4s^2+(4-6k)s+3k(k-3)\\
&\ge f_{\min}:=f\!\left(\frac{3k-2}{4}\right)
=\frac{3k^2-24k-4}{4}>0\qquad(k\ge9).
\end{aligned}
\]
Hence the right-hand side of~\eqref{eq:49} is strictly positive for $k\ge9$, so $w_s>w_1$.

If $s$ is even, set $r=k-s$. Then $r$ is a positive odd integer and
\begin{equation}\label{eq:50}6(w_s - w_1) = s\bigl((s-3)^2+2\bigr) + 3r\bigl(r(s-1)-3s+1\bigr).\end{equation}
Since $(s-3)^2+2>0$, we next divide into three cases to prove $w_s>w_1$.
\begin{itemize}
  \item If $r=1$, then $s=k-1\ge8$, and the right-hand side of~\eqref{eq:50} becomes
  \[
  s(s-1)(s-5)>0.
  \]
  \item If $r=3$, then $s=k-3\ge6$ and   \[
  6(w_s-w_1)
  =s\bigl((s-3)^2+2\bigr)-18
  \ge6(9+2)-18=48>0.
  \]
  \item If $r\ge5$, then
  \[B:=r(s-1)-3s+1=(s-1)(r-3)-2.\]When $s=2$, we have $B=r-5\ge2$ because $k=s+r\ge9$ and $r\ge7$. When $s\ge4$,
  \[
  B\ge3\cdot2-2=4.
  \]Thus $B$ is nonnegative.
\end{itemize}
Therefore $w_s>w_1$ for every even $s>1$. Together with the odd case, this proves that $w_1$ is the minimum nonzero weight for $k\ge9$.
\end{proof}
For $k=5$, the code $\C_{5,\{3,4\}}$ is equivalent to
$\C_{6,4}$ and hence does not give a new inequivalent $[15,5,6]$ code.
Indeed, deleting the last coordinate identifies the even-weight
subspace of $\mathbb F_2^6$ with $\mathbb F_2^5$, the weight-four
vectors whose last coordinate is respectively one or zero map to the
complete weight-three or weight-four layers. Thus the $k=5$ case is
retained only as the first member of the general family.

By~\eqref{eq:5-7}, for $k=5,7$, we have
\[2d_5=12>10,\qquad 2d_7=34<35.\]
For $k\ge9$,
\[
2d_k=k(k-1)\le\frac{k(k-1)(k-2)}6=w_{\max}.
\]
Hence, for every odd \(k\ge 7\), \(\C_{k,S}\) is a minimal
	binary code that violates the Ashikhmin--Barg sufficient condition.
\begin{theorem}\label{thm:mindis}
	Let \(k\ge 5\) be odd and \(D_S=D_3\cup D_{k-1}\).
 Then the dual distance is
\begin{equation}\label{eq:51}
		d(\C_{k,S}^\perp)
	 = \begin{cases}
3, & k = 5, 7, \\
4, & k \ge 9.
\end{cases} \end{equation}
The number of weight-three dual codewords is
\begin{equation}\label{eq:52}
		A_3(\C_{k,S}^\perp)
	= \begin{cases}
15, & k = 5, \\
70, & k = 7, \\
0, & k \ge 9.
\end{cases} \end{equation}
\end{theorem}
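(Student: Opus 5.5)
We will derive the result from the two lemmas on feasible layer types and on \(A_3(t,u,v)\), with \(S=\{3,k-1\}\). First, the columns of \(G_{\mathrm{nat}}\) are the distinct nonzero vectors of \(D_3\cup D_{k-1}\); the two layers are disjoint because \(3\ne k-1\) for \(k\ge5\). Hence no dual codeword has weight one or two, and \(d(\C_{k,S}^\perp)\ge3\). The next step is to list the sorted layer types \((t,u,v)\) with \(t\le u\le v\) and entries in \(\{3,k-1\}\): \((3,3,3)\), \((3,3,k-1)\), \((3,k-1,k-1)\) and \((k-1,k-1,k-1)\). For each we test~\eqref{eq:29}–\eqref{eq:30}.

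\textbf{Parity exclusions.} In \((3,3,3)\), \(j=3/2\) is not an integer. In \((k-1,k-1,k-1)\), \(j=(k-1)/2\), but the lower bound \(j\ge 2(k-1)-k=k-2\) fails for \(k\ge5\). In \((3,k-1,k-1)\), \(j=3/2\), again not an integer. This agrees with the parity obstruction: a weight-\(v\) sum of weight-\(t\) and weight-\(u\) vectors forces \(t+u\equiv v\pmod 2\). So only \((3,3,k-1)\) can contribute.

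\textbf{The mixed type \((3,3,k-1)\).} Here \(j=(7-k)/2\). The constraints \(0\le j\le 3\) and \(j\ge 6-k\) hold exactly for odd \(k\le7\). For \(k=5\) we get \(j=1\); for \(k=7\) we get \(j=0\). For \(k\ge9\), \(j<0\), so the type is infeasible. Plugging into~\eqref{eq:31} with \(\mu=2\):
\begin{itemize}
\item for \(k=5\), \(\tfrac12\binom53\binom31\binom22=15\);
\item for \(k=7\), \(\tfrac12\binom73\binom30\binom43=70\).
\end{itemize}
By~\eqref{eq:33} this gives~\eqref{eq:52}, and with it \(d=3\) for \(k=5,7\).

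\textbf{Weight four for \(k\ge9\).} Since \(A_3=0\), it remains to exhibit a weight-four dual codeword. Reuse the construction~\eqref{eq:2.8} with \(t=3\): take \(R=\{r\}\) and four distinct indices \(i,j,p,q\) outside \(R\), which needs \(k\ge5\). This gives four distinct weight-three columns of \(D_3\) summing to zero, hence a weight-four codeword of \(\C_{k,S}^\perp\). Therefore \(d(\C_{k,S}^\perp)=4\) for \(k\ge9\).

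The only real care point is the bookkeeping in the feasibility check: the bound \(j\ge t+u-k\) in the all-\((k-1)\) case, and the exact \(\mu\) in the count. The rest is substitution.
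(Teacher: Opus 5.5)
Your proposal is correct and follows essentially the same route as the paper: the same feasibility test over the four layer types \((3,3,3)\), \((3,3,k-1)\), \((3,k-1,k-1)\), \((k-1,k-1,k-1)\), the same counts \(15\) and \(70\) from \eqref{eq:31} with \(\mu=2\), and a four-column ``rectangle'' relation inside \(D_3\) for \(k\ge9\). Your instance of \eqref{eq:2.8} with \(|R|=1\) is the same configuration as the paper's \(\{R,R'\}\times\{i,j\}\). You also state explicitly that \(D_3\) and \(D_{k-1}\) are disjoint, which justifies \(d(\C_{k,S}^\perp)\ge3\); the paper uses this only implicitly.
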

\begin{proof}
  Since $S=\{3,k-1\}$, the layer types of three columns can only arise from
  \[
(3,3,3),\ (3,3,k-1),\ (3,k-1,k-1),\ (k-1,k-1,k-1).
\]The types $(3,3,3)$ and $(3,k-1,k-1)$ have total weights $9$ and $2k+1$, respectively. Both are odd and are therefore excluded by the integrality condition~\eqref{eq:29}.

For $(k-1,k-1,k-1)$, although the total weight is even and \[
j=\frac{(k-1)+(k-1)-(k-1)}2=\frac{k-1}{2}
\] is integral, the lower bound ~\eqref{eq:30} requires \[
j\ge (k-1)+(k-1)-k=k-2,
\] which fails for $k \ge 5$ since $(k-1)/2 < k-2$, thus this type is impossible.

Only $(3,3,k-1)$ remains, yielding intersection size \[
j=\frac{7-k}{2}.
\] Since $0 \le j \le 3$, we obtain
\begin{itemize}
\item $k=5, j=1$. We have $\frac{1}{2}\binom{5}{3}\binom{3}{1}\binom{2}{2} = 15$,
\item $k=7, j=0$. We have $\frac{1}{2}\binom{7}{3}\binom{3}{0}\binom{4}{3} = 70$.
\end{itemize}
This proves~\eqref{eq:52} and gives
\(d(\C_{k,S}^{\perp})=3\) for $k=5,7$.

For $k\ge9$, no three columns of the natural generator matrix satisfy~\eqref{eq:29}.
	However, fix distinct coordinates \(i,j\) and choose two distinct
	two-element subsets \(R,R'\) of \([k]\setminus\{i,j\}\). Then
 \[1_{R\cup\{i\}}+1_{R\cup\{j\}}+1_{R'\cup\{i\}}+1_{R'\cup\{j\}}=0.\]
Since the natural generator matrix has nonzero pairwise distinct
columns, the absence of a three-column relation gives
\(d(\C_{k,S}^{\perp})\ge 4\), whereas the four displayed columns give
\(d(\C_{k,S}^{\perp})\le 4\). Therefore,
\[
d(\C_{k,S}^{\perp})=4,
\]
which proves~\eqref{eq:51}.
\end{proof}
\begin{theorem}
		Let \(k\ge 5\) be odd and \(D_S=D_3\cup D_{k-1}\).
Then every row of $G_{\mathrm{nat}}$ has Hamming weight $R$
	\[R=\binom{k-1}{2}+k-1=\frac{k(k-1)}2,\] and the intersection of the supports of any two distinct rows has size  \[P=(k-2)+(k-2)=2(k-2).\]
Consequently,
\begin{align*}\C_{k,S}& \text{ is even} \iff \C_{k,S} \text{ is self-orthogonal} \iff k \equiv 1 \pmod 4,\\
\C_{k,S}& \text{ is doubly even} \iff k \equiv 1 \pmod 8.
\end{align*}
Moreover,
\begin{equation}\label{eq:57} \dim(\C_{k,S} \cap \C_{k,S}^\perp) = \begin{cases}
k, & k \equiv 1 \pmod 4, \\
0, & k \equiv 3 \pmod 4.
\end{cases} \end{equation}
\end{theorem}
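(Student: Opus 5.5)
The plan is to specialize the general row-intersection lemma, the parity theorem and the hull formula of this section to \(S=\{3,k-1\}\) with \(k\) odd, so that \(\kappa=0\) because \(S_{\rm odd}=\{3\}\neq\varnothing\).

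First we compute \(R\) and \(P\). By the lemma preceding the parity theorem, \(R=\binom{k-1}{2}+\binom{k-1}{k-2}=\binom{k-1}{2}+(k-1)=\frac{k(k-1)}2\), and \(P=\binom{k-2}{1}+\binom{k-2}{k-3}=2(k-2)\). Hence \(\beta=0\) for every \(k\), and \(\alpha\equiv k(k-1)/2\pmod 2\).

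Next we read off the parity classification. For odd \(k\), write \(k=4m+1\) or \(k=4m+3\). In the first case \(k(k-1)/2=(4m+1)2m\) is even. In the second, \(k(k-1)/2=(4m+3)(2m+1)\) is odd. Thus \(\alpha=0\iff k\equiv1\pmod4\). Since \(\beta=0\) always, the theorem gives that evenness is equivalent to self-orthogonality, and both hold exactly when \(k\equiv1\pmod4\).

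For the doubly-even property we need \(R\equiv0\pmod4\) together with \(\beta=0\). For \(k=4m+1\), \(R=2m(4m+1)\), and since \(4m+1\) is odd, \(4\mid R\iff m\) is even, i.e. \(k\equiv1\pmod8\). For \(k\equiv3\pmod4\), \(R\) is odd, so the code is not doubly even.

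For the hull dimension we apply the case formula with \(\kappa=0\). If \(k\equiv1\pmod4\), then \((\alpha,\beta)=(0,0)\) and the hull has dimension \(k\). If \(k\equiv3\pmod4\), then \((\alpha,\beta)=(1,0)\), so \(M=I_k\) and the hull is \(0\).

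The only step needing care is checking that the general formulas really specialize as claimed. In particular, the term \(\binom{k-2}{t-2}\) for \(t=k-1\) is \(\binom{k-2}{k-3}=k-2\), which needs \(k\ge4\), and this holds since \(k\ge5\). One can cross-check directly that two rows \(i\ne j\) share \(k-2\) columns of \(D_{k-1}\), namely \(\one+e_l\) for \(l\ne i,j\). Everything else is mod-\(4\) and mod-\(8\) arithmetic.
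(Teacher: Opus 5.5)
Your proposal is correct and follows essentially the paper's route. The paper obtains $R$ and $P$ by the same direct count you give as a cross-check, and notes that $P$ is always even, so every criterion reduces to the residue of $R=k(k-1)/2$ modulo $2$ and $4$. It then identifies the hull with $\ker(G_{\mathrm{nat}}G_{\mathrm{nat}}^T)=\ker\bigl((R\bmod 2)I_k\bigr)$, using that $\phi_{D_S}$ is injective; this is exactly your $\kappa=0$ specialization of the general hull formula.
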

\begin{proof}
There are $\binom{k-1}{2}$ triples containing a fixed coordinate, $k-1$ boundary columns containing it, $k-2$ triples containing two fixed coordinates, and $k-2$ boundary columns containing both. Therefore,
\[R=\binom{k-1}{2}+k-1=\frac{k(k-1)}2,\quad P=(k-2)+(k-2)=2(k-2).\]

Since $P$ is always even, evenness and self-orthogonality depend solely on the parity of $R$. For odd $k$, this is equivalent to $k \equiv 1 \pmod 4$. Doubly even requires $R \equiv 0 \pmod 4$, i.e., $k \equiv 1 \pmod 8$.

The Gram matrix is \[G_{\mathrm{nat}}G_{\mathrm{nat}}^{T} \equiv R I_k \pmod 2.\]
	Since \(D_S\) contains \(D_3\), the evaluation map
	\[
	\phi_{D_S}:\mathbb F_2^k\longrightarrow\C_{k,S}
	\]
	is an isomorphism. The hull
	therefore corresponds to \(\ker(G_{\mathrm{nat}}G_{\mathrm{nat}}^T)\).
	Since
	\[
G_{\mathrm{nat}}G_{\mathrm{nat}}^T=(R\bmod 2)I_k,
	\]
	its kernel has dimension \(k\) when \(k\equiv1\pmod4\), and zero
	when \(k\equiv3\pmod4\). This proves~\eqref{eq:57}.
\end{proof}
In particular, the first instance $[15,5,6]$ is self-orthogonal but not doubly even, its hull dimension is $5$, and satisfies
\(d(\C_{k,S}^\perp)=3\) and \(A_3(\C_{k,S}^\perp)=15\).
\section{Graph cut-space codes and defining-set realizations}
\label{sec:graph-cut-space}
\subsection{Classical graph cut-space codes}

Throughout this section, \(G=(V(G),E(G))\) is a finite simple undirected
graph, \(v_G=|V(G)|\), \(n_G=|E(G)|\), and \(c=c(G)\) is the number of
connected components, including isolated vertices.  Let \(B\) be its
binary vertex--edge incidence matrix and
\[
\C_G=\{aB:a\in\mathbb F_2^{v_G}\}.
\]
This is the classical binary cut-space code
\cite[Section~3]{DankelmannKeyRodrigues2013}.  For a connected
graph, the dimension formula is stated in
\cite[Result~1 and Theorem~1]
{DankelmannKeyRodrigues2013}.  Moreover, the proof of that theorem gives
the standard support description
\[
\operatorname{supp}(aB)=\delta_G(\operatorname{supp}(a)),
\qquad
\operatorname{wt}(aB)=\#\delta_G(\operatorname{supp}(a)).
\]
Applying the connected dimension formula componentwise gives
\begin{equation}\label{eq:graph-classical}
\dim\C_G=v_G-c
\end{equation}
for an arbitrary graph with \(c\) connected components.  Since every
codeword has \(2^c\) preimages under \(a\mapsto aB\), it follows that
\[
A_w(\C_G)=2^{-c}\#\{S\subseteq V(G):\#\delta_G(S)=w\}.
\]
By the classical cut--cycle orthogonality theorem
\cite[Theorem~1.9.5, Section~1.9]{Diestel2017}, the dual
\(\C_G^\perp\) is the cycle space, see also
\cite[Section~8]{DankelmannKeyRodrigues2013}.  If \(G\) contains
a cycle, then
\[
d(\C_G^\perp)=\operatorname{girth}(G)
\]
by \cite[Theorem~6]
{DankelmannKeyRodrigues2013}.  Since a three-edge Eulerian subgraph of
a simple graph is necessarily a triangle, this cycle-space description
immediately gives
\[
A_3(\C_G^\perp)=\#\{\text{triangles of }G\}.
\]
If \(G\) is a forest, then its cycle space is zero, and hence
\(\C_G^\perp=\{0\}\).

Recall also that a minimal nonempty cut is a bond, and that in a
connected graph \(\delta_G(S)\) is a bond precisely when both sides of
the corresponding vertex partition induce connected subgraphs
\cite[Section~1.9]{Diestel2017}.  Applying this classical criterion to
every nonzero codeword of \(\C_G\) gives the following result.

\begin{theorem}\label{thm:complete-min}
Assume \(E\ne\varnothing\).  Then \(\C_G\) is minimal if and only if
\[
G\cong K_r\sqcup(v_G-r)K_1
\qquad\text{for some }r\ge2.
\]
\end{theorem}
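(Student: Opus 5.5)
We reduce minimality of \(\C_G\) to a statement about cuts, using the support description \(\operatorname{supp}(aB)=\delta_G(\operatorname{supp}(a))\). The nonzero codewords are exactly the nonempty cuts \(\delta_G(S)\). Since the code is binary, \(\C_G\) is minimal if and only if no nonempty cut properly contains another nonempty cut. Every nonempty cut contains a bond, because a minimal nonempty cut is a bond. Hence \(\C_G\) is minimal if and only if every nonempty cut of \(G\) is itself a bond. The plan is to prove both directions of Theorem~\ref{thm:complete-min} from this reformulation, using the classical criterion recalled above: in a connected graph, \(\delta(S)\) is a bond exactly when both sides of the partition induce connected subgraphs.

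\emph{Sufficiency.} Let \(G=K_r\sqcup(v_G-r)K_1\) and \(S\subseteq V(G)\). Isolated vertices contribute no edges, so \(\delta_G(S)=\delta_{K_r}(S\cap V(K_r))\). If this cut is nonempty, then \(T=S\cap V(K_r)\) is a proper nonempty subset of \(V(K_r)\). Both \(T\) and its complement in \(V(K_r)\) induce complete, hence connected, subgraphs. By the bond criterion, \(\delta(T)\) is therefore a bond, i.e.\ a minimal nonempty cut. So every nonzero codeword is minimal.

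\emph{Necessity: at most one edge-containing component.} Suppose two distinct components \(H_1,H_2\) each contain an edge. Choose a vertex \(u\in H_1\) of positive degree. Then \(\delta(V(H_1)\cup\{u\})=\delta(\{u\})\ne\varnothing\) sits inside \(H_1\). The cut \(\delta(\{u\}\cup V(H_2))=\delta(\{u\})\cup E(H_2)\)? This choice does not work, because \(\delta(V(H_2))\) is empty. The correct choice is \(\delta(\{u\}\cup\{w\})\), where \(w\in H_2\) has positive degree. This equals \(\delta(\{u\})\sqcup\delta(\{w\})\), which properly contains the nonempty cut \(\delta(\{u\})\). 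Hence \(\C_G\) is not minimal.

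\emph{Necessity: the edge-containing component is complete.} Let \(H\) be the unique edge-containing component, and suppose \(H\) is not complete. Then there are nonadjacent vertices \(x,y\in H\). Consider \(\delta(\{x,y\})=\delta(\{x\})\sqcup\delta(\{y\})\); this is a disjoint union because \(xy\notin E\). Since \(H\) is connected with at least one edge and \(x,y\in H\), both \(\delta(\{x\})\) and \(\delta(\{y\})\) are nonempty. Thus \(\delta(\{x,y\})\) properly contains the nonempty cut \(\delta(\{x\})\), contradicting minimality. Here \(r=|V(H)|\ge2\).

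The main subtlety is choosing witnesses in the necessity part so that the smaller cut is nonempty and the containment is proper. The two-vertex cuts built from vertices of positive degree handle this uniformly, so the bond criterion is needed only for sufficiency.
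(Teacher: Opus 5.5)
Your proof is correct and follows the paper's argument essentially step for step: for necessity, the two-vertex cut $\delta(\{u,v\})\supsetneq\delta(\{u\})$ with nonadjacent positive-degree vertices $u,v$, and for sufficiency, the connected-sides bond criterion applied inside $K_r$. Delete the abandoned false start in your first necessity paragraph, because $\delta(V(H_1)\cup\{u\})=\delta(V(H_1))$ is empty and not equal to $\delta(\{u\})$ as written there.
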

\begin{proof}
Suppose first that \(G\) has two edge-containing connected components.
Choose vertices \(u\) and \(v\) of positive degree in different such
components.  Since there is no edge between the two components,
\[
\delta_G(\{u\})
\subsetneq
\delta_G(\{u,v\}),
\]
so the corresponding nonzero codewords have strictly nested supports.
Thus \(\C_G\) is not minimal.

Consequently, a minimal \(\C_G\) can have only one edge-containing
component, say \(H\).  If \(H\) is not complete, choose nonadjacent
vertices \(u,v\in V(H)\).  Both vertices have positive degree because
\(H\) is connected and contains an edge.  As \(uv\notin E\),
\[
\delta_G(\{u\})
\subsetneq
\delta_G(\{u,v\}),
\]
which again contradicts minimality.  Hence $H$ must be complete.

Conversely, suppose
\(G\cong K_r\sqcup(v_G-r)K_1\).  Every nonzero cut of \(G\) is the
same edge set as \(\delta_{K_r}(S)\) for some nonempty proper subset
\(S\subset V(K_r)\).  Both \(K_r[S]\) and
\(K_r[V(K_r)\setminus S]\) are connected.  The cited classical bond
criterion therefore shows that every nonzero cut is a bond.  Hence no
nonzero word of \(\C_G\) properly contains the support of another
nonzero word, and \(\C_G\) is minimal.
\end{proof}
This is an application of the cited classical bond criterion, no new
graph-code construction is asserted here.  For the surviving class the
classical complete-graph cut calculation gives
\begin{equation}\label{eq:complete-graph-parameters}
n_G=\binom r2,\qquad
\dim\C_G=r-1,\qquad
\operatorname{wt}(\bc_S)=s(r-s)\quad
(s=\#(S\cap V(K_r))),
\end{equation}
and hence
\[
A_{\C_G}(z)=1+\frac12\sum_{s=1}^{r-1}\binom rs z^{s(r-s)},
\qquad
d(\C_G)=r-1,
\qquad
w_{\max}(\C_G)=\left\lfloor\frac{r^2}{4}\right\rfloor.
\]
The contribution here is the comparison with the binary
Ashikhmin--Barg condition.  We have
\[
\frac{w_{\min}}{w_{\max}}
=\frac{r-1}{\lfloor r^2/4\rfloor}.
\]
This ratio is greater than \(1/2\) for \(2\le r\le6\), equals
\(1/2\) for \(r=7\), and is less than \(1/2\) for \(r\ge8\).
Thus the sufficient condition fails exactly for \(r\ge7\), and the
failure is strict for \(r\ge8\).

The following parity properties are standard consequences of the
incidence-matrix description.  The cut code is even if and only if every
vertex of \(G\) has even degree, hence \(\C_{K_r}\) is even exactly when
\(r\) is odd.

 Moreover, the self-orthogonality criterion in
\cite[Remark~3.1]{MallikYildiz2024} requires
\(\#(E[u]\cap E[v])\) to be even for every pair of distinct
vertices.  If \(uv\) is an edge of a nonempty simple graph, then
\[E[u]\cap E[v]=\{uv\}.\]  Therefore, no nonempty simple-graph cut code is
self-orthogonal. Since every binary doubly-even code is
self-orthogonal, no such cut code is doubly even.  For complete graphs,
the same non-self-orthogonality also follows from
\cite[Lemma~18]{ZhouLiTangDing2019} under the identification
\(\C_{K_r}\simeq\C_{D_2}\).

Finally, the complete-graph hull dimension is already known
\cite[Proposition~2]{GhinelliKeyMcDonough2014}:
\[
\dim\hH(\C_{K_r})=
\begin{cases}
0,&r\text{ odd},\\
r-2,&r\text{ even}.
\end{cases}
\]

\subsection{\texorpdfstring{Graph-function realizations}{Graph-function realizations}}
\label{sec:graph-reinterpretation}


Let $V,W$ be finite-dimensional binary vector spaces, let $D\subseteq V$ and  $F:D\to W$ be a function. Equip $U=V\oplus W$ with a fixed nondegenerate bilinear pairing. Write
\[
U=V\oplus W,\qquad g_x=(x,F(x)),
\]
\[
G=(g_x)_{x\in D},\qquad E=\langle g_x:x\in D\rangle\le U.
\]

Over $\mathbb F_2$, each point represented by a nonzero column $g_x$ has that column as its unique nonzero vector representative.

For a finite point set $P\subseteq\PG(E)$, identify each projective point over $\F_2$ with its unique nonzero vector representative and define the evaluation map
\[
{\phi:U\longrightarrow\F_2^P,\qquad u\longmapsto(\langle u,p\rangle)_{p\in P},}
\]
and write $\C_P:=\operatorname{Im}\phi$. Assuming that $P$ spans $E$,
the following three statements are equivalent, see
\cite{Bonini-Borello,Alfarano-Borello,Tang-Qiu}:
\begin{itemize}
  \item $\C_P$ is minimal;
  \item $\langle P\cap H\rangle=H$ for every hyperplane $H\leq E$;
  \item $P\cap H\subseteq P\cap H'$ implies $H=H'$ for any two hyperplanes $H,H'\le E$.
\end{itemize}
Independently of minimality, the spanning assumption gives the identity
\(\ker\phi=E^\perp\).

\begin{theorem}\label{thm:CBSC-th-2}
Assume $P$ spans $E$. Let $\pi:E\to V$ be linear, put $V_0=\pi(E)$ and $K=\ker\pi$, and choose a linear section $\sigma:V_0\to E$ with $\pi\sigma=\operatorname{id}_{V_0}$. Define
\[
T:E\longrightarrow V_0\oplus K,\qquad
T(z)=\bigl(\pi(z),z-\sigma(\pi(z))\bigr).
\]
Then $T$ is an isomorphism, and the following statements are equivalent:
\begin{enumerate}
\item $\pi|_P$ is injective.
\item Set $D:=\pi(P)\subseteq V_0$, there is a unique function $F:D\to K$ such that
\[
T(P)=\{(x,F(x)):x\in D\}.
\]
\item $\ker\pi\cap\{p+q:p,q\in P,\ p\ne q\}=\varnothing$.
\end{enumerate}
When these conditions hold, $\#P=\#D\le2^{\rank\pi}$ and $P\cap\ker\pi$ contains at most one point. If $E\le V\oplus W'$ and $\pi$ is the restriction to $E$ of the projection onto $V$, then
\[
\pi(E)=\langle D\rangle,\yrev{$\qquad$}
\dim E=\dim\langle D\rangle+\dim\bigl(E\cap(0\oplus W')\bigr).
\]
\end{theorem}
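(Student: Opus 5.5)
I will first check that $T$ is an isomorphism. Then I will prove (1)$\Rightarrow$(2)$\Rightarrow$(3)$\Rightarrow$(1). The counting consequences and the coordinate-projection formulas come at the end.

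\emph{$T$ is an isomorphism.} $T$ is linear, and since $\pi(z-\sigma\pi z)=\pi z-\pi z=0$, it indeed takes values in $V_0\oplus K$. Its inverse is $(x,w)\mapsto\sigma(x)+w$. Indeed, $T(\sigma(x)+w)=(x,\sigma(x)+w-\sigma(x))=(x,w)$, using $\pi\sigma=\operatorname{id}$ and $\pi(w)=0$. Conversely, $\sigma\pi z+(z-\sigma\pi z)=z$.

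\emph{The equivalences.} For (1)$\Rightarrow$(2), injectivity of $\pi|_P$ means each $x\in D=\pi(P)$ has a unique preimage $p_x\in P$. Set $F(x)=p_x-\sigma(x)\in K$. Then $T(p_x)=(x,F(x))$, so $T(P)$ is the graph of $F$. Uniqueness holds because a set that is a graph determines its function: $F(x)$ is forced to be the second coordinate of the unique element of $T(P)$ with first coordinate $x$. For (2)$\Rightarrow$(1), if $\pi(p)=\pi(q)=x$, then $T(p)$ and $T(q)$ both equal $(x,F(x))$, and injectivity of $T$ gives $p=q$. For (1)$\Leftrightarrow$(3), over $\F_2$ we have $\pi(p)=\pi(q)$ if and only if $p+q\in\ker\pi$. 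Hence two distinct points with the same image exist exactly when $\ker\pi$ meets the set of sums $p+q$ with $p\neq q$.

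\emph{Consequences.} Under the conditions, $\#P=\#D$ by injectivity, and $D\subseteq V_0$ gives $\#D\le\#V_0=2^{\rank\pi}$. If $p,q\in P\cap\ker\pi$, then $\pi(p)=\pi(q)=0$, so $p=q$; hence $P\cap\ker\pi$ contains at most one point. In the coordinate-projection situation, $\pi(E)=\pi(\langle P\rangle)=\langle\pi(P)\rangle=\langle D\rangle$. Moreover, $\ker\pi=E\cap(0\oplus W')$, so rank--nullity gives the dimension formula.

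No step is a real obstacle; the only care needed is checking that $T$ lands in $V_0\oplus K$ and that the graph determines $F$ uniquely. The identity $\pi(\langle P\rangle)=\langle\pi(P)\rangle$ uses linearity together with the hypothesis that $P$ spans $E$.
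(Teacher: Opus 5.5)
Your proof is correct and follows the paper's argument essentially step for step: the explicit inverse $(x,w)\mapsto\sigma(x)+w$, the $\F_2$ identity $\pi(p)=\pi(q)\iff p+q\in\ker\pi$ for (1)$\Leftrightarrow$(3), the definition $F(x)=p_x-\sigma(x)$ for (1)$\Rightarrow$(2), and rank--nullity with $\ker\pi=E\cap(0\oplus W')$ for the final formula. You are slightly more explicit than the paper in noting that (2)$\Rightarrow$(1) uses the injectivity of $T$ and in justifying the uniqueness of $F$. One small mismatch: you announce the cycle (1)$\Rightarrow$(2)$\Rightarrow$(3)$\Rightarrow$(1) but actually prove (1)$\Leftrightarrow$(2) and (1)$\Leftrightarrow$(3), which is equally complete.
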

\begin{proof}
The map $T$ is linear and its second component belongs to $K$. Its inverse is
\[
T^{-1}(x,w)=\sigma(x)+w,
\]
so it is an isomorphism. Conditions (1) and (3) are equivalent because, over $\F_2$,
\[
\pi(p)=\pi(q)\quad\Longleftrightarrow\quad p+q\in\ker\pi.
\]
Assume (1). For each $x\in D=\pi(P)$, let $p_x$ be the unique point of $P$ satisfying $\pi(p_x)=x$ and set
\[
F(x):=p_x-\sigma(x)\in K.
\]
Then $T(p_x)=(x,F(x))$, proving (2). Conversely, a graph contains at most one point with any prescribed first coordinate, so (2) implies (1). The same observation proves $\#P=\#D$ and that $P\cap\ker\pi$ has at most one point, the cardinality bound follows from $D\subseteq V_0$.

Finally, if $\pi$ is a coordinate projection and $P$ spans $E$, then
\[
\pi(E)=\langle\pi(P)\rangle=\langle D\rangle.
\]
The dimension follows from rank--nullity, since the kernel of $\pi|_E$ is $E\cap(0\oplus W')$.
\end{proof}

For an $h$-dimensional binary space $E_0$ with basis $\{e_t:t\in I\}$, set
\[
P_0=\{e_t:t\in I\}\cup\{e_t+e_u:t,u\in I,\ t\ne u\}.
\]
This is the $q=2$ tetrahedron construction of \cite[Theorems~5.3 and~5.4]{Alfarano-Borello}, equivalently the defining-set code $\mathcal C_{D_{1,2}}$ and the cut-space code of $K_{h+1}$. Its minimality, parameters and weight distribution are already given there, so we do not restate or reprove them.

We next realize this previously known family in the graph-function framework.


\begin{lemma}
Let $h=2^r\ge8$, $\mathbb{L}=\mathbb{F}_{h}$, $V=\mathbb{F}_{2}\oplus\mathbb{L}\oplus\mathbb{L}$, and let
\begin{align}\label{eq:CBSC-8}
D&=\{(1,t,t^{3}):t\in \mathbb{L}\}
\cup\{(0,t+u,t^{3}+u^{3}):t,u\in \mathbb{L},\ t\neq u\}.
\end{align}Let $E_0=\F_2^{h}$ have standard basis $\{e_{t}\}_{t\in \mathbb L}$. Define
\begin{equation}\label{eq:pi}\pi:E_{0}\longrightarrow V,\qquad \pi(e_{t})=(1,t,t^{3}).\end{equation}
Then
\begin{enumerate}[1]
\item $\pi$ is surjective.
\item There exists a linear section $\sigma:V\to E_{0}$ such that $\pi\sigma=\mathrm{id}_{V}$.
\item $\dim \ker\pi=h-2r-1$.
\end{enumerate}
\end{lemma}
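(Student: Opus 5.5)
The plan is to prove (1) first; (2) and (3) then follow from linear algebra. Here $\pi$ is the $\F_2$-linear map defined on the basis $\{e_t\}$ and extended linearly. Since $\dim_{\F_2}V=1+r+r=2r+1$, surjectivity means that the $h$ vectors $\pi(e_t)=(1,t,t^3)$, $t\in\mathbb L$, span $V$ over $\F_2$. By duality it suffices to show that their span has trivial annihilator in $V$. Use the nondegenerate pairing $\langle(\epsilon,x,y),(\epsilon',x',y')\rangle=\epsilon\epsilon'+\operatorname{Tr}(xx')+\operatorname{Tr}(yy')$, where $\operatorname{Tr}:\mathbb L\to\F_2$ is the absolute trace; this form is nondegenerate because the trace form is. A vector $(\epsilon,\alpha,\beta)$ annihilates every $\pi(e_t)$ exactly when
\[
\epsilon+\operatorname{Tr}(\alpha t+\beta t^3)=0\qquad\text{for all }t\in\mathbb L .
\]

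The key step is to show that this forces $\epsilon=0$ and $\alpha=\beta=0$. If the identity holds, the Boolean function $f(t)=\operatorname{Tr}(\alpha t+\beta t^3)$ is constant on $\mathbb L$, equal to $\epsilon$. We rule this out in two cases.

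\emph{Case $\beta=0$.} Then $f(t)=\operatorname{Tr}(\alpha t)$ is linear, and it is constant only if $\alpha=0$. Evaluating at $t=0$ then gives $\epsilon=0$.

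\emph{Case $\beta\ne0$.} Here use the Gold-type exponent $3=2^1+1$. The quadratic form $\operatorname{Tr}(\beta t^3)$ has associated bilinear form $B(t,s)=\operatorname{Tr}(\beta(t^2s+ts^2))=\operatorname{Tr}\bigl(s^2(\beta t+\beta^{2}t^{4})^{1/2\cdot}\cdots\bigr)$. Rather than tracking these exponents, it is cleaner to argue that the radical $\{t:\beta t^2+(\beta t)^{2^{r-1}}\cdots=0\}$ is defined by a linearized polynomial of degree $4$ in $t$. Concretely, the condition is $\beta^2t^4+\beta t=0$ after squaring, so the radical has size at most $4<h$. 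A quadratic form whose radical has size less than $2^r$ is not affine, so $f$, being quadratic plus linear, is not constant. Alternatively one can simply note that $f$ has algebraic degree $2$ when $\beta\neq0$.

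The main obstacle I expect is making this degree argument airtight, in particular confirming that $t^3$ is not absorbed by the linear terms under the trace. For this I would argue with the trace expansion: $\operatorname{Tr}(\beta t^3)+\operatorname{Tr}(\alpha t)$ is a polynomial of degree less than $h$ in $t$, with the monomial $t^3$ appearing with coefficient $\beta\ne0$. The exponents $3\cdot2^i$ are distinct from the exponents $2^i$ modulo $h-1$, and they are not all equal to $0$, since $h\ge8$ gives $h-1\ge7>3$. By uniqueness of reduced polynomial representation of functions on $\mathbb L$, the function $f$ is then nonconstant. The same uniqueness argument also handles the case $\beta=0$, which unifies the two cases.

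Given (1), part (2) is routine: choose preimages in $E_0$ of a basis of $V$ and extend linearly, which yields $\sigma$ with $\pi\sigma=\mathrm{id}_V$. Part (3) is rank--nullity: $\dim\ker\pi=\dim E_0-\dim V=h-(2r+1)$.
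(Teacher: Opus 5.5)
Your proof is correct and is essentially the paper's argument. You pair with the trace form, reduce to $\epsilon+\operatorname{Tr}(\alpha t+\beta t^3)\equiv0$, polarize, and show that the radical $\{t:\beta t^2+(\beta t)^{2^{r-1}}=0\}$ cannot be all of $\mathbb{L}$; squaring to $\beta^2t^4+\beta t=0$ (at most $4<h$ roots) is a slightly tidier finish than the paper's substitution $x=1$ plus a degree-$2^{r-1}$ count, and parts (2) and (3) are the same linear algebra. Delete the garbled exponent fragments, and keep the radical argument rather than your fallback ``algebraic degree $2$'' argument: the fallback also needs $3\cdot2^i\not\equiv3\pmod{h-1}$ for $0<i<r$, which is exactly where $h\ge8$ enters, since $\operatorname{Tr}(t^3)\equiv0$ on $\mathbb{F}_4$.
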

\begin{proof}
  Suppose a linear functional on $V$ vanishes on all $(1,t,t^{3})$. Then there exist $a\in\mathbb{F}_{2}$ and $\beta,\gamma\in\mathbb{L}$ such that
\begin{equation}\label{eq:CBSC-6}
a+\operatorname{Tr}_{\mathbb{L}/\mathbb{F}_{2}}(\beta t+\gamma t^{3})=0\qquad\text{for all }t\in\mathbb{L}. \end{equation}

Taking $t=0$ gives $a=0$. Polarizing ~\eqref{eq:CBSC-6} yields

\[{\operatorname{Tr}_{\mathbb{L}/\mathbb{F}_{2}}\bigl(\gamma(x^{2}y+xy^{2})\bigr)=0\qquad\text{for all }x,y\in\mathbb{L}.}\]

Using the trace identity $\operatorname{Tr}_{\mathbb{L}/\mathbb{F}_{2}}(Ay^{2})=\operatorname{Tr}_{\mathbb{L}/\mathbb{F}_{2}}(A^{2^{r-1}}y)$ and the non-degeneracy of the trace pairing, we obtain
\begin{equation}\label{eq:CBSC-7}\gamma x^{2}+(\gamma x)^{2^{r-1}}=0\qquad\text{for all }x\in\mathbb{L}.\end{equation}

If $\gamma\neq0$, putting $x=1$ in ~\eqref{eq:CBSC-7} gives $\gamma^{2^{r-1}}=\gamma$. Substituting this back into ~\eqref{eq:CBSC-7} yields $\gamma(x^{2}+x^{2^{r-1}})=0$ for every $x\in\mathbb L$. But the polynomial $X^{2}+X^{2^{r-1}}$ is nonzero of degree $2^{r-1}<2^{r}$, so it cannot vanish on all of $\mathbb L$. This contradiction proves that $\gamma=0$, non-degeneracy then gives $\beta=0$. Hence, the image of ~\eqref{eq:pi} has dimension $2r+1$, so $\pi$ is surjective.

 Choose a subset $B\subseteq\mathbb L$ such that $\{(1,t,t^3):t\in B\}$ is a basis of $V$. Define $\sigma$ on this basis by $\sigma(1,t,t^3)=e_t$ for $t\in B$, and extend it linearly. Then $\pi\sigma=\mathrm{id}_V$. Therefore
\[\dim \ker\pi=\dim E_{0}-\dim V=h-(2r+1)=h-2r-1.\qedhere\]
\end{proof}


The restriction $\pi|_{P_0}$ is injective. Indeed, singleton images have first coordinate $1$, whereas pair-sum images have first coordinate $0$. If two pair sums have the same image, then
\[
t+u=v+w=s\ne0,\qquad t^3+u^3=v^3+w^3.
\]
Since $t^3+u^3=s^3+tus$, the sum $s$ and product $tu$ are determined, so the two unordered pairs are the roots of the same polynomial $Z^2+sZ+tu$ and hence coincide. This is the Sidon property of the graph of the Gold APN function $t\mapsto t^3$ in the present coordinates, see \cite{Bracken-Byrne-Markin-McGuire}. Therefore the following definition is well defined.

\textcolor{black}{Define on $D=\pi(P_0)$}
\begin{equation}\label{eq:CBSC-9}F(\pi p)=p-\sigma(\pi(p))\in K\qquad(p\in P_{0}),\end{equation}
where $P_{0}$ contains the standard basis of $E_{0}$, explicitly,
\begingroup
\[\begin{aligned}
F(1,t,t^{3})&=e_{t}+\sigma(1,t,t^{3}),\\
F(0,t+u,t^{3}+u^{3})&=e_{t}+e_{u}+\sigma(0,t+u,t^{3}+u^{3}).
\end{aligned}\]
\endgroup


\begin{proposition}[Graph-function realization]\label{prop:graph-realization}
The linear isomorphism
\[
T:E_0\longrightarrow V\oplus K,\qquad
T(z)=(\pi(z),z-\sigma(\pi(z))),
\]
maps $P_0$ bijectively onto $\{(x,F(x)):x\in D\}$. Consequently, after bases and a coordinate ordering are fixed, the two generator matrices differ by invertible left multiplication and a coordinate permutation. Thus the graph-function construction is equivalent to the known binary tetrahedron code $\mathcal C_{D_{1,2}}$ and does not define a new code-equivalence class.
\end{proposition}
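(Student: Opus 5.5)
The plan is to obtain the statement from Theorem~\ref{thm:CBSC-th-2}, applied with $E=E_0$, $P=P_0$ and the map $\pi$ of \eqref{eq:pi}. We then finish with a standard matrix argument for code equivalence.

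First, we check the hypotheses. The set $P_0$ contains the standard basis $\{e_t\}$, so it spans $E_0$. By the preceding lemma, $\pi$ is surjective, so $V_0=\pi(E_0)=V$, and $\sigma$ is a linear section. Hence Theorem~\ref{thm:CBSC-th-2} shows that $T(z)=(\pi(z),z-\sigma(\pi(z)))$ is an isomorphism $E_0\to V\oplus K$ with $K=\ker\pi$.

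Next, we establish the bijection. Injectivity of $\pi|_{P_0}$ was shown just before the statement, using the first-coordinate split and the Sidon property of $t\mapsto t^3$. This is condition (1) of Theorem~\ref{thm:CBSC-th-2}, so condition (2) holds: $T(P_0)=\{(x,F(x)):x\in D\}$ with $D=\pi(P_0)$. I would also check directly that this $F$ is the one in \eqref{eq:CBSC-9}: $T(p)=(\pi p,\,p-\sigma\pi p)=(\pi p,F(\pi p))$. So $T|_{P_0}$ maps onto the graph, and it is injective because $T$ is.

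Then comes the equivalence step. Fix the ordering $p_1,\dots,p_n$ of $P_0$, and let $G_0$ be the $h\times n$ matrix with columns $p_i$ in the standard basis. Fix a basis of $V\oplus K$, and let $M$ be the invertible $h\times h$ matrix of $T$. The graph-function generator matrix has columns $(x,F(x))$ in some order. Its columns are $T(p_i)$ up to a permutation, so it equals $MG_0Q$ for a permutation matrix $Q$.

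Row spaces are preserved under invertible left multiplication. A column permutation is a monomial equivalence. Hence the two codes are permutation-equivalent. Since the tetrahedron code from $P_0$ is $\mathcal C_{D_{1,2}}$, as recalled from \cite{Alfarano-Borello}, no new class arises.

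The only delicate point is bookkeeping. We must ensure that $F$ is well defined on $D$, which follows from injectivity, and that the codes are computed with the same pairing on each side. To handle the pairing, I would define both codes as row spaces of the column matrices, so that the choice of pairing only affects the left multiplication.
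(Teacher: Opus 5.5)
Your proposal is correct and follows the paper's proof: you invoke Theorem~\ref{thm:CBSC-th-2}, using the injectivity of $\pi|_{P_0}$, to get $T(P_0)=\{(x,F(x)):x\in D\}$ with $F$ as in \eqref{eq:CBSC-9}. You then note that applying the invertible map $T$ to the columns is an invertible row operation, which is your $MG_0Q$, so the resulting code is permutation-equivalent to $\mathcal C_{D_{1,2}}$. Your remarks on the pairing and on checking $F$ against \eqref{eq:CBSC-9} are just extra bookkeeping within the same argument.
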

\begin{proof}
The injectivity of $\pi|_{P_0}$ makes $F(\pi(p))=p-\sigma(\pi(p))$ well defined. Theorem~\ref{thm:CBSC-th-2} gives
\[
T(p)=(\pi(p),F(\pi(p)))\qquad(p\in P_0).
\]
Since $P_0$ contains the standard basis of $E_0$, it spans $E_0$. Hence $T(P_0)$ spans $V\oplus K$, and applying $T$ to all columns is precisely an invertible row transformation after bases are chosen. This proves the asserted code equivalence.
\end{proof}

\section{Conclusion}
\label{sec:conclusion}
In this paper, we studied minimality within established binary
	evaluation and incidence-code constructions. For the known complete
	single-layer codes, we recalled the dimension and weight formulas and
	obtained a complete minimality classification. We determined the exact
	extremal weights for \(\C_{k,4}\), showing a strict violation of the
	binary Ashikhmin--Barg sufficient condition for every \(k\ge16\), and
	refined the dual-distance criterion by adding the necessary condition
	\(k\ge3t/2\). The latter also corrects the conclusion for cases such
	as \((k,t)=(8,6)\), whose dual distance is four rather than three.

	For permutation-invariant rank-selected defining sets
	\(D_S=\bigcup_{t\in S}D_t\), we determined the evaluation kernel,
	dimension, and weight enumerator and proved an if-and-only-if
	minimality classification.  Initial layer unions are nonzero parts of
	skeleton complexes, consecutive layer blocks are differences of two
	skeleton complexes, and every arbitrary \(S\) is a disjoint union of
	such differences.  Thus the contribution is the uniform
	classification and parameter analysis over all \(S\), rather than a
	new simplicial-complex or defining-set construction.  The row-Gram
	structure yields uniform criteria for evenness, self-orthogonality,
	the doubly-even property, and hull dimension.  For odd \(k\ge7\), the family
	defined by \(D_3\cup D_{k-1}\) is genuinely nonconsecutive and hence
	cannot be represented by one difference of nested skeleton
	complexes.  Its minimum distance, weight enumerator, dual distance,
	number of weight-three dual codewords, and hull dimension are all
	explicit, and it strictly violates the binary Ashikhmin--Barg
	sufficient condition. 

	For graph cut codes, the classical characterization of bonds
	yields that every nonzero cut is minimal if and only if the graph has
	exactly one edge-containing connected component and that component is
	complete. We expressed this consequence in minimal-code language and
	recorded the associated complete-graph weight data. The cut-space
	dimension, cycle-space dual and complete-graph hull dimensions are
	classical and were included for comparison. The Ashikhmin--Barg
	sufficient condition is violated exactly when \(r\ge7\), with equality
	at \(r=7\) and strict violation for \(r\ge8\).
	
In addition, Section~\ref{sec:graph-cut-space} also places these constructions in the standard
	cutting-blocking-set language and gives an explicit graph-function
	realization of the binary tetrahedron code. The realization is obtained
	by an invertible linear transformation, so it represents the known code
	rather than a new inequivalent code family.
	
	Several directions remain for further study. For a general layer set
	\(S\), determining closed formulas for the minimum and maximum
	nonzero weights of \(\C_{k,S}\) would yield sharper
	Ashikhmin--Barg thresholds. It is also natural to investigate
	\(q\)-ary analogues, for which the binary parity and complement
	arguments used here do not apply directly. Subsets of higher
	Hamming-weight layers may lead to incidence codes associated with
	uniform hypergraphs, where one may seek graph-type classifications of
	minimality, dual distance, and orthogonality. Finally, one may ask
	which cutting blocking sets admit restricted-domain function-graph
	representations that produce genuinely inequivalent code families
	while retaining explicitly computable parameters.

\section*{Acknowledgements}

X. Xie is partially supported by NSFC (Grant No. 62402004) and State Key Laboratory of Cyberspace Security Defense (Grant No. 2025-MS-04).
T. Wu is partially supported by the Natural Science Foundation of Anhui Province (Grant No. 2508085QA033).

\end{document}